\begin{document}
	\title{Weakly Secure Symmetric Multilevel Diversity Coding}   %The Explicit Coding Rate Region of \\ Symmetric Multilevel Diversity Coding
	\author{Tao Guo, {\it Member, IEEE}, Chao Tian, {\it Senior Member, IEEE}, Tie Liu, {\it Senior Member, IEEE}, \\ and Raymond W. Yeung, {\it Fellow, IEEE}
		\thanks{The work of C. Tian was supported in part by the National Science Foundation under Grant CCF-18-32309 and CCF-18-16546. 
			This paper was presented in part at 2019 IEEE Information Theory Workshop (ITW).
		
		T. Guo was with the Department of Electrical and Computer Engineering, Texas A\&M University, College Station, TX, USA. He is now with the Department of Electrical and Computer Engineering, the University of California, Los Angeles, CA, USA. (e-mail: guotao@ucla.edu)
		
		C. Tian and T. Liu are with the Department of Electrical and Computer Engineering, Texas A\&M University, College Station, TX, USA. (e-mail: chao.tian@tamu.edu, tieliu@tamu.edu)
		
		R. W. Yeung is with the Institute of Network Coding and the Department of Information Engineering, The Chinese University of Hong Kong, Hong Kong (e-mail: whyeung@ie.cuhk.edu.hk)
		
%		Their work was funded in part by the University Grant Committee of the Hong Kong SAR, China (Project No. AoE/E-02/08), the National Natural Science Foundation of China (Grant No. 61471215), and the Science, Technology and Innovation Commission of Shenzhen Municipality (Project No. JSGG20160301170514984).
		}}
	\maketitle
	%%---------------------------------------newcommands-----------------------------------------------------------
	\newcommand{\reffig}[1]{Fig. \ref{#1}}
	\newcommand{\cA}{\mathcal{A}}
	\newcommand{\cB}{\mathcal{B}}
	\newcommand{\cD}{\mathcal{D}}
	\newcommand{\cE}{\mathcal{E}}
	\newcommand{\cI}{\mathcal{I}}
	\newcommand{\cK}{\mathcal{K}}
	\newcommand{\cL}{\mathcal{L}}
	\newcommand{\cN}{\mathcal{N}}
	\newcommand{\cO}{\mathcal{O}}
	\newcommand{\cQ}{\mathcal{Q}}
	\newcommand{\cR}{\mathcal{R}}
	\newcommand{\cT}{\mathcal{T}}
	\newcommand{\cU}{\mathcal{U}}
	\newcommand{\cV}{\mathcal{V}}
	\newcommand{\cX}{\mathcal{X}}
	\newcommand{\fD}{\mathfrak{D}}
	\newcommand{\bl}{\bm{\lambda}}
	\newcommand{\bB}{\mathbb{B}}
	\newcommand{\bF}{\mathbb{F}}
	\newcommand{\sm}{\mathsf{m}}
	\newcommand{\sR}{\mathsf{R}}
	
	\captionsetup[subtable]{labelformat=simple, labelsep=space}
	\renewcommand\thesubtable{(\alph{subtable})}
	
	\theoremstyle{plain}% default
	\newtheorem{theorem}{Theorem}
	\newtheorem{lemma}{Lemma}
	\newtheorem{corollary}{Corollary}[theorem]
	\newtheorem{proposition}{Proposition}
	\newtheorem{conjecture}{Conjecture}
	\newtheorem{claim}{Claim}

\newtheorem*{namedthm*}{\thistheoremname}
\newcommand{\thistheoremname}{} % initialization
\newenvironment{namedthm}[1]
{\renewcommand{\thistheoremname}{#1}\begin{namedthm*}}
	{\end{namedthm*}}
	
	\theoremstyle{remark}
	\newtheorem{remark}{Remark}
	
	\theoremstyle{definition}
	\newtheorem{definition}{Definition}
	\newtheorem{example}{Example}

%%%%=====================================================
%%\let\thefootnote\relax\footnotetext{Tao Guo and Raymond~W.~Yeung are with the Institute of Network Coding and the Department of Information Engineering, The Chinese University of Hong Kong, N.T., Hong Kong. R.~W.~Yeung is also with the Key Laboratory of Network Coding Key Technology and Application and Shenzhen Research Institute, The Chinese University of Hong Kong, Shenzhen, China. Their work was funded in part by a grant from the University Grant Committee of the Hong Kong Special Administrative Region (Project No.\ AoE/E-02/08). (\mbox{e-mail:} \{\href{mailto:guotao@ie.cuhk.edu.hk}{guotao}, \href{mailto:whyeung@ie.cuhk.edu.hk}{whyeung}\}@ie.cuhk.edu.hk).}

%%----------------------------------------- abstract -------------------------------
% As a general rule, do not put math, special symbols or citations in the abstract
%%=======================================================
\begin{abstract}
	Multilevel diversity coding is a classical coding model where multiple mutually independent information messages are encoded, such that different reliability requirements can be afforded to different messages. It is well known that {\em superposition coding}, namely separately encoding the independent messages, is optimal for symmetric multilevel diversity coding (SMDC) (Yeung-Zhang 1999). In the current paper, we consider weakly secure SMDC where security constraints are injected on each individual message, and provide a complete characterization of the conditions under which superposition coding is sum-rate optimal. Two joint coding strategies, which lead to rate savings compared to superposition coding, are proposed, where some coding components for one message can be used as the  encryption key for another. By applying different variants of Han's inequality, we show that the lack of  opportunity to apply these two coding strategies directly implies the optimality of superposition coding. It is further shown that under a set of particular security constraints, one of the proposed joint coding strategies can be used to construct a code that achieves the optimal rate region.
\end{abstract}
%% Note that keywords are not normally used for peerreview papers. always use IEEEkeywords
%\begin{IEEEkeywords}
%	Symmetrical multilevel diversity coding, weakly secure, superposition coding, optimality condition, minimum sum rate, Han's inequality.
%\end{IEEEkeywords}
%%========================================================
\section{Introduction}\label{section-intro}
Symmetric multilevel diversity coding (SMDC) was introduced by Roche \textit{et al.} \cite{yeung97} 
for applications in distributed data storage and robust network communication. 
Albanese {\it et al.} \cite{PEC-96} independently studied the problem of {\it priority encoding transmission} (PET), which shares the same mathematical model as SMDC. 
In a symmetric $L$-level diversity coding system, there are $L$ independent messages $(M_1,M_2,\ldots,M_L)$, 
where the importance of messages decreases with the subscript~$l$. 
The messages are encoded by $L$ encoders. 
There are totally $2^L-1$ decoders, each of which has access to the outputs of a distinct subset of the encoders. 
A decoder which can access any $\alpha$ encoders, called a Level-$\alpha$ decoder, is required to reconstruct the first $\alpha$ most important messages. 
The system is symmetric in the sense that the reconstruction requirement of a decoder depends on the set of encoders it can access only via its cardinality. 

It was shown \cite{yeung97,yeung99} that separately encoding these independent messages, 
referred to as {\it superposition coding}, is optimal in terms of achieving the entire rate region. 
The characterization of the coding rate region therein involves implicit and uncountably many inequalities, 
and an explicit characterization of the coding rate region was recently obtained \cite{guo-yeung-SMDC-IT20}. 
The problem has also been extended and generalized, e.g., to allow node regeneration \cite{tian2016multilevel} 
and to allow asymmetric decoders \cite{mohajer-tian-diggavi10}. 
Li {\it et~al.}~\cite{Congduan-16} studied the multilevel diversity coding problem with at most 3 sources and 4 encoders 
in a systematic way and obtained the exact rate region of each of the over 7,000 instances with the aid of computation. 

The SMDC problem with a {\em strong} security guarantee was considered by Balasubramanian et al. \cite{liutie13-sSMDC} and Jiang et al. \cite{liutie14-s-all}. In this setting, a security threshold $N$ is given, and the first $N$ messages are degenerate. For the remaining $L-N$ messages $M_\alpha$, $\alpha = N+1,N+2,\cdots, L$, in addition to the standard multilevel reconstruction requirement, it is also required that all these messages need to be kept perfectly {\em jointly} secure if no more than $N$ encoders are accessible by an eavesdropper. Despite the additional security constraints, it was shown that superposition coding remains to be optimal in terms of both the sum rate \cite{liutie13-sSMDC} and the entire rate region \cite{liutie14-s-all}.

In this paper we consider a {\em weakly} secure setting of the classical SMDC problem, where the security level of each message is specified by a separate security parameter $N_\alpha$. More specifically, for any $\alpha=1,2,\ldots,L$, we require the message $ M_\alpha$ to be kept perfectly secure if the outputs of no more than $N_\alpha$ encoders are accessible by an eavesdropper. Such a security requirement is ``weak" in the sense that the eavesdropper is only prevented from obtaining any information about the {\em individual} messages. By comparison, the security requirement of \cite{secure-NC,liutie13-sSMDC,liutie14-s-all} is strong in that it prevents the eavesdropper to obtain any information about the {\em entire} set of messages. The notion of weak security has been considered in various network coding settings \cite{guo-tly-sSMDC-itw2019,Weakly-NC-Alex11,Weakly-NC-Alex13,Weakly-NC-Alex14} and also channel coding perspectives \cite{secure-channel1,secure-channel2,secure-channel3,secure-channel4} in the literature and is generally considered to be more practical for protecting individual messages. 
%%In practice, the weak security requirement may be more suitable than the strong version in some applications. 
For example, when the messages are video sequences, the user should not obtain information about any individual video segment, but obtaining the binary XOR of two video sequences may not be an issue since it will not lead to a meaningfully decodable video sequence. Moreover, a protocol with a weak security constraint can potentially be implemented more efficiently in practical settings and may not require encryption keys.
Note that the notion of ``weak/strong security" here is different from the asymptotic notion of weak/strong security in \cite{asymptotic-weak-1,asymptotic-weak-2,asymptotic-weak-3,asymptotic-weak-4}, wherein asymptotic weak security requires vanishing of the information leakage rate and the corresponding strong security requires the vanishing of leaked information content. 
Another notion of  ``weak security" is defined in \cite{Weakly-NC-Narayanan05} which requires the eavesdropper to be unable to obtain any meaningful information about the source. 

On the one hand, the notion of weak security has significantly enriched the collection of secure SMDC problems: Unlike the strongly secure setting where a single security parameter is set for all the messages, for the weakly secure setting, a different security parameter can be set for each message. On the other hand, the notion of weak security has also cast the optimality of superposition coding in much greater doubt, as requiring the messages to be protected only {\em marginally} (instead of jointly) significantly opens up the set of feasible coding strategies. The main goals of this paper are: 1) to understand under what configurations of the security parameters $(N_1,N_2,\ldots,N_L)$ superposition coding remains to be optimal; and 2) to identify optimal coding strategies when superposition coding is suboptimal.

The main message of this paper is that the optimality of superposition coding depends {\em critically} on the security parameters $(N_1,N_2,\ldots,N_L)$. 
More specifically, we consider a natural joint coding strategy that encodes a pair of messages together by using one of the messages as part of the secret key for securing the other. 
We term this coding strategy {\em pairwise encoding}, and \Cref{section-scheme-alpha-key-beta,section-scheme-beta-key-alpha} discuss two scenarios for which pairwise encoding is possible. 
The main results of the paper are:
\begin{itemize}
	\item[1)] We show that superposition coding can achieve the minimum sum rate whenever pairwise encoding is not possible between any two messages. 
	This immediately leads to a necessary and sufficient condition on the security parameters $(N_1,N_2,\cdots,N_L)$ for superposition coding to be optimal in terms of minimizing the sum rate.
	\item[2)] We consider a special class, referred to as {\it differential-constant secure SMDC} (DS-SMDC), for which the more important messages are maximally protected ($N_\alpha=\alpha-1$) and the less important messages are not protected at all ($N_\alpha=0$), and show that a simple extension of the pairwise encoding strategy (from a pair of messages to a pair of groups of messages and hence termed as {\em group pairwise encoding}) can achieve the entire rate region.
\end{itemize}

Note that the min-cut capacity for multicasting a single source is achievable using linear network codes \cite{Network-Coding-IT2000,Linear-NC,secure-NC-IT11}. It was shown in~\cite{Weakly-NC-Narayanan05} that the min-cut bound can also be achieved for a single-source secure network coding model, where the security measure is similar to the weak security notion we used in this work. 
However, the min-cut bound may not be achievable for general multi-source network coding problems (even without any security measure), e.g., the example illustrated by Fig. 21.3 in~\cite{raymondbook}. In particular, the min-cut bound is not achievable for the secure SMDC problem here. %, which considers information-theoretic security. 

The rest of the paper is organized as follows. We first formulate the problem and state some preliminary results in \Cref{section-formulation}. 
In \Cref{section-main}, we state the main results, i) a precise classification of the cases where superposition is sum-rate optimal;
ii)  the optimal rate region for DS-SMDC. 
In \Cref{section-joint-strategy,section-conditions-converse}, we describe the pairwise encoding strategies that reduce coding rates 
and prove the optimality of superposition under the conditions in i). 
\Cref{section-DS-proof} is devoted to the proof of the optimal rate region for DS-SMDC. 
We conclude the paper in \Cref{section-conclusion}. Some technical proofs can be found in the appendices.

%%======================================================
\section{Problem Formulation and Preliminaries}\label{section-formulation}
\subsection{Problem Formulation}
Let $\cL\triangleq \{1,2,\cdots,L\}$, where $L\geq2$. 
Let $M_1,M_2,\cdots,M_L$ be a collection of $L$ mutually independent messages uniformly distributed over the direct product of certain finite sets.  For simplicity, we assume the message set to be $\bF_{p^{m_1}}\times\bF_{p^{m_2}}\times\cdots\times\bF_{p^{m_L}}$, where $\bF_{p^{m_1}}$ is a finite field of order $p^{m_1}$ and $p$ itself can be an integer power of some prime number. 
We may also regard $M_\alpha~(\alpha\in\cL)$ as $M_\alpha=(M_\alpha^1,M_\alpha^2,\cdots,M_\alpha^{m_\alpha})$ where $M_\alpha^i\in\bF_p$ for $i=1,2,\cdots,m_\alpha$. 

The weakly secure SMDC problem is depicted in \reffig{fig_WS-SMDC}. 
%%%....................................figure WS-SMDC......................................................
\begin{figure}[t]
	\centering
	\begin{tikzpicture}[thick,scale=0.85, every node/.style={transform shape},font=\footnotesize]
	\draw [->,>=stealth](-1.0,0)--(-0.3,0);
	\node (message) at (-1.2,0.6) {$M_1$};
	\node (message) at (-1.2,0.3) {$\vdots$};
	\node (message) at (-1.2,-0.1) {$M_L$};
	\node (message) at (-1.2,-0.5) {$K$};
	\draw [thick] (-0.3,-1.0) rectangle (1.0,1.0) node(encoder)[midway] {Encoder};	
	
	\draw[thick] (3.5,-1.0) rectangle (4.8,1.0) node (decoder)[midway]{Decoder};
	\draw [->,>=stealth] (4.8,0)--(5.3,0);
	\draw (5.3,0) node [right]{$M_1,M_2,\cdots,M_{\alpha}$};
	
	\draw (1.0,0.6)--(2.6,0.6)--(3.0,0.8); \draw (3.0,0.6)--(3.5,0.6); \node at (1.25,0.75) {$W_1$};
	\draw (1.0,0.2)--(2.6,0.2)--(3.0,0.4); \draw (3.0,0.2)--(3.5,0.2); \node at (1.25,0.35) {$W_2$};
	\draw (1.25,0) node {$\vdots$};
	\draw  (1.0,-0.7)--(2.6,-0.7)--(3.0,-0.5); \draw (3.0,-0.7)--(3.5,-0.7); \node at (1.25,-0.52) {$W_L$};
	
	\draw[thick] (3.5,-1.3) rectangle (4.8,-3.3); \node at (4.1,-2.1){Eaves-};\node at (4.1,-2.4) {dropper};
	\draw [->,>=stealth] (4.8,-2.3)--(5.3,-2.3);
	\draw (5.3,-2.3) node [right]{no information of $M_{\alpha}$};
	
	\draw (2.2,0.6)--(2.2,-1.7)--(2.6,-1.7)--(3.0,-1.5); \draw (3.0,-1.7)--(3.5,-1.7);
	\draw (1.8,0.2)--(1.8,-2.1)--(2.6,-2.1)--(3.0,-1.9); \draw (3.0,-2.1)--(3.5,-2.1);
	\draw (2.5,-2.5) node {$\vdots$};
	\draw  (1.3,-0.7)--(1.3,-3.0)--(2.6,-3.0)--(3.0,-2.8); \draw (3.0,-3.0)--(3.5,-3.0);
	\fill [black] (2.2,0.6) circle (1.3pt); \fill [black] (1.8,0.2) circle (1.3pt); \fill [black] (1.3,-0.7) circle (1.3pt);
	
	\draw[dashed] (2.8,-2.35) ellipse (0.5 and 1.1);
	\node at (2.8,-3.7) {$|\cA|\leq N_\alpha$};
	\draw[dashed] (2.8,0.1) ellipse (0.5 and 1.0);
	\node at (2.8,1.25) {$|\cU|=\alpha$};
	\end{tikzpicture}
	\caption{The Weakly Secure SMDC Model}
	\label{fig_WS-SMDC}
\end{figure}
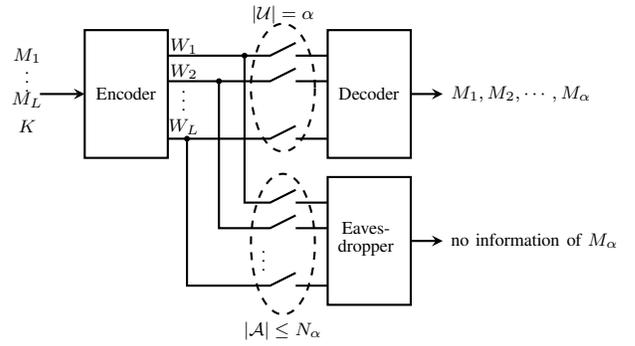
%%%.................................end of figure.....................................................................
There are $L$ encoders, indexed by $\cL$, each of which can access all the $L$ information messages. 
There are also $2^L-1$ decoders. For each $\cU\subseteq\cL$ such that $\cU\neq \emptyset$, 
Decoder-$\cU$ can access the outputs of the subset of encoders indexed by $\cU$. 
For $\alpha\in\cL$ and any $\cU$ such that $|\cU|=\alpha$, Decoder-$\cU$ can completely recover the first $\alpha$ messages $M_1,M_2,\cdots,M_{\alpha}$. 
In addition, there is an eavesdropper who has access to the outputs of a subsets $\cA$ of encoders. 
Let $\bm{N}=(N_1,N_2,\cdots,N_L)$ be $L$ non-negative integers, where $N_\alpha<\alpha$ for $\alpha\in \cL$. 
Weak security requires that each individual message $M_\alpha$ should be kept perfectly secure from the eavesdropper if $|\cA|\leq N_\alpha$.

Let $\cK$ be the key space.
An $(m_1,m_2,\cdots,m_L,R_1,R_2,\cdots,R_L)$ code is formally defined by the encoding functions
\begin{equation}
E_l:\prod_{i=1}^L\bF_{p^{m_i}}\times\cK\rightarrow\bF_{p^{R_l}}, \text{ for }l\in\cL \label{encoding fn}
\end{equation}
and decoding functions
\begin{equation}
D_{\cU}:\prod_{l\in\cU}\bF_{p^{R_l}}\rightarrow\prod_{i=1}^{|\cU|}\bF_{p^{m_i}},\text{ for }\cU\subseteq\cL\text{ and }\cU\neq\emptyset.  \label{decoding fn}
\end{equation}
Denote the shared key as $K$ (accessible to all the encoders), which is uniformly distributed in the key space $\cK$. 
Let $W_l=E_l(M_1,M_2,\cdots,M_L,K)$ be the output of $\text{Encoder-}l$ and $W_{\cU}=(W_l:l\in\cU)$ for $\cU\subseteq\cL$. Define the normalized message rates $\sm_l\triangleq m_l/\sum_{l=1}^L m_l$, from which it follows that $\sum_l \sm_l=1$.
A normalized non-negative rate tuple $\bm{\sR}\triangleq(\sR_1,\sR_2,\cdots,\sR_L)$ is \textit{achievable} for the normalized message rates $(\sm_1,\ldots,\sm_L)$, if for any $\epsilon>0$, there exist an integer $a$ and an $(a\sm_1,a\sm_2,\cdots,a\sm_L,R_1,R_2,\cdots,R_L)$ code such that
\begin{align}
&\textbf{perfect reconstruction:   }D_{\cU}(W_{\cU})=(M_1,M_2,\cdots,M_{|\cU|}),\nonumber\\
&\qquad\qquad\qquad\qquad\qquad\qquad ~\forall~\cU\subseteq\cL\text{ s.t. }\cU\neq\emptyset,   \label{recover-constraint}\\
&\textbf{perfect secure:   }H(M_\alpha|W_\cA)=H(M_\alpha),\nonumber\\
&\qquad\qquad\qquad~\forall~\alpha\in\cL\text{ and }\cA\subseteq\cL\text{ s.t. }|\cA|\leq N_\alpha,   \label{security-constraint}
\end{align}
and
\begin{align}
\textbf{coding rate:   }\sR_l+\epsilon\geq a^{-1}R_l, \qquad l\in \cL.
\end{align}
The optimal coding rate region $\cR$ is defined as the collection of all achievable rate tuples.

\begin{remark}
	Here each message $M_\alpha$ can be essentially represented in $m_\alpha\log_2 p$ bits, and each codeword $W_l$ can be represented in $R_l\log_2 p$ bits. Thus $R_l$ can be viewed as the coding rate of encoder $E_l$, when the definition of the entropy function uses logarithm of base $p$, which will be adopted from here on. The quantity $\sR_l$ is then essentially the normalized $R_l$.
\end{remark}

%The optimal coding rate region $\cR$ is defined as the collection of all achievable normalized rate tuples. 
%In this work we focus on 

The minimum achievable normalized sum rate is defined as $\sR_{\text{sum}}^*\triangleq \min\sum_{l=1}^L \sR_l$, and one of our main results is a necessary and sufficient condition for superposition coding to be sum-rate optimal. We also study an important case where $\bm{N}$ is given by
\begin{equation}
N_\alpha=
\begin{cases}
\alpha-1,&\text{ for }1\leq \alpha\leq r\\
0,&\text{ for }r+1\leq \alpha\leq L,
\end{cases}\label{def-DS-SMDC}
\end{equation}
for certain parameters $(L,r)$, where $r \ge 1$.
We refer to this system as the $(L,r)$ differential-constant secure SMDC (DS-SMDC), 
where the more important messages (i.e., small $\alpha$ values) are maximally secure ($N_\alpha=\alpha-1$) and the less important messages do not have any security guarantee at all ($N_\alpha=0$). 
For the protected messages ($1\leq \alpha\leq r$), while the security constraint $N_\alpha$ grows with the reconstruction requirement $\alpha$, 
the difference between $N_\alpha$ and $\alpha$ remains to be a constant equal to 1. 
We refer to this feature as ``differential-constant secure", in contrast to the ``level-constant secure" guarantee \cite{liutie13-sSMDC,liutie14-s-all} which requires $N_{\alpha}=N$ for all $\alpha>N$.  
Denote the optimal coding rate region of the $(L,r)$ DS-SMDC problem by $\cR_{L,r}$, which is the collection of all achievable normalized rate tuples. 
For $r=1$, the problem reduces to the classical SMDC. 

%%=====================================================
\subsection{An Achievable Rate Region via Superposition Coding}\label{section-Rsup}
Let $M$ be a message encoded by $n$ encoders. 
For any $0\leq c<k\leq n$, the $(c,k,n)$ ramp secret sharing problem \cite{yamamoto85}, also known as the secure symmetrical single-level diversity coding (S-SSDC) problem in \cite{liutie13-sSMDC}, requires that 
the outputs from any subset of no more than $c$ encoders provide no information about the message,  
and the outputs from any subset of $k$ encoders can completely recover the message. The optimal rate region for this problem can be found in \cite{ramp-secret-sharing96,liutie13-sSMDC}, as stated in the following lemma. 

\begin{lemma}\label{lemma-region-ramp}
	The optimal rate region of the $(c,k,n)$ ramp secret sharing problem is the collection of rate tuples $(R_1,R_2,\cdots,R_n)$ such that % $\frac{n}{k-c}H(M)$.
	\begin{equation}
	\sum_{l\in\cB}R_l\geq H(M),~\forall\cB\subseteq \{1,2,\cdots,n\},|\cB|=k-c.  \label{region-rss}
	\end{equation}
\end{lemma}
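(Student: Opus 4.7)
The proof splits into the standard converse and achievability directions, and my plan is to address them in that order.

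For the converse, I would fix any $\cB\subseteq\{1,2,\ldots,n\}$ with $|\cB|=k-c$, and then choose a disjoint $\mathcal{C}\subseteq\{1,2,\ldots,n\}\setminus\cB$ with $|\mathcal{C}|=c$; this is always possible because $(k-c)+c=k\leq n$. The security constraint at threshold $c$ gives $H(M\mid W_\mathcal{C})=H(M)$, and the recovery requirement at level $k$ (applied to $\cB\cup\mathcal{C}$) forces $H(M\mid W_{\cB\cup\mathcal{C}})=0$. Chaining these with elementary entropy inequalities,
\begin{align*}
\sum_{l\in\cB}R_l &\geq \sum_{l\in\cB}H(W_l)\geq H(W_\cB)\geq H(W_\cB\mid W_\mathcal{C})\\
&\geq I(M;W_\cB\mid W_\mathcal{C}) \\
&= H(M\mid W_\mathcal{C})-H(M\mid W_{\cB\cup\mathcal{C}}) \\
&= H(M),
\end{align*}
which is exactly \eqref{region-rss}. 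The first inequality uses that each $W_l$ takes values in an alphabet of $\log$-size at most $R_l$, so $H(W_l) \le R_l$ (with base-$p$ entropy).

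For the achievability, I would first handle the symmetric corner $R_l=H(M)/(k-c)$ for all $l$ via a Shamir-style ramp construction: represent $M$ as $(m_1,\ldots,m_{k-c})\in\bF_q^{k-c}$ over a sufficiently large field $\bF_q$, draw $c$ uniform independent keys $(K_1,\ldots,K_c)\in\bF_q^c$, form the degree-$(k-1)$ polynomial whose coefficients are these messages and keys, and let $W_l$ be its evaluation at a distinct nonzero point $\alpha_l$. Any $k$ shares Lagrange-interpolate the polynomial and therefore $M$, while the contribution of the $c$ independent keys makes any $c$ shares marginally uniform and hence statistically independent of $M$.

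The main obstacle is extending achievability from the symmetric corner to arbitrary rate vectors in \eqref{region-rss}. The plan is to exploit the contra-polymatroidal structure of the region: every rate tuple in it can be written as a convex combination of integer-proportional vertex tuples, each of which is realized by a suitable vector-ramp code over a large enough alphabet, and time-sharing between these schemes then yields any target rate vector. An alternative and cleaner route is a direct algebraic construction, choosing for each encoder $l$ a linear map of rank $R_l$ whose joint behavior on any $|\cB|=k-c$ encoders covers the message space while any $c$-subset is annihilated by the key layer; this only requires MDS-like column conditions on a generator matrix over a sufficiently large field. Since the result is classical, I would rely on the explicit constructions of \cite{ramp-secret-sharing96,liutie13-sSMDC} rather than re-derive them in full detail.
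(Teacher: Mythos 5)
The paper does not actually prove this lemma --- it is quoted as a known result with citations to \cite{ramp-secret-sharing96,liutie13-sSMDC} --- so there is no in-paper argument to compare against; your proposal reconstructs the standard proof. Your converse is complete and correct: the choice of a disjoint $\mathcal{C}$ with $|\mathcal{C}|=c$ (possible since $(k-c)+c=k\le n$), followed by
$\sum_{l\in\cB}R_l\ge H(W_\cB|W_{\mathcal{C}})\ge I(M;W_\cB|W_{\mathcal{C}})=H(M|W_{\mathcal{C}})-H(M|W_{\cB\cup\mathcal{C}})=H(M)$,
is exactly the textbook argument, and it degenerates gracefully when $c=0$. On achievability, two small points. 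First, at the symmetric corner your security justification is slightly loose: that any $c$ shares are \emph{marginally} uniform does not by itself give independence from $M$; you need that they remain uniform \emph{conditioned on} $M$, which follows because the $c\times c$ submatrix mapping the keys to those shares is invertible, so $H(W_{\mathcal{C}}|M)=c=H(W_{\mathcal{C}})$. Second, the extension from the symmetric point to the full region \eqref{region-rss} is genuinely the nontrivial part (the extreme points of this up-set are not all obtained by zeroing out some encoders and symmetrizing the rest), and your fallback of citing the explicit constructions in \cite{ramp-secret-sharing96,liutie13-sSMDC} is legitimate and mirrors what the paper itself does. If you want a self-contained argument, the ``direct algebraic'' route you mention --- assigning encoder $l$ a rank-$R_l$ block of columns of a generator matrix over a large field, with MDS-type conditions ensuring any $k$ encoders span the message coordinates and any $c$ encoders see only the key subspace --- is the cleaner way to finish, and is essentially the MDS-B construction the paper develops in its Section on MDS codes.
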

\begin{remark}
	If $k=c+1$, the $(c,k,n)$ ramp secret sharing problem reduces to the $(k,n)$ threshold secret sharing problem and the rate region reduces accordingly. 
\end{remark}

%%%- - - - - - - - - fugure-group superposition scheme- - - - - - - - - - - - - - - 
\begin{figure*}[t]
	\centering
	\begin{tikzpicture}[scale=1.0]
	\node at (0.5,4.3) {Messages}; 
	\node at (3.3,4.3) {Encrytion Keys}; 
	\node at (8,4.3) {Codewords}; 
	
	\node at (1.5,4.3) {$+$}; 
	\node at (1.5,1.6) {$+$}; 
	\node at (1.5,0.4) {$+$}; 
	
	\draw [->] (5.0,4.3)--(5.8,4.3); 
	\draw [->] (5.0,2.8)--(5.5,2.8); 
	\draw [->] (5.0,1.6)--(5.5,1.6); 
	\draw [->] (5.0,0.4)--(5.5,0.4); 
	
	\draw[blue](-0.3,2.4) rectangle (1.2,3.2); \node at (0.5,2.8) {$M_1$}; 
	\draw[red](-0.3,1.2) rectangle (1.2,2); \node at (0.5,1.6) {$M_2$}; 
	\draw[cyan](-0.3,0) rectangle (1.2,0.8); \node at (0.5,0.4) {$M_3$}; 
	
	\draw[red](2.3,1.2) rectangle (3.8,2.0); \node at (3,1.6) {$Z_2$}; 
	\draw[cyan](1.7,0) rectangle (3.2,0.8); \node at (2.5,0.4) {$Z_3^1$}; 
	\draw[cyan](3.3,0) rectangle (4.8,0.8); \node at (4.05,0.4) {$Z_3^2$}; 
	
	\draw[blue](5.7,2.4) rectangle (7.2,3.2); \node at (6.45,2.8) {$M_1$}; 
	\draw[blue](7.5,2.4) rectangle (9.0,3.2); \node at (8.25,2.8) {$M_1$}; 
	\draw[blue](9.3,2.4) rectangle (10.8,3.2); \node at (10.05,2.8) {$M_1$}; 
	
	\draw[red](5.7,1.2) rectangle (7.2,2); \node at (6.45,1.6) {$M_2+Z_2$}; 
	\draw[red](7.5,1.2) rectangle (9.0,2); \node at (8.25,1.6) {$M_2\!+\!2Z_2$}; 
	\draw[red](9.3,1.2) rectangle (10.8,2); \node at (10.05,1.6) {$Z_2$}; 
	
	\draw[cyan](5.7,0) rectangle (7.2,0.8); \node at (6.45,0.4) {$M_3+Z_3^1$}; 
	\draw[cyan](7.5,0) rectangle (9.0,0.8); \node at (8.25,0.4) {$M_3+Z_3^2$}; 
	\draw[cyan](9.3,0) rectangle (10.8,0.8); \node at (10.05,0.4) {$Z_3^1+Z_3^2$}; 
	
	\draw[dashed,blue](-0.4,2.3) rectangle (11.1,3.3); 
	\draw[dashed,red](-0.4,1.1) rectangle (11.1,2.1); 
	\draw[dashed,cyan](-0.4,-0.1) rectangle (11.1,0.9); 
	
	\draw(5.6,-0.2) rectangle (7.3,3.4); \node at (6.45,3.7) {$W_1$}; 
	\draw(7.4,-0.2) rectangle (9.1,3.4); \node at (8.25,3.7) {$W_2$}; 
	\draw(9.2,-0.2) rectangle (10.9,3.4); \node at (10.05,3.7) {$W_3$}; 
	\end{tikzpicture}
	\caption{The superposition coding scheme for $(3,3)$ DS-SMDC with $(m_1,m_2,m_3)=(1,1,1)$, $(N_1,N_2,N_3)=(0,1,2)$, and $p=3$.}
	\label{fig_superposition}
\end{figure*}
%%- - - - - - - - - fugure-group pairwise coding scheme- - - - - - - - - - - - - - - 
In light of this result, a natural coding scheme (i.e., superposition coding) for the weakly secure SMDC problem formulated above is to 
separately encode each message $M_\alpha$ using an $(N_\alpha,\alpha,L)$ ramp secret sharing code as shown in Fig.~\ref{fig_superposition}. 
The rate region induced by superposition coding provides an inner bound $\cR_{\sup}$ for $\cR$, and by \Cref{lemma-region-ramp}, it can be written as the set of non-negative rate tuples $\bm{\sR}=(\sR_1,\sR_2,\cdots,\sR_L)$ such that 
\begin{equation}
\sR_l=\sum_{\alpha=1}^L r_l^\alpha, \text{ for }l\in\cL  \label{sup-region-R}
\end{equation}
for some $r_l^\alpha\geq 0,~l,\alpha\in\cL$, satisfying 
\begin{equation}
\sum_{l\in\cB}r_l^{\alpha}\geq \sm_{\alpha}, \text{ for }\cB\subseteq\cL \text{ s.t. } |\cB|=\alpha-N_\alpha.  \label{sup-region-r}
\end{equation}
The induced sum rate provides an upper bound $\bar{\sR}_{\text{sum}}$ for $\sR_{\text{sum}}^*$, and can be written simply as,
\begin{equation}
\bar{\sR}_{\text{sum}}\triangleq\sum_{\alpha=1}^L\frac{L\sm_\alpha}{\alpha-N_\alpha}.   \label{sup-sumrate}
\end{equation}

\subsection{Properties of MDS Code for Secret Sharing}\label{section-MDS}
In this section, we describe in some details two $(n,k)$ maximum distance separable (MDS) codes for ramp secret sharing 
that achieve the minimum sum rate in \Cref{lemma-region-ramp}, and provide important properties that are instrumental to the joint coding strategy we later propose. 

Let $M=(U_1,U_2,\cdots,U_{k-c})$ be a length-$(k-c)$ message where each symbol is chosen uniformly and independently from the finite field $\mathbb{F}_p$. 
Let $Z_1,Z_2,\cdots, Z_c$ be independent random keys chosen uniformly from the same finite field $\mathbb{F}_p$. 
For $i=1,2,\cdots,k$, define the following length-$k$ vectors:
\begin{equation}
f_i=[\underbrace{0~ \cdots~ 0}_{i-1}~ 1~ 0~ \cdots~ 0]^T.   \label{MDS-def-f}
\end{equation}
%and
%\begin{equation}
%g_j=[\underbrace{0~ \cdots~ 0}_{(k-c)+j-1}~ 1~ 0~ \cdots~ 0]^T.  \label{MDS-def-g}
%\end{equation}
Let $g_{1}, g_{2}, \cdots, g_n$ be length-$k$ vectors with entries from $\mathbb{F}_p$ such that
any $k$ vectors $\{h_{j_1}, h_{j_2}, \cdots, h_{j_k}\}$ chosen from the set $\{f_1, f_2, \cdots, f_{k},$ $g_1, g_2, \cdots, g_n\}$ 
satisfy the full rank condition over $\mathbb{F}_p$, i.e., 
\begin{equation}
\text{rank}\left[h_{j_1}~ h_{j_2}~ \cdots~ h_{j_k}\right]=k.   \label{MDS-full-rank-condition}
\end{equation}
It can be shown that as long as $p\geq n+k$, there exist such vectors $g_{1}, g_{2}, \cdots, g_n$, e.g., it can be chosen as the columns from a Cauchy matrix. 
%For given $N_{\alpha},\alpha$, and $L$, the {\it existence} of $g_{N_{\alpha}+1}, g_{N_{\alpha}+2}, \cdots, g_L$ is proved by Theorem 5 in \cite{yamamoto85}.
The generator matrices of the two MDS codes of interest are given, respectively, as
\begin{align}
&G^{(1)}=\left[f_{k-c+1}~\cdots~f_{k}~g_1~ g_2~ \cdots~ g_{n-c}\right], \\
&G^{(2)}=\left[g_1~ g_2~ \cdots~ g_n\right].
\end{align}
Then the codewords of two MDS codes are, respectively, 
\begin{align}
&[Y_1,Y_2,\cdots,Y_n]=\big[U_1~\cdots~U_{k-c}~~Z_1~\cdots~ Z_c\big]G^{(1)},  \label{MDS-codeword-1}\\
&[Y_1,Y_2,\cdots,Y_n]=\big[U_1~\cdots~U_{k-c}~~Z_1~\cdots~ Z_c\big]G^{(2)}.  \label{MDS-codeword-2}
\end{align}
We shall refer these two codes  as MDS-A and MDS-B, respectively. By the definition of $f_{k-c+1},\cdots,f_{k}$ in \eqref{MDS-def-f}, MDS-A has the random keys explicitly as part of the coded message,
\begin{equation}
[Y_1,Y_2,\cdots,Y_c]=\big[Z_1~Z_2~\cdots~ Z_c\big].
\end{equation}
It is obvious that for both codes, $M$ and $Z_1, Z_2,\cdots, Z_c$ can be perfectly recovered from any $k$ coded symbols. 

Since all the coded symbols are linear combinations of the messages and the random keys that are uniformly distributed, we have the following lemma. 
\begin{lemma}\label{lemma-property-0}
	Any $k$ coded symbols of MDS-A and MDS-B are uniformly distributed over $\bF_{p^k}$ . 
\end{lemma}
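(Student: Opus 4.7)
The plan is a short linear-algebraic argument built on the MDS full-rank condition in \eqref{MDS-full-rank-condition}. Let $X=[U_1~\cdots~U_{k-c}~Z_1~\cdots~Z_c]$ denote the length-$k$ input row vector to the generator matrix, so that \eqref{MDS-codeword-1} and \eqref{MDS-codeword-2} read $[Y_1,\ldots,Y_n]=X G^{(i)}$ for $i\in\{1,2\}$. By construction, the $k$ entries of $X$ are independent and uniform on $\mathbb{F}_p$, hence $X$ is uniformly distributed on $\mathbb{F}_p^k$.

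Next, I would fix an arbitrary index set $\{i_1,\ldots,i_k\}\subseteq\{1,\ldots,n\}$ and form the $k\times k$ submatrix $H$ whose columns are the $i_1,\ldots,i_k$-th columns of $G^{(1)}$ (or $G^{(2)}$). In either case these columns are drawn from the set $\{f_1,\ldots,f_k,g_1,\ldots,g_n\}$: for MDS-B they are all $g_j$'s, while for MDS-A they are from the subset $\{f_{k-c+1},\ldots,f_k,g_1,\ldots,g_{n-c}\}$. The full-rank condition \eqref{MDS-full-rank-condition} therefore applies and yields $\mathrm{rank}(H)=k$, so $H$ is invertible over $\mathbb{F}_p$.

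I would then conclude by noting that the selected coded symbols satisfy $(Y_{i_1},\ldots,Y_{i_k})=XH$. Since $X$ is uniform on $\mathbb{F}_p^k$ and $H$ is a bijection on $\mathbb{F}_p^k$, the image $XH$ is also uniform on $\mathbb{F}_p^k$, which is the claimed uniform distribution on $\mathbb{F}_{p^k}$ under the natural identification used throughout the paper.

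There is no real obstacle here: the lemma is an immediate consequence of (i) the independence/uniformity of the inputs and keys, and (ii) the MDS property guaranteed by the choice of $g_1,\ldots,g_n$ (e.g., via a Cauchy matrix when $p\geq n+k$). The only point worth being careful about is that for MDS-A one must verify that the columns actually used, namely $f_{k-c+1},\ldots,f_k$ together with $g_1,\ldots,g_{n-c}$, lie inside the set $\{f_1,\ldots,f_k,g_1,\ldots,g_n\}$ that appears in \eqref{MDS-full-rank-condition}, so that the full-rank property can be invoked for both codes simultaneously.
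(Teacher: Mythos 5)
Your proof is correct and follows the same route the paper intends: the paper simply asserts the lemma as immediate from the coded symbols being linear combinations of uniform, independent inputs, with the invertibility of the relevant $k\times k$ submatrix implicit via \eqref{MDS-full-rank-condition}. You have merely spelled out that implicit step (uniform $X$ pushed through an invertible $H$ remains uniform), so there is nothing to add.
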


The main difference between the two codes, which is the most relevant to this work, is given in the following two lemmas. 
%The proofs follow directly from (\ref{MDS-full-rank-condition}) and the proofs are omitted due to space constraint. 

%%------------------ BEGIN lemma-property-MDS-1 --------------------------------------
\begin{lemma}\label{lemma-property-MDS-1}
	For any integer $t$ such that $c\leq t\leq k$, let $\cE\subseteq\{1,2,\cdots,n\}$ where $|\cE|=t$, and $\cA\subseteq\{1,2,\cdots,k-c\}$ where $|\cA|=k-t$. 
	The codewords of MDS-A has the following property:
	\begin{equation}
	I(Y_{\cE};U_{\cA})=0,
	%	H(Y_{\cE},U_{\cA})=\sum_{i\in\cE}H(Y_i)+\sum_{i\in\cA}H(U_i),
	\end{equation}
	where $Y_{\cE}\triangleq\{Y_i:i\in\cE\}$ and $U_{\cA}\triangleq\{U_i:i\in\cA\}$.
\end{lemma}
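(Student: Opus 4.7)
The plan is to realize $(Y_\cE, U_\cA)$ as the image of the independent uniform source vector $(U_1,\ldots,U_{k-c},Z_1,\ldots,Z_c)$ under an invertible $\bF_p$-linear map, so that the joint distribution becomes uniform on the product space $\bF_p^{|\cE|+|\cA|}$ and independence of $Y_\cE$ from $U_\cA$ follows immediately from the product structure.

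First I would collect the $|\cE|+|\cA|=t+(k-t)=k$ generating row-vectors one by one. Each $Y_i\in Y_\cE$ is read off using the $i$-th column of $G^{(1)}$, which is $f_{k-c+i}$ when $i\le c$ (so $Y_i=Z_i$) and $g_{i-c}$ when $c<i\le n$. Each $U_i\in U_\cA$ is the inner product with $f_i$ for $i\in\{1,\ldots,k-c\}$. The $f$-indices coming from $Y_\cE$ lie in $\{k-c+1,\ldots,k\}$, the $f$-indices coming from $U_\cA$ lie in the disjoint range $\{1,\ldots,k-c\}$, and the $g$'s form a third, separate family; hence the $k$ collected vectors are pairwise distinct, and they form a subset of $\{f_1,\ldots,f_k,g_1,\ldots,g_n\}$. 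The hypothesis $c\le t\le k$ is exactly what guarantees $|\cA|=k-t\le k-c$ and $|\cE|=t\le k$, which is what makes this bookkeeping feasible.

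Next I would invoke the full-rank property \eqref{MDS-full-rank-condition}: any $k$ vectors chosen from $\{f_1,\ldots,f_k,g_1,\ldots,g_n\}$ are linearly independent over $\bF_p$. So the $k$ selected vectors form a basis of $\bF_p^k$, and the $\bF_p$-linear map
\[ (U_1,\ldots,U_{k-c},Z_1,\ldots,Z_c)\;\longmapsto\;(Y_\cE,U_\cA) \]
is a bijection $\bF_p^k\to\bF_p^k$. Since the source is uniform on $\bF_p^k$, so is $(Y_\cE,U_\cA)$ on the product $\bF_p^{|\cE|}\times\bF_p^{|\cA|}$, which immediately gives $P_{Y_\cE,U_\cA}=P_{Y_\cE}P_{U_\cA}$ and hence $I(Y_\cE;U_\cA)=0$. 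The only delicate step is the distinctness bookkeeping for the $k$ generating vectors; once that is in place, the rest is an immediate consequence of \eqref{MDS-full-rank-condition} and the fact that an invertible linear map sends a uniform random vector to a uniform random vector, so I do not foresee any obstacle beyond this index accounting.
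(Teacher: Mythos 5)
Your proof is correct. It rests on the same linear-algebra fact as the paper's proof --- that the $k$ columns of the source-to-$(Y_\cE,U_\cA)$ map, namely $\{f_{k-c+i}:i\in\cE,\,i\le c\}\cup\{g_{i-c}:i\in\cE,\,i>c\}\cup\{f_j:j\in\cA\}$, are $k$ distinct members of $\{f_1,\ldots,f_k,g_1,\ldots,g_n\}$ and hence linearly independent by \eqref{MDS-full-rank-condition} --- but you exploit it differently. The paper runs an entropy chain: it shows $H(U_{\bar\cA}Z_1\cdots Z_c\,|\,Y_\cE U_\cA)=0$ (everything is recoverable from $Y_\cE,U_\cA$, which is exactly your invertibility claim), separately evaluates $H(Y_\cE)=\frac{t}{k-c}H(M)$ via \Cref{lemma-property-0} and $H(U_{\bar\cA}Z_1\cdots Z_c)=\frac{t}{k-c}H(M)$ by counting symbols, and concludes the mutual information vanishes by cancellation. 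You instead observe that an invertible $\bF_p$-linear image of the uniform source is uniform on $\bF_p^k=\bF_p^{t}\times\bF_p^{k-t}$, so the joint law of $(Y_\cE,U_\cA)$ is a product measure and independence is immediate. Your route is more elementary and self-contained (it does not need the marginal uniformity of $Y_\cE$ as a separate input --- it delivers joint uniformity, which is strictly stronger than $I(Y_\cE;U_\cA)=0$), while the paper's entropy-accounting style is the one it reuses throughout the converse arguments. The index bookkeeping you flag as the only delicate step is indeed the crux, and your accounting of the disjoint $f$-index ranges $\{1,\ldots,k-c\}$ and $\{k-c+1,\ldots,k\}$ is accurate.
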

%%------------------ END lemma-property-MDS-1 --BEGIN proof ---------------------
\begin{proof}
	We consider the following chain of equality
	\begin{align}
	&I(Y_{\cE};U_{\cA})   \nonumber \\
	=&H(Y_{\cE})-H(Y_{\cE}|U_{\cA}) \\ 
	=&H(Y_{\cE})-H(Y_{\cE}|U_{\cA})+H(Y_{\cE}|U_{\cA}U_{\bar{\cA}}Z_1Z_2\cdots Z_c)   \label{MDS-indep-2}\\ 
	=&H(Y_{\cE})-H(U_{\bar{\cA}}Z_1Z_2\cdots Z_c|U_{\cA})\\
	&+H(U_{\bar{\cA}}Z_1Z_2\cdots Z_c| Y_{\cE}U_{\cA}) \\ 
	=&H(Y_{\cE})-H(U_{\bar{\cA}}Z_1Z_2\cdots Z_c)   \label{MDS-indep-4} \\ 
	=&\frac{t}{k-c}H(M)-\frac{t-c+c}{k-c}H(M)    \label{MDS-indep-5} \\
	=&0, 
	\end{align}
	where \eqref{MDS-indep-2} follows from \eqref{MDS-codeword-1}, 
	and both \eqref{MDS-indep-4} and \eqref{MDS-indep-5} follow from the full rank condition in \eqref{MDS-full-rank-condition} and the uniform and mutually independent distribution of the messages and the encryption key. 
\end{proof}
%%------------------ END proof-- lemma-property-MDS-1  ---------------------
\begin{remark}
	For $t=c$, \Cref{lemma-property-MDS-1} reduces to the stated security constraint of parameter $c$; on the other hand, for $t>c$ (but $t\leq k$), any $t$ coded symbols reveal no information about any subset of $k-t$ message symbols. 
	%	In fact, it was proved in \cite{yamamoto85} that the amount of information leakage on message $M$ is exactly $\frac{t-c}{k-c}H(M)$. 
\end{remark}

%%------------------ BEGIN lemma-property-MDS-2 --------------------------------------
\begin{lemma}\label{lemma-property-MDS-2}
	For any integer $t$ such that $0\leq t\leq k$, let $\cE\subseteq\{1,2,\cdots,n\}$ where $|\cE|=t$, and $\cA_1\subseteq\{1,2,\cdots,k-c\}$, and $\cA_2\subseteq\{1,2,\cdots,c\}$ where $|\cA_1|+|\cA_2|=k-t$.
	The codewords of MDS-B has the following property:
	%	For any integer $t$ such that $c\leq t\leq k$, the MDS code constructed from $G^{(2)}$ above satisfies 
	\begin{equation}
	I(Y_{\cE};U_{\cA_1},Z_{\cA_2})=0,
	\end{equation}
	where $Y_{\cE}\triangleq\{Y_i:i\in\cE\}$, $U_{\cA_1}\triangleq\{U_i:i\in\cA_1\}$, and $Z_{\cA_2}\triangleq\{Z_i:i\in\cA_2\}$.	
\end{lemma}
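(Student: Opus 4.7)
My plan is to mirror the proof of \Cref{lemma-property-MDS-1} very closely, with the only substantive change being that every place where that argument invokes the fact that MDS-A exposes the random keys $Z_1,\ldots,Z_c$ literally as the first $c$ coded symbols, I will instead appeal to the mixed full-rank condition~\eqref{MDS-full-rank-condition}, which places the $f$-vectors and the $g$-vectors on the same footing. Since MDS-B hides the keys, it is cleanest to treat $(U_{\cA_1},Z_{\cA_2})$ and its complement $(U_{\bar{\cA}_1},Z_{\bar{\cA}_2})$ uniformly as two disjoint blocks of size $k-t$ and $t$ carved out of the full input vector $(U_1,\ldots,U_{k-c},Z_1,\ldots,Z_c)$, whose $k$ coordinates are i.i.d.\ uniform over $\bF_p$.

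First I would establish the marginal fact $H(Y_\cE)=t$ (in logarithm base $p$) for $|\cE|=t\leq k$, which is slightly stronger than \Cref{lemma-property-0} since $t$ may be less than $k$. This is immediate from~\eqref{MDS-full-rank-condition}: any $t$ columns from $\{g_1,\ldots,g_n\}$ can be augmented by $k-t$ columns from $\{f_1,\ldots,f_k\}$ to produce a rank-$k$ matrix, which forces the $t$ chosen $g$-columns to already be linearly independent; hence $Y_\cE$ is a bijective $\bF_p$-linear image of $t$ i.i.d.\ uniform symbols. Second, I would run the same identity chain as in the proof of \Cref{lemma-property-MDS-1}, adding the zero term $H(Y_\cE\mid U_{\cA_1},Z_{\cA_2},U_{\bar{\cA}_1},Z_{\bar{\cA}_2})$ and then swapping sides via the chain rule, to obtain
\begin{align*}
I(Y_\cE;U_{\cA_1},Z_{\cA_2})
&=H(Y_\cE)-H(U_{\bar{\cA}_1},Z_{\bar{\cA}_2}\mid U_{\cA_1},Z_{\cA_2}) \\
&\quad +H(U_{\bar{\cA}_1},Z_{\bar{\cA}_2}\mid Y_\cE,U_{\cA_1},Z_{\cA_2}).
\end{align*}
The middle entropy equals $t$ by the mutual independence and uniformity of the inputs, since $|\bar{\cA}_1|+|\bar{\cA}_2|=k-(k-t)=t$. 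The last term vanishes because $(U_{\cA_1},Z_{\cA_2})$ is equivalent to selecting the $k-t$ input coordinates picked out by the corresponding $f$-vectors, while $Y_\cE$ corresponds to $t$ of the $g$-vectors; by~\eqref{MDS-full-rank-condition} these $k$ vectors are linearly independent, so the remaining $t$ input coordinates $(U_{\bar{\cA}_1},Z_{\bar{\cA}_2})$ can be solved for. Combining with $H(Y_\cE)=t$ yields $t-t+0=0$.

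The main obstacle---though a mild one---is that, unlike in \Cref{lemma-property-MDS-1} where the complement $\bar{\cA}$ was confined to the message coordinates while the $c$ keys were conditioned on separately, here the complement coordinates may straddle the message/key boundary. This forces the bookkeeping to treat message symbols and key symbols symmetrically, and it is precisely the full-rank condition~\eqref{MDS-full-rank-condition}, rather than any explicit appearance of the keys among the $Y_i$'s, that makes this symmetric treatment go through. The degenerate cases $t=0$ (where $Y_\cE$ is empty) and $t=k$ (where the conditioning set is empty) I would dispose of with a one-line remark.
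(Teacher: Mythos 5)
Your proposal is correct and is essentially the paper's argument made explicit: the paper disposes of this lemma in one line as "direct from the full-rank condition in \eqref{MDS-full-rank-condition} and the uniform and mutually independent distribution of the messages and the encryption key," and your entropy chain (mirroring the proof of \Cref{lemma-property-MDS-1}, with messages and keys treated symmetrically as $k-t$ conditioned coordinates plus $t$ solvable ones) is exactly the detailed justification of that claim.
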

\begin{proof}
	This is direct from the full-rank condition in \eqref{MDS-full-rank-condition} and the uniform and mutually independent distribution of the messages and the encryption key.
\end{proof}

From the above two lemmas, in contrast to MDS-A, MDS-B has the additional advantage that part of the keys can also be made secure against some $t$ eavesdroppers.  %%, at the expense of exposing some message symbols. 
This property becomes important to us in the sequel. 

%%------------------ END lemma-property-MDS-2 --BEGIN proof ---------------------

%%====================================================
\section{Main Results}\label{section-main}
%%====================================================
\subsection{Sum-rate Optimality Conditions of Superposition}\label{section-conditions}
%In light of the optimality of superposition coding for SMDC, it is natural to ask that for what network settings will superposition be optimal? 
%However, it is in general hard to answer the question. 
%In this section, we provide an answer for a special class of network settings, i.e., the weakly secure SMDC problem. 
%Specifically, we characterize conditions on security constraints $\bm{N}$ such that superposition coding is optimal in terms of achieving the minimum sum rate. 
%Our main result is the following theorem. 
The main question we seek to answer here is under what condition the equality $\sR_{\text{sum}}^*=\bar{\sR}_{\text{sum}}$ will hold, and the following theorem provides the exact answer to this question. 
%%------------------ BEGIN thm-sup-opt-condition------------------------
\begin{theorem}\label{thm-sup-opt-condition}
	$\sR_{\text{sum}}^*=\bar{\sR}_{\text{sum}}$, if and only if 
	for any $\alpha<\beta\in\cL$ where $\sm_\alpha, \sm_\beta>0$, we have
	\begin{equation}
	\text{ either } N_\alpha<\alpha\leq N_\beta<\beta, \text{ or }N_\alpha=N_\beta=0. \label{sup-opt-condition}
	\end{equation}
\end{theorem}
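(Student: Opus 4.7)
The biconditional in \Cref{thm-sup-opt-condition} has a clean split: the necessity direction is an explicit code construction via pairwise encoding, while the sufficiency direction is an information-theoretic converse. The plan is to prove these two directions separately and to rely on the pairwise encoding constructions already described in \Cref{section-scheme-alpha-key-beta,section-scheme-beta-key-alpha} for the former and on Han-type symmetrization for the latter.

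\textbf{Necessity.} Suppose \eqref{sup-opt-condition} fails at some pair $\alpha<\beta$ with $\sm_\alpha,\sm_\beta>0$, so $N_\beta<\alpha$ and $(N_\alpha,N_\beta)\neq(0,0)$. I would dispatch two sub-cases: when $N_\beta>0$, the strategy of \Cref{section-scheme-alpha-key-beta} applies, treating the coded components of $M_\alpha$ (which are recoverable by any $\alpha$ encoders and, with the MDS-B structure of \Cref{section-MDS}, invisible to any $N_\beta<\alpha$ encoders) as part of the key securing $M_\beta$; when $N_\alpha>0$, the strategy of \Cref{section-scheme-beta-key-alpha} applies symmetrically, using $M_\beta$ as additional key material for $M_\alpha$. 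Replacing the ramp layers for this pair while retaining ramp secret sharing for the other messages yields a sum rate strictly below $\bar{\sR}_{\text{sum}}$, contradicting sum-rate optimality of superposition.

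\textbf{Sufficiency.} Assume \eqref{sup-opt-condition} holds. Given any achievable $(\sR_1,\ldots,\sR_L)$, I would first symmetrize by averaging the code over all permutations of encoder indices, which preserves both recovery and security, and reduces the task to bounding $\bar{\sR}=\sum_l \sR_l$. For each $\alpha$ with $\sm_\alpha>0$, pick disjoint $S,T\subseteq\cL$ with $|S|=\alpha-N_\alpha$ and $|T|=N_\alpha$: the security constraint gives $H(M_\alpha\mid W_T)=\sm_\alpha$, and the reconstruction constraint at $|S\cup T|=\alpha$ gives $H(M_\alpha\mid W_{S\cup T})=0$, so $H(W_S\mid W_T)\geq\sm_\alpha$. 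Condition \eqref{sup-opt-condition} is then used to push the conditioning further onto $M_{[\alpha-1]}$ without loss, yielding $H(W_S\mid W_T, M_{[\alpha-1]})\geq\sm_\alpha$. Averaging this inequality symmetrically over all valid $S$ of size $\alpha-N_\alpha$ via Han's inequality at that level turns it into $\sum_{l=1}^L H(W_l\mid M_{[\alpha-1]})\geq \frac{L\sm_\alpha}{\alpha-N_\alpha}$. Finally, the chain-rule bound $H(W_l)\geq I(W_l;M_{[L]})=\sum_\alpha I(W_l;M_\alpha\mid M_{[\alpha-1]})$, summed over $l$, telescopes the per-$\alpha$ bounds into $\sum_l\sR_l\geq\sum_l H(W_l)\geq\bar{\sR}_{\text{sum}}$.

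\textbf{Main obstacle.} The delicate step is the decoupling inside the converse — showing the per-$\alpha$ bound survives the additional conditioning on $M_{[\alpha-1]}$ and that the resulting bounds add without double counting. This is exactly the content of \eqref{sup-opt-condition}: in the case $\alpha\leq N_\beta<\beta$ for every higher-index $\beta$ with $\sm_\beta>0$, the encoder sets at the eavesdropper threshold $N_\alpha$ remain secure against $M_\alpha$ even when the (lower-threshold) messages $M_{[\alpha-1]}$ are revealed, while in the case $N_\alpha=N_\beta=0$ the conditioning is vacuous. When \eqref{sup-opt-condition} fails, the pairwise encoding construction from the necessity step shows the bounds genuinely overlap, so the condition is tight. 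Identifying the correct Han variant at each level $\alpha-N_\alpha$ and ordering the chain-rule expansion consistently with the per-level symmetrization is the technically intricate portion of the argument.
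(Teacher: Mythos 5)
Your overall architecture matches the paper's: necessity via the pairwise encoding constructions of \Cref{section-scheme-alpha-key-beta,section-scheme-beta-key-alpha}, sufficiency via a Han-type symmetrized converse. However, the final combination step of your converse is not valid as written. You derive the per-level bound $\sum_l H(W_l\mid M_{[\alpha-1]})\geq \frac{L\sm_\alpha}{\alpha-N_\alpha}$ and then claim the chain rule $H(W_l)\geq\sum_\alpha I(W_l;M_\alpha\mid M_{[\alpha-1]})$ "telescopes" these into the sum-rate bound. That would require $\sum_l I(W_l;M_\alpha\mid M_{[\alpha-1]})\geq \frac{L\sm_\alpha}{\alpha-N_\alpha}$ for each $\alpha$, which is false: for $\alpha\geq 2$ the level-$\alpha$ information can be carried synergistically by groups of $\alpha-N_\alpha$ encoders (e.g.\ $W_1=K$, $W_2=M_2\oplus K$ gives $I(W_l;M_2)=0$ for each $l$ individually), so the per-encoder mutual informations can sum to far less than the joint quantity your per-level bound controls. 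The per-level bounds all draw on the same total entropy $\sum_l H(W_l)$ and cannot simply be added. This is exactly why the paper's \Cref{lemma-converse_condition-iteration} is an induction that carries forward the \emph{residual joint} conditional entropy $\mu_\alpha$ (a symmetrized average of $H(W_{\cB^1_\alpha}\mid W_{\cB^2_\alpha}M_{1:\alpha})$) from one non-degenerate level to the next, and why the iteration needs two separate cases ($\zeta-N_\zeta\leq\varphi-N_\varphi$ versus $>$), invoking Han's inequality and its complementary conditioning version respectively. Your sketch names the right difficulty but does not supply the mechanism that resolves it; filling it in essentially forces you back to the paper's inductive structure.

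On necessity, your case split is slightly off. The strategy of \Cref{section-scheme-alpha-key-beta} does not apply whenever $N_\beta>0$: its security verification for $M_\alpha$ uses $N_\alpha<N_\beta$ (Condition 1) to guarantee $N_\alpha+\min\{N_\beta,\alpha-N_\beta\}\leq\alpha$, and this can fail when $N_\beta\leq N_\alpha$ (e.g.\ $\alpha=4$, $N_\alpha=3$, $N_\beta=2$). The correct dichotomy is the paper's Conditions 1 and 2: use \Cref{section-scheme-alpha-key-beta} when $N_\alpha<N_\beta<\alpha$ and \Cref{section-scheme-beta-key-alpha} when $N_\beta\leq N_\alpha$ and $N_\alpha>0$; your cases happen to cover the complement of \eqref{sup-opt-condition} only because the second strategy picks up the slack whenever $N_\alpha>0$. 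Also note that \Cref{section-scheme-alpha-key-beta} is built on MDS-A (keys appearing explicitly as coded symbols), not MDS-B as you state; MDS-B is what \Cref{section-scheme-beta-key-alpha} needs so that message symbols used as keys remain protected.
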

%%------------------ END thm-sup-opt-condition ------------------------

\begin{remark}
	If all $L$ messages are non-degenerate, i.e., all the message entropies are non-zero, 
	the condition in \eqref{sup-opt-condition} is equivalent to that there exists a $T_s\in\{1,2,\cdots,L\}$ such that for all $\alpha\in\cL$, 
	\begin{equation}
	N_{\alpha}=
	\begin{cases}
	0,&\text{for }\alpha\leq T_s \\
	\alpha-1,&\text{for }\alpha>T_s. 
	\end{cases}
	\label{sup-opt-condition-2}
	\end{equation}
	If we do not assume non-degeneration, then the following necessary condition for optimality can be induced from \eqref{sup-opt-condition}: There exists a $T_s\in\{1,2,\cdots,L\}$ such that for any $\alpha\in\cL$ satisfying $m_\alpha>0$,
	\begin{equation}
	\begin{cases}
	N_{\alpha}=0,&\text{ for }\alpha\leq T_s \\
	N_{\alpha}>0,&\text{ for }\alpha>T_s. 
	\end{cases}
	\label{def-T_s}
	\end{equation}
\end{remark}
\begin{remark}
	The following are two examples that superposition coding is optimal in terms of achieving the entire rate region and thus \Cref{thm-sup-opt-condition} reduces correctly. 
	\begin{itemize}
		\item If the threshold in \eqref{sup-opt-condition-2} is $T_s=L$, the security constraints are given as  
		\begin{equation}
		N_\alpha=0, \text{ for all } \alpha\in\cL,  \label{opt-condition-1}
		\end{equation}
		then the problem reduces to the classical SMDC problem without security constraints, where superposition is known to be optimal \cite{yeung99}. 
		
		\item If the threshold in \eqref{sup-opt-condition-2} is $T_s=1$, the security constraint becomes 
		\begin{equation}
		N_{\alpha}=\alpha-1, \text{ for all } \alpha\in\cL, 
		\end{equation}
		and the problem reduces to the special case of DS-SMDC for $r=L$ in \Cref{section-DS}. 
	\end{itemize}
\end{remark}

%\begin{remark}
%	In this paper, we only focus on the sum rate bound. 
%	It remains interesting to show that superposition coding is generally optimal in terms of 
%	achieving the entire rate region for the security constraints $\bm{N}$ satisfying the condition in \Cref{thm-sup-opt-condition}. 
%%	This can be done by proving the rate constraint in \eqref{inequality-R*}. 
%\end{remark}

The following definition will be used in the sequel. 
\begin{definition}
	For any $\alpha<\beta\in\cL$, we define two conditions.
	\begin{align}
	\text{Condition 1}:~ &N_\alpha<N_\beta<\alpha;  \label{not-opt-condition-1}  \\
	\text{Condition 2}:~ &N_\beta\leq N_\alpha~\&~N_\alpha>0.  \label{not-opt-condition-2}
	\end{align} 
\end{definition}
Theorem \ref{thm-sup-opt-condition} can be alternatively written in the following form, by taking the complement of the conditions in~\eqref{sup-opt-condition}.
\begin{namedthm}{Theorem \ref{thm-sup-opt-condition}'}\label{thm-not-opt-condition}
	$\sR_{\text{sum}}^*<\bar{\sR}_{\text{sum}}$, if and only if there exist $\alpha<\beta\in\cL$ where $\sm_\alpha, \sm_\beta>0$ such that either Condition 1 in \eqref{not-opt-condition-1} or Condition 2 in \eqref{not-opt-condition-2} holds.
%	we have
%	\begin{equation}
%	\text{ either } (N_\alpha<N_\beta<\alpha), \text{ or } (N_\beta\leq N_\alpha~\&~N_\alpha>0). \label{not_sup-opt-condition}
%	\end{equation}
\end{namedthm}
%\addtocounter{theorem}{1}

%\vspace{-0.5cm}
We prove \Cref{thm-sup-opt-condition} in two parts. 
In \Cref{section-joint-strategy}, we show that superposition is suboptimal under the security constraints in \eqref{not-opt-condition-1} or \eqref{not-opt-condition-2}, 
by providing joint coding strategies that can reduce coding rates. 
In \Cref{section-conditions-converse},  the optimality of superposition coding is established by proving 
that the sum rate is lower bounded by $\bar{\sR}_{\text{sum}}$ in (\ref{sup-sumrate}).

\begin{remark}
	Superposition coding is optimal for classical SMDC where there is no security constraints, 
	i.e., suboptimality only happens when there is a security constraint. 
	In view of the suboptimality in \Cref{section-scheme-alpha-key-beta,section-scheme-beta-key-alpha}, 
	we see intuitively that joint encoding helps only when some message can perform as the secret key of another message. 
\end{remark}

%%======================================================
%%\subsection{\textcolor{red}{Rate Region Superposition Optimality Conditions?}}
\subsection{Rate Region of DS-SMDC}\label{section-DS}
When superposition is not optimal, it is generally hard to characterize the coding rate region or even the minimum sum rate, since it is difficult to find the optimal code structures. 
In this section, we study the $(L,r)$ DS-SMDC problem for which we fully characterize the optimal rate region. 
The pairwise coding strategy in \Cref{section-scheme-beta-key-alpha} can be generalized to a multi-message regime, and we obtain a {\it group pairwise} coding scheme that achieves the entire rate region of the DS-SMDC problem. 

We first present an example that motivates the general group pairwise coding scheme. 
%-------------------------- BEGIN Example -------------------------------------------
\begin{example}\label{example-DSSMDC}
	Let $L=4, (m_1,m_2,m_3,m_4)=(1,1,1,4)$, and $p=11$. The security constraint for the $(4,3)$ DS-SMDC problem should be $(N_1,N_2,N_3,N_4)=(0,1,2,0)$.
	We can follow a naive strategy as illustrated in \eqref{code-example-DS-sup}: use generator matrices $G_2$ and $G_3$ generated from MDS-B to encode $M_2$ and $M_3$ separately with encryption keys $Z_2$ and $Z_3^1,Z_3^2$; equally partition $M_4$ into four pieces $M_4^1,M_4^2,M_4^3,M_4^4$. 
	\begin{align}
	W_1&=(M_2+{\color{blue}Z_2},\quad~M_3+2{\color{red}Z_3^1}+9{\color{orange}Z_3^2},W_4^1),  \nonumber \\
	W_2&=(M_2+2{\color{blue}Z_2},~9M_3+8{\color{red}Z_3^1}+6{\color{orange}Z_3^2},W_4^2),  \nonumber \\
	W_3&=(M_2+3{\color{blue}Z_2},6M_3+10{\color{red}Z_3^1}+7{\color{orange}Z_3^2},W_4^3),  \nonumber \\
	W_4&=(M_2+4{\color{blue}Z_2},~~7M_3+9{\color{red}Z_3^1}+7{\color{orange}Z_3^2},W_4^4);  \label{code-example-DS-sup}
	\end{align}
	The first part of the group pairwise coding scheme is simply to use $M_4$, specifically $M_4^1,M_4^2,M_4^3$, to replace $Z_2$ and $Z_3^1,Z_3^2$ as secret keys to encrypt $M_2$ and $M_3$, as given in \eqref{code-example-DS-joint}. 
	%	The key insight is that $M_4$ can be used as the secret key to encode $M_3$, which reduces the coding rate. 
	%	Let the two messages be encoded with generator matrices constructed using MDS-B, which induce the coded symbols as shown in \Cref{table-example-4-3=1} through superposition. 
	%	For joint coding, we replace $Z_3^1$ by $M_4$ to serve as the key for $M_3$. 
	%	The coded symbols for this joint coding strategy are shown in \Cref{table-example-4-3=2}. 
      \begin{align}
	W_1'&=(M_2+{\color{blue}M_4^1},\quad~M_3+2{\color{red}M_4^2}+9{\color{orange}M_4^3}\qquad ),  \nonumber \\
	W_2'&=(M_2+2{\color{blue}M_4^1},~9M_3+8{\color{red}M_4^2}+6{\color{orange}M_4^3}\qquad ), \nonumber \\
	W_3'&=(M_2+3{\color{blue}M_4^1},6M_3+10{\color{red}M_4^2}+7{\color{orange}M_4^3}\qquad ), \nonumber \\
	W_4'&=(M_2+4{\color{blue}M_4^1},~~7M_3+9{\color{red}M_4^2}+7{\color{orange}M_4^3},W_4^4).  \label{code-example-DS-joint}
	\end{align}	
	The second part of  the group pairwise coding scheme simply encodes $M_4^4$ as part of the fourth coded message. 
	Since $M_4^1,M_4^2,M_4^3$ does not need to be separately encoded, rate saving is obtained compared to the naive version. 
	The reconstruction and security requirements of $M_2$ and $M_3$ are immediate from the MDS-B code. 
	The reconstruction requirement of $M_4$ is straightforward since $M_4^1,M_4^2,M_4^3$ is recovered with any three coded symbols. 
\end{example}
%%-------------------------- END Example -------------------------------------------

\noindent{\textbf{Coding scheme for general parameters:}}\\
The group pairwise coding scheme is illustrated in \reffig{fig_group-pairwise-scheme}. 
%%%- - - - - - - - - figure-group pairwise coding scheme- - - - - - - - - - - - - - - 
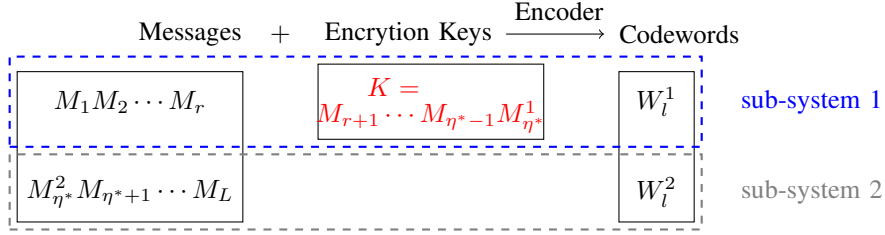
\begin{figure*}[!h]
	\centering
	\begin{tikzpicture}
	\node at (1.3,3.5) {Messages}; \node at (2.5,3.5) {$+$}; 
	\node at (4.2,3.5) {Encrytion Keys}; \draw [->] (5.5,3.5)--(6.8,3.5); 
	\node at (7.8,3.5) {Codewords}; \node at (6.2,3.8) {Encoder}; 
	
	\draw (-1,1) rectangle (2,3); 
	\node at (0.5,2.6) {$M_1M_2\cdots M_r$};
	\node at (0.5,1.4) {$M_{\eta^*}^2M_{\eta^*+1}\cdots M_L$};
	
	\draw (3,2.1) rectangle (6,3.1); \node at (4,2.8) {\color{red}$K=$}; 
	\node at (4.5,2.4) {\color{red}$M_{r+1}\cdots M_{\eta^*-1}M_{\eta^*}^1$};
	
	\draw (7,1) rectangle (8,3);
	\node at (7.5,2.6) {$W_l^1$}; \node at (7.5,1.4) {$W_l^2$};
	
	\draw [dashed,thick,blue] (-1.1,2.0) rectangle (8.1,3.2); \node [right] at (8.5,2.6) {\color{blue}sub-system 1}; 
	\draw [dashed,thick,gray] (-1.1,0.9) rectangle (8.1,1.9); \node [right] at (8.5,1.4) {\color{gray}sub-system 2}; 
	\end{tikzpicture}
	\caption{The group pairwise coding scheme.}
	\label{fig_group-pairwise-scheme}
\end{figure*}
%%%- - - - - - - - - figure-group pairwise coding scheme- - - - - - - - - - - - - - - 
For each $\alpha\in\{1,2,\cdots,r\}$, we will use an $(\alpha,L)$-threshold secret sharing scheme to encode $M_{\alpha}$ 
and use the last $L-r$ messages $M_{r+1},\cdots,M_L$ as keys. 
It is proved in \cite{yamamoto85} that the minimum key size for $M_{\alpha}$ is $(\alpha-1)m_\alpha$. 
Thus the total size of keys needed is %\textcolor{red}{Can we add a figure to illustrate this coding strategy?}
\begin{equation}
|\cK|=\sum_{\alpha=1}^{r}(\alpha-1)m_\alpha.
\end{equation}

For notational simplicity, we define an auxiliary message\footnote{We use the auxiliary message $M_{L+1}$ to perform as encryption keys for the first $r$ messages if the messages $M_{r+1},\cdots,M_L$ are not enough. Thus, $M_{L+1}$ is non-vanishing (i.e., $m_{L=1}>0$), only when the total key size needed is strictly larger than the total size of messages $M_{r+1},\cdots,M_L$. }
$M_{L+1}$, which is independent with other messages and uniformly distributed over $\bF_{p^{m_{L+1}}}$ with 
\begin{equation}
m_{L+1}=\left[\sum_{\alpha=1}^{r}(\alpha-1)m_\alpha-\sum_{\alpha=r+1}^{L}m_\alpha\right]^+,  \label{def-nonnegative-function}
\end{equation}
where for any $x\in\mathbb{R}$, $[x]^+\triangleq \max\{0,x\}$. 
It is easy to check that 
\begin{equation}
\sum_{\alpha=r+1}^{L+1}m_{\alpha}\geq \sum_{\alpha=1}^{r}(\alpha-1)m_{\alpha}.
\end{equation}
Thus, there exists a unique $\eta^*\in\{r+1,r+2,\cdots, L+1\}$ such that
\begin{equation}
\sum_{\alpha=r+1}^{\eta^*-1}m_\alpha< \sum_{\alpha=1}^{r}(\alpha-1)m_\alpha\leq \sum_{\alpha=r+1}^{\eta^*}m_\alpha.  \label{opt-eta}
\end{equation}
The parameter $\eta^*$ determines which messages of $M_{r+1},M_{r+2},\cdots,M_{L+1}$ will be used as the encryption keys. 
In light of the definition of $\eta^*$ in \eqref{opt-eta}, denote the first $\frac{\sum_{\alpha=1}^{r}(\alpha-1)m_\alpha- \sum_{\alpha=r+1}^{\eta^*-1}m_\alpha} {m_{\eta^*}}$ fraction 
of $M_{\eta^*}$ by $M_{\eta^*}^1$, and the rest by $M_{\eta^*}^2$. 
Then we use the messages $(M_{r+1},M_{r+2},\cdots,M_{\eta^*-1},M_{\eta^*}^1)$ to replace the keys of $M_1,\cdots,M_r$. 
The messages $M_{\eta^*}^2, M_{\eta^*+1}, M_{\eta^*+2}, \cdots,M_{L}$ are separately encoded in the same way as in classical SMDC. 

Next, we verify the reconstruction and security constraints.

\vspace{0.2cm}
\noindent{\textit{Reconstruction: }} By the code construction in \Cref{section-MDS}, the reconstruction requirements of all messages $(M_1,M_2,\cdots,M_r)$, $(M_{r+1},M_{r+2},\cdots,M_{\eta^*-1},M_{\eta^*}^1)$, and $(M_{\eta^*}^2,M_{\eta+1},\cdots,M_L)$ are satisfied immediately. 

\vspace{0.2cm}
\noindent{\textit{Security: }} The security constraints of $(M_1,M_2,\cdots,M_r)$ is straightforward, and there is no security constraint for $(M_{r+1},M_{r+2},\cdots,M_L)$.

\begin{remark}
	The first $r$ messages $M_1,M_2,\cdots,M_{r}$ are encoded separately, and 
	the last $r$ messages $M_{r+1}, M_{r+2}, \cdots,M_L$ are also encoded separately. 
	The reason why we call the coding scheme ``group pairwise" is that joint encoding are only performed between the two groups of messages
	\begin{equation}
	\big\{M_1,M_2,\cdots,M_{r}\big\} \text{ and }\big\{M_{r+1}, M_{r+2}, \cdots,M_{\eta^*}\big\}.
	\end{equation}
\end{remark}

The group pairwise coding scheme can also be interpreted as superposition coding of the messages $M_1,M_2,\cdots,M_r$, $M_{r+1}^*,\cdots,M_L^*$, 
where the independent pseudo-messages $M_\alpha^* (r+1\leq \alpha\leq L)$ are defined by the message size $\sm_{\alpha}^*$ as 
\begin{equation}
\sm_{\alpha}^*=
\begin{cases}
0,&\text{ for }r+1\leq \alpha\leq \eta^*-1 \\
\sum\limits_{j=r+1}^{\eta^*}\sm_j-\sum\limits_{j=1}^{r}(j-1)\sm_j,&\text{ for }\alpha=\eta^*\\
\sm_{\alpha},&\text{ for }\eta^*+1\leq \alpha\leq L.
\end{cases}  \label{def-sm*}
\end{equation}
Then the coding rate region $\cR_{\text{gp}}^{L,r}$ induced by group pairwise coding is the set of $\bm{\sR}\geq \bm{0}$ such that 
\begin{equation}
\sR_l=\sum_{\alpha=1}^L r_l^{\alpha}, \text{ for }l\in\cL,  \label{gpRegion-R}
\end{equation}
where $r_l^{\alpha}\geq 0$ and 
\begin{align}
r_l^{\alpha}&\geq \sm_{\alpha},\text{ for }1\leq \alpha\leq r,  \label{gpRegion-r-1}\\
\sum_{l\in\cB}r_l^{\alpha}&\geq \sm_{\alpha}^*,\text{ for all }\cB\subseteq\cL\text{ s.t. }|\cB|=\alpha,~r+1\leq \alpha\leq L.  \label{gpRegion-r-2}
\end{align}

Our main result on DS-SMDC is the following theorem.
%%%--------------------------- begin of thm-group-pairwise-region -------------------------------------
\begin{theorem}\label{thm-group-pairwise-region}
	$\cR_{L,r}=\cR_{\text{gp}}^{L,r}$.
\end{theorem}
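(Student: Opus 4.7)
The plan is to prove the two set inclusions $\cR_{\text{gp}}^{L,r} \subseteq \cR_{L,r}$ (achievability) and $\cR_{L,r} \subseteq \cR_{\text{gp}}^{L,r}$ (converse) separately.

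For achievability, I would verify that the group pairwise coding scheme described immediately before the theorem realizes the rate region $\cR_{\text{gp}}^{L,r}$. Specifically, I view the construction as superposition coding over the ``pseudo-messages'' $M_1,\ldots,M_r$ (each treated as a single block to be encoded by an $(\alpha,L)$-threshold code that uses segments of the later messages as encryption keys) together with the pseudo-messages $M_{r+1}^*,\ldots,M_L^*$ of sizes $\sm_\alpha^*$ given in \eqref{def-sm*}. Security of $M_\alpha$ ($\alpha\leq r$) against any $\alpha-1$ encoders follows from \Cref{lemma-property-MDS-2}, since $M_{r+1},\ldots,M_{\eta^*-1},M_{\eta^*}^1$ are mutually independent of $M_\alpha$ and uniformly distributed, hence statistically indistinguishable from fresh random keys. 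Reconstruction at level $\alpha$ is guaranteed by the MDS-B threshold property, and in particular recovering $M_\alpha$ from any $\alpha$ encoders also delivers the segments of the later messages used as its keys, so there is no additional coded rate needed to transmit the ``key portion'' of $M_{r+1},\ldots,M_{\eta^*}^1$. The rate region achieved coincides with $\cR_{\text{gp}}^{L,r}$ by \Cref{lemma-region-ramp} applied to each pseudo-message separately.

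For the converse I would first establish two families of rate inequalities that any achievable tuple $\bm{\sR}$ must satisfy, and then invoke a Yeung-Zhang--style decomposition \cite{yeung99} to produce the required nonnegative coefficients $r_l^\alpha$. The per-encoder bound is obtained by combining the security constraint $H(M_\alpha\mid W_\cA)=H(M_\alpha)$ for $|\cA|=\alpha-1$ with the reconstruction constraint $H(M_\alpha\mid W_\cA W_l)=0$ for $l\notin\cA$ with $|\cA\cup\{l\}|=\alpha$, yielding $I(M_\alpha;W_l\mid W_\cA)=m_\alpha$ and hence $R_l\geq m_\alpha$ for every $l\in\cL$ whenever $\alpha\leq r$; this forces the allocation $r_l^\alpha=\sm_\alpha$ in \eqref{gpRegion-r-1}. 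The collective bounds for $\alpha>r$ are sum-rate constraints of the form $\sum_{l\in\cB}R_l\geq \phi(\cB,\alpha)$, derived from the reconstruction requirement of $(M_1,\ldots,M_\alpha)$ on $\cB$ with $|\cB|=\alpha$, combined with the same conditional entropy identities used for the per-encoder bounds. After subtracting the per-encoder contributions already identified from the first $r$ messages, these collective bounds should match the SMDC-type constraints \eqref{gpRegion-r-2} for the pseudo-messages $M_\alpha^*$.

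The main obstacle is the final decomposition step: certifying that the single-encoder bounds and collective sum-rate bounds jointly imply the existence of nonnegative $r_l^\alpha$ satisfying both \eqref{gpRegion-r-1} and \eqref{gpRegion-r-2}. This is a polytope-membership argument analogous to the classical SMDC converse, but here the per-encoder bound at levels $\alpha\leq r$ couples tightly with the pseudo-message constraints at $\alpha\geq\eta^*$; in particular, the split of $M_{\eta^*}$ into $M_{\eta^*}^1$ (absorbed into the key budget) and $M_{\eta^*}^2$ (treated as an independent pseudo-message) must be justified as the extremal tight choice. I would handle this by a careful accounting of the aggregate key budget $\sum_{\alpha=1}^{r}(\alpha-1)\sm_\alpha$ against the available capacity in $M_{r+1},\ldots,M_L$, coupled with an averaging/Han-inequality argument that reduces the residual problem to a standard SMDC instance on the pseudo-messages, for which Yeung-Zhang's characterization applies directly.
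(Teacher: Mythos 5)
Your achievability argument matches the paper's: the scheme is superposition over $M_1,\ldots,M_r$ and the pseudo-messages of sizes $\sm_\alpha^*$, with security of the first $r$ messages following from \Cref{lemma-property-MDS-2}, and it realizes exactly $\cR_{\text{gp}}^{L,r}$. The converse is where all the difficulty of the theorem lives, and your proposal stops at a statement of intent precisely there. A first concrete problem: the per-encoder bounds $R_l\geq m_\alpha$, derived separately for each $\alpha\leq r$, do not stack --- they do not imply $R_l\geq\sum_{\alpha\leq r}m_\alpha$, and a fortiori they do not "force the allocation $r_l^\alpha=\sm_\alpha$" inside a joint decomposition satisfying \eqref{gpRegion-r-1} and \eqref{gpRegion-r-2} simultaneously. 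To accumulate the entropies one must extract $H(M_1),H(M_2),\ldots,H(M_r)$ successively under nested conditioning on $M_{1:\alpha}$, invoking the security constraint at each level, and the residual conditional-entropy terms must be left in exactly the form needed by the subsequent non-secure iterations.

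The second and decisive gap is the step you yourself flag as "the main obstacle": crediting the key budget $\sum_{\alpha=1}^{r}(\alpha-1)\sm_\alpha$ against $M_{r+1},\ldots,M_{\eta^*}$ so that the term $f_{\eta}(\bm{\lambda})\bigl[\sum_{\alpha=r+1}^{\eta}\sm_\alpha-\sum_{\alpha=1}^{r}(\alpha-1)\sm_\alpha\bigr]$ emerges. This is not resolved by "a careful accounting coupled with an averaging/Han-inequality argument"; it is the paper's main technical contribution. The paper works in the dual: it first shows $\cR_{\text{gp}}^{L,r}$ equals the set of $\bm{\sR}$ with $\bm{\lambda}\cdot\bm{\sR}\geq g_{\eta^*}(\bm{\lambda})$ for all $\bm{\lambda}$ (\Cref{thm-group-pairwise-region-alternative}), reduces via \Cref{lemma-eta*} to proving $\bm{\lambda}\cdot\bm{\sR}\geq g_{\eta}(\bm{\lambda})$ for every $\eta$, and then establishes that inequality by an explicit closed-form coefficient design: the subtraction/residue vectors $\bm{\gamma}^{(\alpha)}$ and $\bm{\lambda}^{(\alpha)}$, their partitions $\gamma_i^{A_\alpha}$, and a one-to-one pairing of the sets $B_j$ with the resolution sets $D_k$ such that $\{c(B_j,D_k)\}$ is verified to be an optimal $\eta$-resolution for $\bm{\lambda}$ (\Cref{lemma-check-resolution-lambda}). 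Only this construction lets the $r$ secure-message iterations (\Cref{lemma-converse-iteration}) hand off cleanly to the Yeung--Zhang $\alpha$-resolution machinery. None of it is present in your proposal, so the converse inclusion $\cR_{L,r}\subseteq\cR_{\text{gp}}^{L,r}$ remains unproved.
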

\begin{proof}
	%	The achievability is proved by a group pairwise code in \Cref{section-DS-group-sup}. 
	The achievability is immediate from the group pairwise coding scheme. 
	The converse is proved through a sophisticated iteration of information inequalities, which can be found in \Cref{section-DS-proof}. 
\end{proof}
%%%--------------------------- end of thm-group-pairwise-region -------------------------------------
\begin{remark}
	From the group pairwise code design and the converse proof in \Cref{section-DS-proof}, 
	we see that both the group pairwise coding scheme and the converse are compatible with $r=1$ and $r=L$. 
	Nevertheless, in order to emphasize the specificity of the case $r=L$ and to distinguish superposition and group pairwise joint coding, 
	we discuss the optimality for $r=L$ separately in the following.
\end{remark}

\subsubsection{Optimality of Superposition Coding for $(L,L)$-DS-SMDC}
For $r=L$, all the messages are protected. 
We separately encode the $L$ independent messages, where each $M_\alpha$ is encoded using an $(\alpha,L)$ threshold secret sharing scheme. 
The induced superposition rate region $\cR_{\text{sup}}^L$ can be obtained from \eqref{sup-region-R} and \eqref{sup-region-r} by letting $\alpha-N_\alpha=1$ for all $\alpha\in\cL$. To be specific, $\cR_{\text{sup}}^L$ is the set of nonnegative rate tuples $\bm{\sR}$ such that 
\begin{equation}
\sR_l=\sum_{\alpha=1}^{L}r_l^{\alpha}, \text{ for }l\in\cL  \label{R_sup-R}
\end{equation}
where $r_l^{\alpha}\geq 0$, and 
\begin{equation}
r_l^{\alpha}\geq \sm_{\alpha}, \text{ for }1\leq \alpha\leq L.  \label{R_sup-r}
\end{equation}
It is easy to eliminate $r_l^{\alpha}~(l,\alpha\in\cL)$ and obtain the following equivalent characterization of the superposition region,
\begin{equation}
\cR_{\text{sup}}^L=\{\bm{\sR}: \sR_l\geq \sum_{\alpha=1}^{L}\sm_{\alpha}, \text{ for all }l\in\cL\}.  \label{R_sup}
\end{equation}
The following corollary of \Cref{thm-group-pairwise-region} states that superposition coding is optimal for the $(L,L)$ DS-SMDC problem.
\begin{corollary}\label{thm-LL}
	$\cR_{L,L}=\cR_{\text{sup}}^L$. 
\end{corollary}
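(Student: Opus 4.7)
\textbf{Proof plan for Corollary \ref{thm-LL}.} The plan is to obtain the corollary as a direct specialization of Theorem \ref{thm-group-pairwise-region} to the boundary case $r=L$, so the only work is to verify that the region $\cR_{\text{gp}}^{L,L}$ collapses exactly to the superposition region $\cR_{\text{sup}}^L$ given in \eqref{R_sup}. Both achievability and converse are then inherited from the already-stated theorem, so no additional information-theoretic argument is needed.

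First I would track the auxiliary parameters in the group pairwise construction when $r = L$. The sum $\sum_{\alpha=r+1}^L m_\alpha$ is empty, so from \eqref{def-nonnegative-function} the auxiliary message size is $m_{L+1} = \sum_{\alpha=1}^L (\alpha-1) m_\alpha$, and the defining inequality \eqref{opt-eta} forces $\eta^* = L+1$ (with $M_{\eta^*}^1 = M_{L+1}$ and $M_{\eta^*}^2$ vanishing). Consequently, in the definition \eqref{def-sm*} of the pseudo-message sizes $\sm_\alpha^*$, the index range $r+1 \le \alpha \le L$ is empty, and so is the range $\eta^*+1 \le \alpha \le L$. Hence none of the constraints \eqref{gpRegion-r-2} appear.

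Next I would simplify the remaining constraints. Only the conditions \eqref{gpRegion-R} and \eqref{gpRegion-r-1} survive, giving $\sR_l = \sum_{\alpha=1}^{L} r_l^\alpha$ with $r_l^\alpha \ge \sm_\alpha$ for every $l \in \cL$ and every $1 \le \alpha \le L$. Eliminating the auxiliary variables $r_l^\alpha$ exactly as done for \eqref{R_sup-R}–\eqref{R_sup-r} yields $\sR_l \ge \sum_{\alpha=1}^L \sm_\alpha$ for all $l$, which is precisely \eqref{R_sup}. Therefore $\cR_{\text{gp}}^{L,L} = \cR_{\text{sup}}^L$, and invoking Theorem \ref{thm-group-pairwise-region} gives $\cR_{L,L} = \cR_{\text{sup}}^L$.

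There is essentially no hard step: the argument is bookkeeping in the definitions, and the real content is already inside Theorem \ref{thm-group-pairwise-region}. The only point that deserves a line of care is the degenerate behavior at $\eta^* = L+1$, where the ``joint'' half of the group pairwise scheme becomes trivial (no non-key message is encoded jointly with $M_1,\dots,M_L$), making the construction coincide with plain superposition using $(\alpha,L)$-threshold secret sharing for each $M_\alpha$. This is consistent with the remark after Theorem \ref{thm-group-pairwise-region} noting compatibility with $r = L$.
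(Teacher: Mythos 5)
Your proposal is correct and takes essentially the same route as the paper: both derive the corollary by specializing \Cref{thm-group-pairwise-region} to $r=L$ and observing that $\cR_{\text{gp}}^{L,L}$ collapses to $\cR_{\text{sup}}^L$ by comparing \eqref{gpRegion-R}--\eqref{gpRegion-r-2} with \eqref{R_sup-R}--\eqref{R_sup-r}. Your version merely spells out the bookkeeping ($\eta^*=L+1$, empty index ranges) that the paper leaves implicit.
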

\begin{proof}
	The proof of the converse part is straightforward, so we omit the details and derive the conclusion directly from \Cref{thm-group-pairwise-region}. It is easily seen by comparing \eqref{gpRegion-R}-\eqref{gpRegion-r-2} and \eqref{R_sup-R}-\eqref{R_sup-r} that $\cR_{\text{gp}}^{L,r}$ reduces to $\cR_{\text{sup}}^L$ for $r=L$. Thus, by \Cref{thm-group-pairwise-region}, we have $\cR_{L,L}=\cR_{\text{gp}}^{L,r}=\cR_{\text{sup}}^L$. 
\end{proof}

%%=======================================================
\section{Achievability of \Cref{thm-sup-opt-condition}: Joint Coding Strategies}\label{section-joint-strategy}
In order to prove the necessity part of \Cref{thm-sup-opt-condition}, we instead prove the sufficiency part of Theorem~1', in the two separate cases given in \eqref{not-opt-condition-1} and \eqref{not-opt-condition-2}.

\subsection{Low Security Level at Higher Diversity Level}\label{section-scheme-alpha-key-beta}
In this section, we provide a joint coding strategy for the case that Condition 1 in \eqref{not-opt-condition-1} holds which provides rate saving, compared to superposition coding. 
%The suboptimality of superposition coding then follows by superposing this joint coding strategy and separate coding for all the other messages. 
We first discuss a motivating example to illustrate the key insight on how such rate saving is obtained. 

%%-------------------------- BEGIN Example -------------------------------------------
\begin{example}\label{example-2-3}
	Let $L=3, (\alpha, \beta)=(2,3), (m_2,m_3)=(2,2), (N_2,N_3)=(0,1)$, and $p=5$. Let $Z_3$ be an independent random key uniformly chosen from $\bF_p$. 
	Let the two messages be encoded with generator matrices constructed using MDS-A, which induce the coded symbols as shown in \Cref{table-example-2-3=1} through superposition. The important insight is that the coded message of $M_2$ can be used as the secret key to encode $M_3$, which reduces the coding rate.  More precisely, we replace $Z_3$ by $Y_2^1=Z_2^1+Z_2^2$ to serve as the key for $M_3$. 
	The coded symbols for this joint coding strategy are shown in \Cref{table-example-2-3=2}. By comparing the two tables, it is seen that the sum rate is reduced since the coded symbol $Z_3$ is eliminated. 
	
	%%----------------------- Table-strategy-example-2-3 ---------------------------------
	\begin{table}[tb]
		\centering
		\def\arraystretch{1.2}%  1 is the default, change whatever you need
		\caption{Coding strategy for \Cref{example-2-3}}
		\begin{subtable}{1.0\linewidth}
			\centering
			\begin{tabular}{|c|c|c|c|}\hline
				\diagbox[dir=SE]{~}{}&$W_1$&$W_2$&$W_3$ \\ \hline
				\kern-0.5em$\alpha=2$\kern-0.5em& \kern-0.5em $Y_2^1= M_2^1+M_2^2$ \kern-0.5em& $2M_2^1+M_2^2$& $M_2^1+2M_2^2$  \\ \hline
				\kern-0.5em$\beta=3$\kern-0.5em& {\color{red}$Z_3$}& \kern-0.5em $M_3^1+2M_3^2+{\color{red}Z_3}$ \kern-0.5em& \kern-0.5em $2M_3^1+M_3^2+{\color{red}Z_3}$ \kern-0.5em  \\ \hline
			\end{tabular}
			\caption{Superposition coding strategy}
			\label{table-example-2-3=1}
		\end{subtable}%
		
		\begin{subtable}{1.0\linewidth}
			\centering
			\begin{tabular}{|c|c|c|c|}\hline
				\diagbox[dir=SE]{~}{}&$W_1$&$W_2$&$W_3$  \\ \hline
				\kern-0.5em$\alpha=2$\kern-0.5em& \multirow{2}*{\kern-0.5em $Y_2^1= M_2^1+M_2^2$ \kern-0.5em}&  $2M_2^1+M_2^2$& $M_2^1+2M_2^2$ \\ \cline{1-1}\cline{3-4}
				\kern-0.5em$\beta=3$\kern-0.5em& & \kern-0.5em $M_3^1+2M_3^2+{\color{red}Y_2^1}$ \kern-0.5em& \kern-0.5em $2M_3^1+M_3^2+{\color{red}Y_2^1}$ \kern-0.5em  \\ \hline
			\end{tabular}
			\caption{Joint coding strategy}
			\label{table-example-2-3=2}
		\end{subtable}% 
		\label{table-example-2-3}
	\end{table}

	The reconstruction requirements of both $M_2$ and $M_3$ are straightforward. 
	There is no security requirement on $M_2$. For $M_3$, it is seen that any one coded symbol $W_l$ reveals no information about $M_3$. 
	For instance, eavesdropping $W_2$ gives
	%	\begin{equation}
	%	H(M_3|M_3^1+2M_3^2+Y_2^1,M_2^1+Y_2^1)=H(M_3).
	%	\end{equation}
	\begin{align}
	&H(M_3|W_2)=H(M_3|M_3^1+2M_3^2+Y_2^1,M_2^1+Y_2^1)   \\
	&=H(M_3,M_3^1+2M_3^2+Y_2^1|M_2^1+Y_2^1)  \nonumber \\
	&\quad -H(M_3^1+2M_3^2+Y_2^1|M_2^1+Y_2^1)   \\
	&=H(M_3,M_3^1+2M_3^2+Y_2^1) -H(M_3^1+2M_3^2+Y_2^1)  \label{secure-M3} \\
	&=H(M_3|M_3^1+2M_3^2+Y_2^1) \\
	&=H(M_3),
	\end{align}
	where \eqref{secure-M3} follows from that $M_2^1$ is independent of $M_3,M_3^1+2M_3^2+Y_2^1,Y_2^1$ and $M_2^1$ is independent of $M_3^1+2M_3^2+Y_2^1,Y_2^1$. 
\end{example}
%%-------------------------- END Example -------------------------------------------

\noindent{\textbf{Coding strategy for general parameters:}}\\
First encode separately $M_{\alpha}$ and $M_{\beta}$ with generator matrices $G_\alpha$ and $G_\beta$ using MDS-A in \Cref{section-MDS}. 
The coded symbols for superposition coding strategy are as given in \Cref{table-strategy-a-key-b=1}. 
The joint coding strategy we propose is then to replace the first $\theta=\min\{N_\beta,\alpha-N_\beta\}$ encryption key symbols $(Z_\beta^1~Z_\beta^2~\cdots~ Z_\beta^{\theta})$ by the coded symbols
$(Y_\alpha^1,Y_\alpha^2,\cdots,Y_\alpha^{\theta})$. The parameter $\theta$ is strictly positive, which is implied by Condition 1 in \eqref{not-opt-condition-1}. 
Denote the corresponding codewords for $M_\beta$ thus obtained as $(Y_\beta^{1*},Y_\beta^{2*},\cdots,Y_\beta^{L*})$. 
The joint coding strategy of $M_\alpha$ and $M_\beta$ is illustrated in \Cref{table-strategy-a-key-b=2} and can be described as follows: 
\begin{equation}
W_i=
\begin{cases}
Y_\alpha^i, &\text{ for }1\leq i\leq \theta\\
[Y_\alpha^i,Y_\beta^{i*}], &\text{ for } \theta< i\leq L.
\end{cases} \label{joint-strategy-1}
\end{equation}

%%----------------------- Table-strategy-example-alpha-beta ---------------------------------
\begin{table}[tb]
	\centering
	\def\arraystretch{1.2}%  1 is the default, change whatever you need
	\caption{Coding strategy to replace encryption keys for $M_\beta$}
	\begin{subtable}{1.0\linewidth}
		\centering
		\begin{tabular}{|c|c|c|c|c|c|c|c|}\hline
			\diagbox[dir=SE]{~}{}&$W_1$&$W_2$&$\cdots$&$W_\theta$&$W_{\theta+1}$&$\cdots$&$W_L$  \\ \hline
			$\alpha$&$Y_\alpha^1$&$Y_\alpha^2$&$\cdots$&$Y_\alpha^\theta$&$Y_\alpha^{\theta+1}$&$\cdots$&$Y_\alpha^L$  \\ \hline
			$\beta$&{\color{red}$Y_\beta^1$}&{\color{red}$Y_\beta^2$}&{\color{red}$\cdots$}&{\color{red}$Y_\beta^\theta$}&{\color{red}$Y_\beta^{\theta+1}$}&{\color{red}$\cdots$} &{\color{red}$Y_\beta^L$}  \\ \hline
		\end{tabular}
		\caption{Superposition coding strategy}
		\label{table-strategy-a-key-b=1}
	\end{subtable}%
	
	\begin{subtable}{1.0\linewidth}
		\centering
		\begin{tabular}{|c|c|c|c|c|c|c|c|}\hline
			\diagbox[dir=SE]{~}{}&$W_1$&$W_2$&$\cdots$&$W_\theta$&$W_{\theta+1}$&$\cdots$&$W_L$  \\ \hline
			$\alpha$&\multirow{2}*{$Y_\alpha^1$}& \multirow{2}*{$Y_\alpha^2$}& \multirow{2}*{$\cdots$}& \multirow{2}*{$Y_\alpha^\theta$}& $Y_\alpha^{\theta+1}$& $\cdots$& $Y_\alpha^L$  \\ \cline{1-1}\cline{6-8}
			$\beta$&&&&&{\color{red}$Y_\beta^{(\theta+1)*}$}&{\color{red}$\cdots$} &{\color{red}$Y_\beta^{L*}$}  \\ \hline
		\end{tabular}
		\caption{Joint coding strategy}
		\label{table-strategy-a-key-b=2}
	\end{subtable}% 
\end{table}

By comparing \Cref{table-strategy-a-key-b=1} and \Cref{table-strategy-a-key-b=2}, 
it can be seen that the coding rate is reduced compared to superposition coding 
because $(Y_\beta^1,Y_\beta^2,\cdots,Y_\beta^{\theta})$ are removed from the codewords, while the rates for all the others are unchanged. 
Next, we verify the reconstruction and security constraints for the two messages. 

\vspace{0.2cm}
\noindent{\textit{Reconstruction: }} The verification of the reconstruction requirements of both $M_\alpha$ and $M_\beta$ is straightforward. 

\vspace{0.2cm}
\noindent{\textit{Security: }} We consider the security requirements for the two levels separately.
\begin{enumerate}%[i.]
	
	\item Assume we can access $N_\alpha$ coded symbols $W_\cB, |\cB|=N_\alpha$. 
	Partition $\cB$ into $\cB_1$ and $\cB_2$ such that $\cB_1\subseteq\{1,2,\cdots,\theta\}$ and $\cB_2\subseteq\{\theta+1,\cdots,L\}$. 
	Notice that
	\begin{align}
	&H(Y_\beta^{*\cB_2}|M_\alpha,Y_\alpha^{\cB_1}Y_\alpha^{\cB_2})   \nonumber \\
	&\geq H(Y_\beta^{*\cB_2}|M_\alpha,Y_\alpha^{1:\theta},Y_\alpha^{\cB_2})    \\
	&=H(Y_\beta^{*\cB_2}|Y_\alpha^{1:\theta})    \\
	&=H(Y_\beta^{*\cB_2}),  \label{check-secure-indep-3}
	\end{align} 
	where the second equality follows from the fact that conditioning does not increase entropy, and the last equality follows from \Cref{lemma-property-0} because 
	\begin{align}
	|\cB_2|+\theta&\leq N_\alpha+\theta  \\
	&= N_\alpha+\min\{\alpha-N_\beta,N_\beta\}  \\
	&\leq \alpha  \\
	&<\beta,
	\end{align}
	where the second inequality follows from $N_\alpha<N_\beta$ which is part of Condition 1 in \eqref{not-opt-condition-1}. 
	Since conditioning does not increase entropy, in light of \eqref{check-secure-indep-3}, we obtain  
	\begin{align}
	&H(Y_\beta^{*\cB_2}|M_\alpha,Y_\alpha^{\cB_1}Y_\alpha^{\cB_2})=H(Y_\beta^{*\cB_2}).  \label{check-secure-indep}
	\end{align} 
	It follows that
	\begin{align}
	&I(W_\cB;M_\alpha)  \nonumber \\
	&=I(W_{\cB_1}W_{\cB_2};M_\alpha) \notag\\
	&=I(Y_\alpha^{\cB_1}~Y_\alpha^{\cB_2}Y_\beta^{*\cB_2};M_\alpha)  \\
	&=I(Y_\alpha^{\cB_1}Y_\alpha^{\cB_2};M_\alpha)+I(Y_\beta^{*\cB_2};M_\alpha|Y_\alpha^{\cB_1}Y_\alpha^{\cB_2})  \\
	&=I(Y_\beta^{*\cB_2};M_\alpha|Y_\alpha^{\cB_1}Y_\alpha^{\cB_2}) \\
	&=0,   \label{check-secure-alpha5}
	\end{align}
	where the last but one equality follows from \Cref{lemma-property-MDS-1} and the fact that $|\cB_1|+|\cB_2|=N_\alpha$, 
	and \eqref{check-secure-alpha5} follows from \eqref{check-secure-indep}. 
	Thus indeed $W_\cB$ reveals nothing about $M_\alpha$. 
	
	\item Assume we can access $N_\beta$ coded symbols $W_\cB, |\cB|=N_\beta$. 
	Partition $\cB$ into $\cB_1$ and $\cB_2$ such that $\cB_1\subseteq\{1,2,\cdots,\theta\}$ and $\cB_2\subseteq\{\theta+1,\cdots,L\}$. 
	We first consider
	\begin{align}
	&H(Y_\alpha^{\cB_2}|Y_\alpha^{\cB_1} Y_\beta^{*\cB_2})  \nonumber \\
	&\geq H(Y_\alpha^{\cB_2}|Y_\alpha^{\cB_1} Y_\beta^{*\cB_2}M_\beta)  \label{check-secure-entropy1-1} \\
	&\geq H(Y_\alpha^{\cB_2}|Y_\alpha^1\cdots Y_\alpha^{\theta}, Z_\beta^{\theta+1}\cdots Z_\beta^{N_\beta}, M_{\beta} Y_\beta^{*\cB_2})  \label{check-secure-entropy1-2} \\
	&=H(Y_\alpha^{\cB_2}|Y_\alpha^1\cdots Y_\alpha^{\theta}, Z_\beta^{\theta+1}\cdots Z_\beta^{N_\beta}, M_{\beta})  \label{check-secure-entropy1-3} \\ 
	&=H(Y_\alpha^{\cB_2},Y_\alpha^1\cdots Y_\alpha^{\theta}|Z_\beta^{\theta+1}\cdots Z_\beta^{N_\beta}, M_{\beta})  \nonumber \\
	&  -H(Y_\alpha^1\cdots Y_\alpha^{\theta}|Z_\beta^{\theta+1}\cdots Z_\beta^{N_\beta}, M_{\beta})   \\ 
	&=H(Y_\alpha^{\cB_2},Y_\alpha^1\cdots Y_\alpha^{\theta})-H(Y_\alpha^1\cdots Y_\alpha^{\theta})   \label{check-secure-entropy1-5} \\
	&=H(Y_\alpha^{\cB_2}), \label{check-secure-entropy1-6}
	\end{align}
	where both \eqref{check-secure-entropy1-1} and \eqref{check-secure-entropy1-2} follow from the fact that conditioning does not increase entropy, 
	\eqref{check-secure-entropy1-3} follows from that $Y_\beta^{*\cB_2}$ is a function of 
	$(Y_\alpha^1\cdots Y_\alpha^{\theta}, Z_\beta^{\theta+1} \cdots Z_\beta^{N_\beta},M_{\beta})$, 
	\eqref{check-secure-entropy1-5} follows from that $(Z_\beta^{\theta+1} \cdots Z_\beta^{N_\beta}, M_{\beta})$ 
	are independent of $(Y_\alpha^{\cB_2},Y_\alpha^1\cdots Y_\alpha^{\theta})$, 
	and the last equality follows from \Cref{lemma-property-0}, 
	since $|\cB_2|+\theta\leq \alpha$ which is induced by $\theta\leq \alpha-N_\beta$. 	
	Since conditioning does not increase entropy, in light of \eqref{check-secure-entropy1-6}, we obtain  
	\begin{align}
	&H(Y_\alpha^{\cB_2}|Y_\alpha^{\cB_1} Y_\beta^{*\cB_2}M_\beta)\notag\\
	&=H(Y_\alpha^{\cB_2}|Y_\alpha^{\cB_1} Y_\beta^{*\cB_2})=H(Y_\alpha^{\cB_2}).   \label{check-secure-entropy1}
	\end{align}
	Then we have 
	\begin{align}
	&I(W_\cB;M_\beta)  =I(W_{\cB_1}W_{\cB_2};M_\beta)   \\
	&=I(Y_\alpha^{\cB_1}~Y_\alpha^{\cB_2}Y_\beta^{*\cB_2};M_\beta)  \\
	&=I(Y_\alpha^{\cB_1}Y_\beta^{*\cB_2};M_\beta)+I(Y_\alpha^{\cB_2};M_\beta|Y_\alpha^{\cB_1}Y_\beta^{*\cB_2})  \\
	&=I(Y_\alpha^{\cB_2};M_\beta|Y_\alpha^{\cB_1}Y_\beta^{*\cB_2})  \label{check-secure-beta4} \\
	&=H(Y_\alpha^{\cB_2}|Y_\alpha^{\cB_1} Y_\beta^{*\cB_2})-H(Y_\alpha^{\cB_2}|Y_\alpha^{\cB_1}Y_\beta^{*\cB_2}M_\beta)   \\
	&=H(Y_\alpha^{\cB_2})-H(Y_\alpha^{\cB_2})\label{check-secure-beta6} \\
	&=0,
	\end{align}
	where \eqref{check-secure-beta4} follows from \Cref{lemma-property-MDS-1} and the fact that $|\cB_1|+|\cB_2|=N_\beta$, 
	and \eqref{check-secure-beta6} follows from \eqref{check-secure-entropy1}. 
	Thus we obtain that $W_\cB$ reveals nothing about $M_\beta$. 
\end{enumerate}

%%=======================================================
\subsection{Reversed Security Level}\label{section-scheme-beta-key-alpha}
We next provide a joint coding strategy for the case that Condition 2 in \eqref{not-opt-condition-2} holds. %We again start with a motivating example. 
%-------------------------- BEGIN Example -------------------------------------------
\begin{example}\label{example-4-3}
	Let $L=4, (\alpha, \beta)=(3,4), (m_3,m_4)=(1,1), (N_3,N_4)=(2,1)$, and $p=11$. 
	We use generator matrix $G_3$ generated using MDS-B to encode $M_3$ separately with encryption keys $Z_1,Z_2$, as given in \eqref{code-example-4-3-sup}. 
	The joint coding strategy is simply to use $M_4$ to replace $Z_1$ as secret keys to encrypt $M_3$, as given in \eqref{code-example-4-3-joint}. 
	%	The key insight is that $M_4$ can be used as the secret key to encode $M_3$, which reduces the coding rate. 
	%	Let the two messages be encoded with generator matrices constructed using MDS-B, which induce the coded symbols as shown in \Cref{table-example-4-3=1} through superposition. 
	%	For joint coding, we replace $Z_3^1$ by $M_4$ to serve as the key for $M_3$. 
	%	The coded symbols for this joint coding strategy are shown in \Cref{table-example-4-3=2}. 
	\begin{align}
	&M_3+2{\color{red}Z_1}+9Z_2, ~9M_3+8{\color{red}Z_1}+6Z_2,  \nonumber \\
	&6M_3+10{\color{red}Z_1}+7Z_2, ~7M_3+9{\color{red}Z_1}+7Z_2;  \label{code-example-4-3-sup} \\
	\longrightarrow &M_3+2{\color{red}M_4}+9Z_2, ~9M_3+8{\color{red}M_4}+6Z_2, \nonumber \\
	&6M_3+10{\color{red}M_4}+7Z_2, ~7M_3+9{\color{red}M_4}+7Z_2.  \label{code-example-4-3-joint}
	\end{align}
	
	Since $M_4$ does not need to be separately encoded, rate saving is obtained. 
	The reconstruction and security requirements of $M_3$ are immediate. 
	The reconstruction requirement of $M_4$ is straightforward since everything is recovered with any three coded symbols. 
	The security requirement of $M_4$ can be easily seen that any one coded symbol reveals nothing about $M_4$.

\end{example}
%%-------------------------- END Example -------------------------------------------

\noindent{\textbf{Coding strategy for general parameters:}}

Next, we present the general coding strategy that $M_{\beta}$ performs as secret keys for $M_{\alpha}$ so that we can reduce the coding rates. Let $G_{\alpha}$ be a generator matrix generated using MDS-B in \Cref{section-MDS}, which can be used to encode $M_\alpha$ separately with encryption keys $(Z_1,Z_2,\ldots,Z_{N_\alpha})$. The joint coding strategy is simply to use $\eta=\min\{N_\alpha,\alpha-N_\beta\}$ symbols of the message $M_\beta$ (i.e., $M_\beta^1,M_\beta^2,\cdots,M_\beta^{\eta}$) to replace the encryption keys $(Z_1~Z_2~\cdots~ Z_{\eta})$ for encrypting $M_{\alpha}$. 
The parameter $\eta$ is strictly positive, which is implied by Condition 2 in \eqref{not-opt-condition-2} as well as $\alpha>N_\alpha$. 
Denote the corresponding coded symbols for $M_\alpha$ after this replacement as $(Y_\alpha^{1*},Y_\alpha^{2*},\cdots,Y_\alpha^{L*})$. Since the $\eta$ message symbols of $M_\beta$ do not need to be separately encoded, rate saving is thus obtained. 
%The joint coding strategy of $M_\beta$ and $M_\alpha$ is illustrated in \Cref{table-strategy-a-key-b=2}. 
%By comparing \Cref{table-strategy-b-key-a=1} and \Cref{table-strategy-b-key-a=2}, 
%it is clear that the coding rate is reduced. 
Next, we verify the reconstruction and security constraints.

\vspace{0.2cm}
\noindent{\textit{Reconstruction: }} By the code construction in \Cref{section-MDS}, both the message $M_\alpha$ and the keys $M_\beta$ can be losslessly recovered from any $\alpha$ coded symbols. 
Since $\alpha<\beta$, the reconstruction requirements of both $M_\alpha$ and $M_\beta$ are satisfied immediately. 

\vspace{0.2cm}
\noindent{\textit{Security: }} The security constraint of $M_\alpha$ is straightforward, and thus let us consider $M_\beta$. For any $\cB\subseteq\cL$ such that $|\cB|=N_\beta$, let $Y_\alpha^{*\cB}=(Y_\alpha^{i*}:i\in \cB)$. 
By \Cref{lemma-property-MDS-2}, we have 
\begin{equation}
I(Y_\alpha^{*\cB};M_\beta^{1},M_\beta^2,\cdots,M_\beta^{\eta})=0,
\end{equation}
since $\eta\leq \alpha-N_\beta$.

%%=======================================================
\section{Converse of \Cref{thm-sup-opt-condition}}\label{section-conditions-converse}
To show the optimality of \Cref{thm-sup-opt-condition}, we only need to prove that under the condition in \eqref{sup-opt-condition}, the sum rate is lower bounded by \eqref{sup-sumrate}, i.e.,  
\begin{equation}
\sum_{l=1}^L\sR_l\geq \sum_{\alpha=1}^{L}\frac{L\sm_\alpha}{\alpha-N_\alpha}.   \label{sum-rate-bound}
\end{equation}
%	To prove the inequality, we first consider 
%	\begin{equation}
%	\sum_{l=1}^L(R_l+\epsilon)\geq \sum_{l=1}^LH(W_l).  \label{converse-rate-to-code}
%	\end{equation}
For any $\alpha\in\cL$, let $\bB_\alpha$ be the set of \textit{disjoint subset} pairs $(\cB_\alpha^1,\cB_\alpha^2)$ such that $\cB_\alpha^1,\cB_\alpha^2\subseteq\cL$, 
\begin{equation}
|\cB_{\alpha}^1|=\alpha-N_\alpha \text{ and } |\cB_{\alpha}^2|=N_\alpha. 
\end{equation}
For $\alpha\in\cL$, let $M_{1:\alpha}\triangleq(M_1,M_2,\cdots,M_\alpha)$. 
Define $\mu_\alpha$ by 
\begin{align}
\mu_\alpha&=\frac{L}{\alpha-N_\alpha}\frac{1}{{L\choose N_\alpha} {L-N_\alpha\choose \alpha-N_\alpha}}\sum_{(\cB_\alpha^1,\cB_\alpha^2)\in\bB_\alpha} \hspace{-1.5em}H(W_{\cB_{\alpha}^1}| W_{\cB_{\alpha}^2}M_{1:\alpha}). \label{def-mu_alpha}
\end{align}
We need the following lemma to proceed. 
\begin{lemma}\label{lemma-converse_condition-iteration}
	%		Assume for any $\alpha<\beta\in\cL$ where $\sm_\alpha\cdot \sm_\beta>0$, we have
	%		\begin{equation}
	%		\text{ either } N_\alpha<\alpha\leq N_\beta<\beta, \text{ or }N_\alpha=N_\beta=0.
	%		\end{equation}
	%		Then for any $\alpha\in\cL$, we have 
	Under the condition in \eqref{sup-opt-condition}, for any $\alpha\in\cL$, we have 
	\begin{equation}
	\sum_{l=1}^LH(W_l)\geq \sum_{j=1}^{\alpha}\frac{Lm_j}{j-N_j}+\mu_\alpha.  \label{claim}
	\end{equation}
\end{lemma}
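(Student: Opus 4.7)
The plan is to argue by induction on $\alpha$. For the base case $\alpha=1$, the formulation gives $N_1<1$ so $N_1=0$, hence $\mu_1=\sum_{l=1}^{L}H(W_l\mid M_1)$ and the claim reduces to $\sum_{l=1}^{L}I(W_l;M_1)\ge Lm_1$, which follows because each level-$1$ decoder satisfies $H(M_1\mid W_l)=0$, so each mutual information equals $H(M_1)=m_1$.

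For the inductive step it suffices to show
\[
\mu_{\alpha-1}\ \ge\ \frac{Lm_\alpha}{\alpha-N_\alpha}+\mu_\alpha.
\]
My approach is to work inside each summand of $\mu_{\alpha-1}$ by bringing $M_\alpha$ into the conditioning via the chain-rule identity
\[
H(W_{\cB^1}\mid W_{\cB^2}M_{1:\alpha-1})=H(M_\alpha\mid W_{\cB^2}M_{1:\alpha-1})+H(W_{\cB^1}\mid W_{\cB^2}M_{1:\alpha})-H(M_\alpha\mid W_{\cB^1\cup\cB^2}M_{1:\alpha-1}),
\]
for $(\cB^1,\cB^2)\in\bB_{\alpha-1}$. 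The first term I want to evaluate as $m_\alpha$ after applying the weak-security constraint $H(M_\alpha\mid W_{\cB^2})=H(M_\alpha)$, which is legitimate because the monotonicity $N_{\alpha-1}\le N_\alpha$ built into~\eqref{sup-opt-condition} guarantees $|\cB^2|=N_{\alpha-1}\le N_\alpha$, combined with mutual independence of the messages. The last term I would zero out by first symmetrically enlarging the conditioning by one encoder $l\notin\cB^1\cup\cB^2$: then $|\cB^1\cup\cB^2\cup\{l\}|=\alpha$, so Decoder-$(\cB^1\cup\cB^2\cup\{l\})$ recovers $M_\alpha$ and the residual entropy vanishes after averaging.

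The middle term $H(W_{\cB^1}\mid W_{\cB^2}M_{1:\alpha})$ must then be repackaged, after symmetric averaging over all disjoint pairs, into the shape of $\mu_\alpha$. Under~\eqref{sup-opt-condition} the consecutive pair $(N_{\alpha-1},N_\alpha)$ takes only three shapes: $(0,0)$, $(\alpha-2,\alpha-1)$, and the boundary $(0,\alpha-1)$ at the threshold $\alpha=T_s+1$. In the first two shapes the subset sizes $(\alpha-1-N_{\alpha-1},N_{\alpha-1})$ and $(\alpha-N_\alpha,N_\alpha)$ differ by moving a single encoder between $\cB^1$ and $\cB^2$, and a single conditional Han-type inequality---the monotonicity of $\binom{n}{k}^{-1}k^{-1}\sum_{|\cS|=k}H(W_\cS\mid\cdot)$ in $k$ applied to the complementary set of encoders given $M_{1:\alpha}$---converts the averaged middle terms into $\mu_\alpha$, and the explicit prefactors $L/(\alpha-N_\alpha)$ absorb the binomial weights cleanly.

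The main obstacle will be the boundary transition $(0,\alpha-1)$: here $\mu_{\alpha-1}$ is an $(\alpha-1)$-subset average conditioned on $M_{1:\alpha-1}$ alone, whereas $\mu_\alpha$ is a singleton-subset average conditioned on an $(\alpha-1)$-subset of encoders together with $M_{1:\alpha}$. A single Han step is then insufficient, and I would symmetrize in two stages---first contract $\mu_{\alpha-1}$ down to $1$-subset averages via ordinary Han applied to $W_\cL$ conditioned on $M_{1:\alpha-1}$, then push an $(\alpha-1)$-subset back into the conditioning via a conditional Han step against $M_{1:\alpha}$---while keeping both the decoding and the security constraints usable at each intermediate subset size. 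Verifying that the numerical prefactors match so that precisely $Lm_\alpha/(\alpha-N_\alpha)$ is extracted and no slack is lost is the technical heart of the argument.
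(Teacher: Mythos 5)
Your global strategy --- induction on the level, extracting $Lm_\alpha/(\alpha-N_\alpha)$ at each step, and Han's inequality to reshape the subset averages --- is the same as the paper's, and your base case is fine. The inductive step, however, performs the two key operations in the wrong order, and in your order neither boundary term of the chain-rule identity can be controlled. You apply the identity to the summands of $\mu_{\alpha-1}$, where $|\cB^2|=N_{\alpha-1}$ and $|\cB^1\cup\cB^2|=\alpha-1$. First, $H(M_\alpha\mid W_{\cB^2}M_{1:\alpha-1})=H(M_\alpha)$ does \emph{not} follow from the security constraint $H(M_\alpha\mid W_{\cB^2})=H(M_\alpha)$ together with independence of the messages: weak security protects $M_\alpha$ against $W_{\cB^2}$ alone, not against $W_{\cB^2}$ jointly with the earlier messages (a coded symbol of the form $M_1^1+M_\alpha^1$ reveals nothing about $M_\alpha$ by itself but everything once $M_1$ is conditioned on --- exactly the coupling that pairwise encoding exploits elsewhere in the paper). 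Second, the subtracted term $H(M_\alpha\mid W_{\cB^1\cup\cB^2}M_{1:\alpha-1})$ cannot be zeroed out by enlarging the conditioning: enlarging only shows the term is at least its enlarged version, i.e.\ at least $0$, which is the wrong direction for a quantity entering with a minus sign; with only $\alpha-1$ accessible encoders it is strictly positive in general, and discarding it would yield an upper rather than a lower bound on $\mu_{\alpha-1}$.

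The repair, which is what the paper's Appendix B does, is to resize first and extract second. Starting from $\mu_\zeta$ (with $\zeta$ the previous non-vanishing level), one first enlarges the conditioning encoder set from size $N_\zeta$ to size $N_\varphi$ by averaging (conditioning reduces entropy), then applies Han's inequality --- the ordinary version when $\zeta-N_\zeta\leq\varphi-N_\varphi$ and the complementary conditioning version otherwise, which is precisely your ``boundary'' transition --- to arrive at averages of $H(W_{\cB_\varphi^1}\mid W_{\cB_\varphi^2}M_{1:\zeta})$ with $|\cB_\varphi^1|=\varphi-N_\varphi$ and $|\cB_\varphi^2|=N_\varphi$. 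Only then is $H(M_\varphi)$ extracted, and at that point both of the above problems disappear: $|\cB_\varphi^2|=N_\varphi\geq\zeta$ (guaranteed by \eqref{sup-opt-condition}) makes $M_{1:\zeta}$ a deterministic function of $W_{\cB_\varphi^2}$, so the security constraint applies verbatim with the messages in the conditioning, and $|\cB_\varphi^1\cup\cB_\varphi^2|=\varphi$ makes the reconstruction term exactly zero, so the identity holds with no slack. A secondary point: the induction should step to the next \emph{non-vanishing} message rather than from $\alpha-1$ to $\alpha$, since \eqref{sup-opt-condition} imposes no relation between $N_{\alpha-1}$ and $N_\alpha$ when one of the two messages is degenerate, and your three-shape classification of the consecutive pair then breaks down.
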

\begin{proof}
	For $\alpha\leq T_s$, \eqref{claim} is simply the inequality (27) in \cite{yeung97}. 
	For $\alpha\geq T_s$, we prove the lemma by induction on $\alpha$. 
	Similar to the proof of Theorem 2 in \cite{yeung97} where Han's inequality plays a key role, we apply Han's inequality and its complementary conditioning version. The details of the proof can be found in Appendix~\ref{proof_lemma-converse_condition-iteration}.
\end{proof}
For $\alpha=L$, in light of \eqref{claim}, we have %%\textcolor{red}{Here the normalization of $\sR_i$ vs. $R_l$ seems an issue.}
\begin{equation}
\sum_{l=1}^LR_l=\sum_{l=1}^LH(W_l)\geq \sum_{\alpha=1}^{L}\frac{L m_\alpha}{\alpha-N_\alpha}+\mu_L\geq \sum_{\alpha=1}^{L}\frac{L m_\alpha}{\alpha-N_\alpha},
\end{equation}
from which we can obtain, by normalization, the sum rate bound \eqref{sum-rate-bound}. 
\begin{remark}
	It is clear that superposition coding must induce $\mu_L=0$ under the condition in \eqref{sup-opt-condition}.
	Since the messages are encoded separately, we can indeed verify that for any $\alpha\in\cL$, 
	\begin{equation}
	H(Y_{\alpha}^{\cB_{L}^1}|Y_{\alpha}^{\cB_{L}^2}M_{\alpha})=0, \label{vanish-check}
	\end{equation}
	where $Y_{\alpha}^1,Y_{\alpha}^2,\cdots,Y_{\alpha}^L$ are coded symbols of $M_{\alpha}$ and $Y_{\alpha}^{\cB}\triangleq (Y_{\alpha}^i:i\in\cB)$ for any $\cB\subseteq\cL$. 
	To see this, observe that if the weakly secure SMDC problem reduces to classical SMDC, \eqref{vanish-check} is true immediately.
	Otherwise, by \eqref{sup-opt-condition}, we have $N_L\geq N_\alpha$ for any $\alpha\in\cL$. 
	Since we use an $(N_\alpha,\alpha,L)$ ramp secret sharing code to encode $M_\alpha$, any $\alpha$ symbols from the set $\{M_\alpha^1,M_\alpha^2,\cdots,M_\alpha^{\alpha-N_\alpha}, Y_{\alpha}^1,Y_{\alpha}^2,\cdots,Y_{\alpha}^L\}$ can completely recover the whole set. 
	Thus, $(Y_{\alpha}^{\cB_{L}^2},M_{\alpha})$ provide complete information about $Y_{\alpha}^{\cB_{L}^1}$, which verifies \eqref{vanish-check}. 
\end{remark}

%%============================================================
\section{Converse Proof of \Cref{thm-group-pairwise-region}}\label{section-DS-proof}
Before proving \Cref{thm-group-pairwise-region}, we introduce some terminologies and notations in \cite{yeung99}. 
Let $\bm{\lambda}=(\lambda_1,\lambda_2,\cdots,\lambda_L)$ and
\begin{equation}
\mathbb{R}_+^L=\{\bm{\lambda}:~\bm{\lambda}\neq\bm{0} \text{ and } \lambda_i\in\mathbb{R},\lambda_i\geq 0\text{ for }i\in\cL\}.   \label{def-R-set}
\end{equation}
Let $\Omega_L^{\alpha}=\left\{\bm{v}\in\{0,1\}^L:|\bm{v}|=\alpha\right\}$, 
where $|\bm{v}|$ is the Hamming weight of a vector $\bm{v}=(v_1,v_2,\cdots,v_L)$. 
For any $\bm{v}\in\Omega_L^{\alpha}$, let $c_{\alpha}(\bm{v})$ be any nonnegative real number. 
For any $\bm{\lambda}\in\mathbb{R}_+^L$ and $\alpha\in\cL$, let $f_{\alpha}(\bm{\lambda})$ be the optimal solution to the following optimization problem: 
\begin{eqnarray}
f_{\alpha}(\bm{\lambda})\triangleq&\max&\sum_{\bm{v}\in\Omega_L^{\alpha}}c_{\alpha}(\bm{v})  \label{optimization-1} \\
&\text{s.t.}&\sum_{\bm{v}\in\Omega_L^{\alpha}}c_{\alpha}(\bm{v})\cdot \bm{v}\leq \bm{\lambda}   \label{optimization-2}\\
&&c_{\alpha}(\bm{v})\geq 0,\forall \bm{v}\in\Omega_L^{\alpha}.  \label{optimization-3}
\end{eqnarray}
A set $\left\{c_{\alpha}(\bm{v}): \bm{v}\in\Omega_L^{\alpha}\right\}$ is called 
an $\alpha$-\textit{resolution} for $\bm{\lambda}$ if \eqref{optimization-2} and \eqref{optimization-3} are satisfied 
and it will be abbreviated as $\{c_{\alpha}(\bm{v})\}$ if there is no ambiguity. 
Furthermore, an $\alpha$-resolution is called \textit{optimal} if it achieves the optimal value $f_{\alpha}(\bm{\lambda})$. 
In the following proof, we will take advantage of some lemmas and theorems from \cite{yeung99} and \cite{guo-yeung-SMDC-IT20}, which are enclosed in Appendix~\ref{referenced-lemmas/theorems} for convenience. 

To prove the converse of \Cref{thm-group-pairwise-region}, we follow the idea of Theorem 2 in \cite{yeung99}, 
i.e., we provide an alternative characterization of the group pairwise region $\cR_{\text{gp}}^{L,r}$. 
For simplicity, let $f_{L+1}(\bm{\lambda})=0$ for all $\bm{\lambda}\in\mathbb{R}_+^L$. For $\eta\in\{r+1,r+2,\cdots,L+1\}$, let
\begin{align}
g_{\eta}(\bm{\lambda})&=\sum_{\alpha=1}^{r}f_1(\bm{\lambda})\sm_\alpha+ \sum_{\alpha=\eta+1}^{L}f_{\alpha}(\bm{\lambda})\sm_\alpha \nonumber \\
&\quad +f_{\eta}(\bm{\lambda}) \left[\sum_{\alpha=r+1}^{\eta}\sm_\alpha-\sum_{\alpha=1}^{r}(\alpha-1)\sm_\alpha\right].   \label{def-g}
\end{align}
In particular, for $\eta=\eta^*$ which is defined by \eqref{opt-eta}, we have
\begin{equation}
g_{\eta^*}(\bm{\lambda})=\sum_{\alpha=1}^{r}f_1(\bm{\lambda}) \sm_\alpha+\sum_{\alpha=r+1}^{L}f_{\alpha}(\bm{\lambda}) \sm_\alpha^*,
\end{equation}
where $\sm_\alpha^*$ is defined in \eqref{def-sm*}. 
From the group pairwise coding scheme in \reffig{fig_group-pairwise-scheme}, we have the following intuitions on the coding rates. 
\begin{enumerate}[i.]
	\item Superposition of $M_1,M_2,\cdots,M_{r}$ induces the rate 
	\begin{equation}
	\sum_{l=1}^L\lambda_l\sR_l=\sum_{\alpha=1}^{r}f_1(\bm{\lambda}) \sm_\alpha.  \label{subrate-sup-1}
	\end{equation}
	
	\item The messages $M_{r+1}, M_{r+2}, \cdots, M_{\eta^*-1}, M_{\eta^*}^1$ perform as keys for $M_1,M_2,\cdots,M_{r}$. 
	Thus, we do not need extra rates to encode them beyond the rate given in \eqref{subrate-sup-1}.
	
	\item The other messages $M_{\eta^*}^2, M_{\eta^*+1},\cdots, M_{L}$ will be encoded in the same way as in classical SMDC, i.e., superposition coding. 
	The coding rate is characterized in~\cite{yeung99} using the technique of $\alpha$-resolution, which is 
	\begin{align}
	\sum_{l=1}^L\lambda_l\sR_l&=f_{\eta^*}(\bm{\lambda})\left[\sum_{\alpha=r+1}^{\eta^*}\sm_\alpha-\sum_{\alpha=1}^{r}(\alpha-1)\sm_\alpha\right]  \nonumber \\
	&\quad +\sum_{\alpha=\eta^*+1}^{L}f_{\alpha}(\bm{\lambda})\sm_\alpha. \label{subrate-sup-2}
	\end{align}
\end{enumerate}
Summing up the rates in \eqref{subrate-sup-1} and \eqref{subrate-sup-2}, 
we obtain $g_{\eta^*}(\bm{\lambda})$ which is the total rate of group pairwise coding. 
Let $\cR_{L,r}^*$ be the set of all $\bm{\sR}\geq \bm{0}$ such that 
\begin{equation}
\bm{\lambda}\cdot\bm{\sR}\geq g_{\eta^*}(\bm{\lambda}). \label{equal-region}
\end{equation}
In particular, for $\bm{\lambda}=(100\cdots)$ and $\eta^*=L+1$, 
the constraint in \eqref{equal-region} becomes the single rate bound 
\begin{equation}
\sR_l\geq \sum_{\alpha=1}^{r}\sm_\alpha.   \label{equal-region-L+1}
\end{equation}
For $\bm{\lambda}=\bm{1}$, the constraint in \eqref{equal-region} becomes the sum rate bound 
\begin{equation}
\sR_{\text{sum}}^*=\sum_{\alpha=1}^r(L-\alpha+1)\sm_\alpha+\sum_{\alpha=r+1}^{\eta^*}\frac{L\sm_\alpha}{\eta^*}+\sum_{\alpha=\eta^*+1}^L\frac{L\sm_\alpha}{\alpha}.   \label{equal-region-L}
\end{equation}
For $\eta^*=r+1$, the constraint becomes
\begin{equation}
\bm{\lambda}\cdot\bm{\sR}\geq \sum_{\alpha=1}^{r}\left[f_1(\bm{\lambda})-(\alpha-1) f_{r+1}(\bm{\lambda})\right] \sm_{\alpha}+ \sum_{\alpha=r+1}^{L}f_{\alpha}(\bm{\lambda}) \sm_{\alpha}.  \label{equal-region-1} 
\end{equation}

Inspired by the above intuitions on the group pairwise coding rates, 
we can alternatively characterize $\cR_{\text{gp}}^{L,r}$ in another equivalent form, given in the following theorem. 
%%%--------------------------- begin of thm-group-pairwise-region-alternative ----------------------------------
\begin{theorem}\label{thm-group-pairwise-region-alternative}
	$\cR_{\text{gp}}^{L,r}=\cR_{L,r}^*$.  
\end{theorem}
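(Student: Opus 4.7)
The plan is to establish the two inclusions separately, leveraging the LP-duality relationship implicit in the definitions of $f_\alpha(\bm{\lambda})$ and the $\alpha$-resolutions, and then reduce the non-trivial direction to the classical Yeung-Zhang alternative characterization of the SMDC superposition region.

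For $\cR_{\text{gp}}^{L,r}\subseteq \cR_{L,r}^*$, I would take any $\bm{\sR}$ with a decomposition satisfying \eqref{gpRegion-R}--\eqref{gpRegion-r-2}, fix $\bm{\lambda}\in\mathbb{R}_+^L$, and split $\bm{\lambda}\cdot\bm{\sR}=\sum_{\alpha=1}^{r}\sum_{l}\lambda_l r_l^{\alpha}+\sum_{\alpha=r+1}^{L}\sum_{l}\lambda_l r_l^{\alpha}$. For $1\leq\alpha\leq r$, the per-encoder lower bound $r_l^\alpha\geq \sm_\alpha$ gives $\sum_l \lambda_l r_l^\alpha\geq f_1(\bm{\lambda})\sm_\alpha$ since $f_1(\bm{\lambda})=\sum_l\lambda_l$. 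For $r+1\leq\alpha\leq L$, the $\alpha$-subset constraints $\sum_{l\in\cB}r_l^\alpha\geq \sm_\alpha^*$ together with LP duality applied to the $\alpha$-resolution program yield $\sum_l \lambda_l r_l^\alpha\geq f_\alpha(\bm{\lambda})\sm_\alpha^*$. Summing the contributions and using $\sm_\alpha^*=0$ for $r+1\leq\alpha\leq\eta^*-1$ recovers exactly $g_{\eta^*}(\bm{\lambda})$.

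For the reverse inclusion $\cR_{L,r}^*\subseteq \cR_{\text{gp}}^{L,r}$, I first specialize \eqref{equal-region} to $\bm{\lambda}=e_l$ (the $l$-th unit vector); since $f_1(e_l)=1$ and $f_\alpha(e_l)=0$ for $\alpha\geq 2$, this yields $\sR_l\geq \sum_{\alpha=1}^{r}\sm_\alpha$. I then define the reduced rates $\tilde{\sR}_l\triangleq \sR_l-\sum_{\alpha=1}^{r}\sm_\alpha\geq 0$ and, for any $\bm{\lambda}\in\mathbb{R}_+^L$, verify
\[
\bm{\lambda}\cdot\tilde{\bm{\sR}}=\bm{\lambda}\cdot\bm{\sR}-f_1(\bm{\lambda})\sum_{\alpha=1}^{r}\sm_\alpha\geq g_{\eta^*}(\bm{\lambda})-f_1(\bm{\lambda})\sum_{\alpha=1}^{r}\sm_\alpha=\sum_{\alpha=r+1}^{L}f_\alpha(\bm{\lambda})\sm_\alpha^*.
\]
This is precisely the alternative characterization of the classical SMDC superposition region for $L$ encoders and $L-r$ independent pseudo-messages with rates $\sm_{r+1}^*,\ldots,\sm_L^*$. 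Invoking the Yeung-Zhang result (enclosed in Appendix~\ref{referenced-lemmas/theorems}), I extract $\tilde{r}_l^\alpha\geq 0$ such that $\tilde{\sR}_l=\sum_{\alpha=r+1}^{L}\tilde{r}_l^\alpha$ and $\sum_{l\in\cB}\tilde{r}_l^\alpha\geq \sm_\alpha^*$ for every $|\cB|=\alpha$. Setting $r_l^\alpha\triangleq \sm_\alpha$ for $1\leq\alpha\leq r$ and $r_l^\alpha\triangleq \tilde{r}_l^\alpha$ for $r+1\leq\alpha\leq L$ then produces the decomposition required by \eqref{gpRegion-R}--\eqref{gpRegion-r-2}.

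The main obstacle I anticipate is orchestrating the reduction in the reverse direction so that the classical Yeung-Zhang characterization applies cleanly to the ``tilde'' sub-problem. Two small points must be checked: the non-negativity of $\tilde{\sR}_l$, handled by the unit-vector specialization above, and the non-negativity of $\sm_{\eta^*}^*$, which is guaranteed by the defining inequalities \eqref{opt-eta} of $\eta^*$. The vanishing of $\sm_\alpha^*$ for $r+1\leq\alpha\leq\eta^*-1$ is harmless, since the corresponding $\alpha$-subset constraints become vacuous and can be satisfied by choosing the associated $\tilde{r}_l^\alpha\equiv 0$.
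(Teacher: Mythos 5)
Your proposal is correct, and the first inclusion $\cR_{\text{gp}}^{L,r}\subseteq\cR_{L,r}^*$ is essentially identical to the paper's: split $\bm{\lambda}\cdot\bm{\sR}$ by level, use $f_1(\bm{\lambda})=\sum_l\lambda_l$ for $\alpha\leq r$, and an optimal $\alpha$-resolution for $\alpha>r$. For the reverse direction, however, you take a genuinely different route. The paper does not argue pointwise: following Lemma~11 of \cite{yeung99}, it shows that for every $\bm{\lambda}\in\mathbb{R}_+^L$ there is an explicit $\bm{\sR}\in\cR_{\text{gp}}^{L,r}$ with $\bm{\lambda}\cdot\bm{\sR}=g_{\eta^*}(\bm{\lambda})$ (built level by level using the index $l_\alpha$ from Lemma~2 of \cite{yeung99}), and then the inclusion $\cR_{L,r}^*\subseteq\cR_{\text{gp}}^{L,r}$ follows from the standard supporting-hyperplane/duality argument for these upward-closed polyhedra. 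You instead take an arbitrary $\bm{\sR}\in\cR_{L,r}^*$, peel off the uniform contribution $\sum_{\alpha\leq r}\sm_\alpha$ using the unit-vector constraints, and recognize the residual region as the classical SMDC superposition region for the pseudo-message sizes $\sm_\alpha^*$, importing the decomposition from the Yeung--Zhang alternative characterization. Your route is shorter and directly constructive per rate point, but it uses the full classical characterization theorem as a black box; note that this theorem is \emph{not} among the results reproduced in Appendix~\ref{referenced-lemmas/theorems} (which contains only auxiliary lemmas on $\alpha$-resolutions), so the citation must go to \cite{yeung99} itself. Two harmless bookkeeping points you should make explicit: the classical decomposition returns $\tilde{r}_l^\alpha$ for all $\alpha\in\cL$, so any mass it places on levels $\alpha\leq r$ (where the pseudo-message size is zero) must be re-absorbed into a higher level before setting $r_l^\alpha=\sm_\alpha$ for $\alpha\leq r$; and one should confirm $g_{\eta^*}(e_l)=\sum_{\alpha\leq r}\sm_\alpha$ for \emph{every} $\eta^*$, which holds because $f_\alpha(e_l)=0$ for all $\alpha\geq 2$.
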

%%%------------------------- end thm-group-pairwise-region-alternative ----- begin proof ------------------------
\begin{proof}
	See Appendix \ref{proof_thm-group-pairwise-region-alternative}. 
\end{proof}
%%%----------------------------- end proof of thm-group-pairwise-region-alternative ----------------------------

To complete the converse proof of \Cref{thm-group-pairwise-region}, in light of the fact $\cR_{\text{gp}}^{L,r}\subseteq\cR_{\text{L,r}}$ as well as \Cref{thm-group-pairwise-region-alternative}, we now only need to show $\cR_{\text{L,r}}\subseteq\cR_{L,r}^*$, i.e., for any $\bm{\sR}\in\cR_{\text{L,r}}$, the following inequality holds
\begin{equation}
\bm{\lambda}\cdot\bm{\sR}\geq g_{\eta^*}(\bm{\lambda}). 
\end{equation}
The following lemma provides an alternative representation of $g_{\eta^*}(\bm{\lambda})$. %  the optimality of $\eta^*$ in the characterization of $\cR_{L,r}^*$.
%%------------------------------------- begin of lemma-eta* ------------------------------
\begin{lemma}\label{lemma-eta*}
	$\max_{\eta=r+1,\cdots,L+1}\big\{g_{\eta}(\bm{\lambda})\big\}=g_{\eta^*}(\bm{\lambda}).$
%	%We have the following necessary and sufficient condition.
%	\begin{eqnarray}
%	\max_{\eta=r+1,\cdots,L+1}\big\{g_{\eta}(\bm{\lambda})\big\}=g_{\eta^*}(\bm{\lambda}).  \label{opt-eta-equation}
%	\end{eqnarray}
%	%if and only if $\eta^*$ satisfies
%	%\begin{equation}
%	%	\sum_{\alpha=r+1}^{\eta^*-1}m_\alpha<\sum_{\alpha=1}^{r}(\alpha-1)m_\alpha\leq \sum_{\alpha=r+1}^{\eta^*}m_\alpha.   \label{condition-opt-eta}
%	%\end{equation}
\end{lemma}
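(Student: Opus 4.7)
The plan is to establish unimodality of $g_\eta(\bm{\lambda})$ in $\eta$ with peak exactly at $\eta^*$. I will compute the forward difference $g_{\eta+1}(\bm{\lambda}) - g_\eta(\bm{\lambda})$ and then match its sign against the two cases $\eta < \eta^*$ and $\eta \geq \eta^*$ prescribed by the definition \eqref{opt-eta}.

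Introduce the shorthand $D \triangleq \sum_{\alpha=1}^{r}(\alpha-1)\sm_\alpha$ and $S_\eta \triangleq \sum_{\alpha=r+1}^{\eta}\sm_\alpha$, so that the bracketed factor in \eqref{def-g} is simply $S_\eta - D$. The first sum in \eqref{def-g} does not depend on $\eta$, and the middle sum merely loses the term $f_{\eta+1}(\bm{\lambda})\sm_{\eta+1}$ when $\eta$ is incremented by one. Using $S_{\eta+1} = S_\eta + \sm_{\eta+1}$, the forward difference telescopes cleanly to
\begin{equation}
g_{\eta+1}(\bm{\lambda}) - g_\eta(\bm{\lambda}) = \bigl[f_{\eta+1}(\bm{\lambda}) - f_\eta(\bm{\lambda})\bigr]\bigl[S_\eta - D\bigr].
\end{equation}

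Next, I will invoke the monotonicity $f_{\eta+1}(\bm{\lambda}) \leq f_\eta(\bm{\lambda})$ in $\eta$, which follows from the fact that any $(\eta+1)$-resolution can be converted to an $\eta$-resolution of at least equal value by reassigning each weight $c_{\eta+1}(\bm{v})$ to a single $\eta$-subvector $\bm{v}' \subset \bm{v}$; this property should already be available from the lemmas enclosed in Appendix~\ref{referenced-lemmas/theorems}. Combined with $S_\eta - D < 0$ for $\eta \leq \eta^* - 1$ and $S_\eta - D \geq 0$ for $\eta \geq \eta^*$, both read directly from \eqref{opt-eta}, the forward difference is nonnegative below $\eta^*$ and nonpositive from $\eta^*$ onward. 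Hence $g_\eta(\bm{\lambda})$ is unimodal in $\eta$ and is maximized at $\eta = \eta^*$.

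The only real obstacle is the algebraic simplification of the forward difference, which has to telescope to the product $(f_{\eta+1}-f_\eta)(S_\eta - D)$; once this factorization is in hand, unimodality drops out immediately from the sign information supplied by \eqref{opt-eta} together with the standard monotonicity of $f_\alpha(\bm{\lambda})$ in $\alpha$. No Han-type inequality or resolution-specific computation beyond what is already cited is needed.
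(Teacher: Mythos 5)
Your proposal is correct and follows essentially the same route as the paper's proof: both reduce the comparison of consecutive values to the factorization $g_{\eta+1}(\bm{\lambda})-g_{\eta}(\bm{\lambda})=\left[f_{\eta+1}(\bm{\lambda})-f_{\eta}(\bm{\lambda})\right]\left[\sum_{\alpha=r+1}^{\eta}\sm_\alpha-\sum_{\alpha=1}^{r}(\alpha-1)\sm_\alpha\right]$ and then read the signs from the definition of $\eta^*$ in \eqref{opt-eta}, yielding unimodality with peak at $\eta^*$. The only difference is presentational: you make the monotonicity $f_{\eta+1}(\bm{\lambda})\leq f_{\eta}(\bm{\lambda})$ explicit (which indeed follows from the resolution argument or from Theorem 1 of the cited reference), whereas the paper uses it implicitly in its chains of equivalences.
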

\begin{proof}
	See Appendix \ref{proof_lemma-eta*}.
\end{proof}
%%------------------------------------- end of lemma-eta* ------------------------------
By \Cref{lemma-eta*}, it only remains to show that for any $\bm{\sR}\in\cR_{\text{L,r}}$ and $\eta=r+1,\cdots,L+1$, the inequality $\bm{\lambda}\cdot\bm{\sR}\geq g_{\eta}(\bm{\lambda})$ holds. 
The converse for SMDC in~\cite{yeung99} is proved using iterations to extract the entropies $H(M_1),H(M_2),\cdots,H(M_L)$ successively with coefficient $f_1(\bm{\lambda}),f_2(\bm{\lambda}),\cdots,f_L(\bm{\lambda})$ which have the same form of expression. In the secure setting here, the desired inequality $\bm{\lambda}\cdot\bm{\sR}\geq g_{\eta}(\bm{\lambda})$ will have two forms of coefficients, i.e., coefficients related to the secure messages and those related to the non-secure messages. The latter is the same as that in~\cite{yeung99}, but the former is different. For this reason, the iterations in the converse proof in~\cite{yeung99} do not apply to the former, i.e., the secure messages. Therefore, we need to derive new iterations to extract the entropies of the secure messages, such that the $r$-th iteration can be connected with the iterations in~\cite{yeung99}. 
Specifically, the main idea of proving $\bm{\lambda}\cdot\bm{\sR}\geq g_{\eta}(\bm{\lambda})$ is as follows: 
\begin{enumerate}[i)]
	\item we extract the entropies $H(M_1),H(M_2),\cdots,H(M_L)$ with proper coefficients in \eqref{def-g} from $\sum_{l=1}^L\lambda_lH(W_l)$ successively and iteratively;
	
	\item when extracting $H(M_\alpha)$ for $\alpha\in\{1,2,\cdots,r\}$, we explicitly design the coefficients of each intermediate term in closed-form so that we can finally connect to the $r$-th iteration of the converse proof in~\cite{yeung99};
	
	\item for $\alpha\geq r+1$, since there is no security constraints, we simply use the iterations in~\cite{yeung99}. 
\end{enumerate}

One of the main contributions of the converse proof compared with that in~\cite{yeung99} is the new technique of explicitly designing the coefficients in closed-form in each iteration for the secure messages. In contrast, in each iteration of the non-secure messages which is simply the iteration in~\cite{yeung99}, the coefficients in the iteration do not have a closed-form.

Instead of formally proving this inequality here, we provide an example for $(L,r)=(4,2)$ and $\eta=3$ to illustrate the main idea, and relegate the formal proof to Appendix \ref{proof_DS-converse}. %In the following, we provide an example of $(L,r)=(4,2)$ and $\eta=3$ to illustrate the main idea to prove $\bm{\lambda}\cdot\bm{\sR}\geq g_{\eta}(\bm{\lambda})$.% the technical proof in Appendix \ref{proof_DS-converse} works. 
The connection between this example and the formal proof will be discussed in \Cref{remark-cof-gamma-lambda}, \Cref{remark-cof-c}, and \Cref{remark-connection} in Appendix \ref{proof_DS-converse}.  
For different $i,j,k\in\{1,2,3,4\}$, we first present two equalities that will be used in the example: 
\begin{align}
H(W_i|W_kM_1)&=H(M_2|W_kM_1)+H(W_i|W_kM_{1:2}) \nonumber \\
&\qquad -H(M_2|W_iW_kM_1)  \nonumber \\
&=H(M_2)+H(W_i|W_kM_{1:2}),   \label{example-pre1}  \\
H(W_iW_jW_k|M_{1:2})&=H(W_iW_jW_kM_3|M_{1:2})  \nonumber \\
&=H(M_3)+H(W_iW_jW_kM_{1:3}).   \label{example-pre2}
\end{align}
Now we can write the following chain of inequalities without much difficulty: 
\begin{align}
&R_1+R_2+R_3+R_4  \nonumber \\
&=H(W_1)+H(W_2)+H(W_3)+H(W_4)  \\
&=4H(M_1)+H(W_1|M_1)+H(W_2|M_1)  \nonumber \\
&\quad +H(W_3|M_1)+H(W_4|M_1)   \label{converse-example-1}\\
&=4H(M_1)+\bigg\{0H(W_1|M_1)+\frac{2}{3}H(W_2|M_1)  \nonumber \\
&\quad +H(W_3|M_1)+H(W_4|M_1)\bigg\}_{\triangleq S_1}+\bigg\{H(W_1|M_1)  \nonumber \\
&\quad +\frac{1}{3}H(W_2|M_1)+0H(W_3|M_1)+0H(W_4|M_1) \bigg\}_{\triangleq S_2}  \label{converse-example-2} \\
&\geq 4H(M_1)+\frac{8}{3}H(M_2)  \nonumber \\
&\quad +\bigg\{\frac{1}{3}H(W_2W_3|W_1M_{1:2})+\frac{1}{3}H(W_2W_4|W_1M_{1:2})  \nonumber \\
&\quad +\frac{1}{3}H(W_3W_4|W_1M_{1:2})+\frac{1}{3}H(W_3W_4|W_2M_{1:2})\bigg\}  \nonumber \\
&\quad +\left\{H(W_1|M_{1:2})+\frac{1}{3}H(W_2|M_{1:2})\right\}  \label{converse-example-5} \\
&= 4H(M_1)+\frac{8}{3}H(M_2)  \nonumber \\
&\quad +\frac{1}{3}H(W_1W_2W_3|M_{1:2})+\frac{1}{3}H(W_1W_2W_4|M_{1:2})  \nonumber \\
&\quad +\frac{1}{3}H(W_1W_3W_4|M_{1:2})+\frac{1}{3}H(W_2W_3W_4|M_{1:2})  \label{converse-example-6}\\
&\stackrel{\eqref{example-pre2}}{=} 4H(M_1)+\frac{8}{3}H(M_2)+\frac{4}{3}H(M_3)  \nonumber \\
&\quad +\frac{1}{3}H(W_1W_2W_3|M_{1:3})+\frac{1}{3}H(W_1W_2W_4|M_{1:3})  \nonumber \\
&\quad +\frac{1}{3}H(W_1W_3W_4|M_{1:3})+\frac{1}{3}H(W_2W_3W_4|M_{1:3})  \label{converse-example-7}\\
&\geq 4H(M_1)+\frac{8}{3}H(M_2)+\frac{4}{3}H(M_3)+H(M_4)  \label{converse-example-8}\\
&=4m_1+\frac{8}{3}m_2+\frac{4}{3}m_3+m_4,
\end{align}
where \eqref{converse-example-8} follows from the fact that $(\frac{1}{3},\frac{1}{3},\frac{1}{3},\frac{1}{3})$ is an optimal $3$-resolution for $\bm{\lambda}=(1,1,1,1)$ (\textit{cf.} (\ref{optimization-1})-(\ref{optimization-3})),  % and the iteration in the converse for classical SMDC in~\cite{yeung99},
and the nontrivial step from \eqref{converse-example-2} to \eqref{converse-example-5} can be derived as follows 
\begin{align}
S_1&=\frac{2}{3}H(W_2|M_1)+H(W_3|M_1)+H(W_4|M_1)  \\
&=\left[\frac{1}{3}H(W_2|M_1)+\frac{1}{3}H(W_3|M_1)\right]  \nonumber \\
&\quad +\left[\frac{1}{3}H(W_2|M_1)+\frac{1}{3}H(W_4|M_1)\right]  \nonumber \\
&\quad +\left[\frac{1}{3}H(W_3|M_1)+\frac{1}{3}H(W_4|M_1)\right]\nonumber \\
&\quad +\left[\frac{1}{3}H(W_3|M_1)+\frac{1}{3}H(W_4|M_1)\right]   \label{converse-example-S1-1} \\
&\geq \left[\frac{1}{3}H(W_2|W_1M_1)+\frac{1}{3}H(W_3|W_1M_1)\right]\nonumber \\
&\quad +\left[\frac{1}{3}H(W_2|W_1M_1)+\frac{1}{3}H(W_4|W_1M_1)\right]  \nonumber \\
&\quad+\left[\frac{1}{3}H(W_3|W_1M_1)+\frac{1}{3}H(W_4|W_1M_1)\right]\nonumber \\
&\quad +\left[\frac{1}{3}H(W_3|W_2M_1)+\frac{1}{3}H(W_4|W_2M_1)\right]   \label{converse-example-S1-2} \\
&\stackrel{\eqref{example-pre1}}{=}\frac{8}{3}H(M_2)+\left[\frac{1}{3}H(W_2|W_1M_{1:2})+\frac{1}{3}H(W_3|W_1M_{1:2})\right]\nonumber \\
&\quad +\left[\frac{1}{3}H(W_2|W_1M_{1:2})+\frac{1}{3}H(W_4|W_1M_{1:2})\right]  \nonumber \\
&\quad+\left[\frac{1}{3}H(W_3|W_1M_{1:2})+\frac{1}{3}H(W_4|W_1M_{1:2})\right]\nonumber \\
&\quad +\left[\frac{1}{3}H(W_3|W_2M_{1:2})+\frac{1}{3}H(W_4|W_2M_{1:2})\right]    \label{converse-example-S1-3} \\
&\geq \frac{8}{3}H(M_2)\nonumber \\
&\quad +\bigg\{\frac{1}{3}H(W_2W_3|W_1M_{1:2})+\frac{1}{3}H(W_2W_4|W_1M_{1:2})  \nonumber \\
&\quad +\frac{1}{3}H(W_3W_4|W_1M_{1:2})+\frac{1}{3}H(W_3W_4|W_2M_{1:2})\bigg\}_{\triangleq S_1'},   \label{converse-example-S1-4}
\end{align}
and 
\begin{align}
S_2&=H(W_1|M_1)+\frac{1}{3}H(W_2|M_1) \\
&\geq \left\{H(W_1|M_{1:2})+\frac{1}{3}H(W_2|M_{1:2})\right\}_{\triangleq S_2'}. 
\end{align}

The main ideas of the example are as follows:
\begin{enumerate}
	\item The two terms $S_1$ and $S_2$ have a similar form in \eqref{converse-example-2}, but with different coefficient vectors $(0,\frac{2}{3},1,1)$ and $(1,\frac{1}{3},0,0)$, respectively, which are chosen strategically for this bound. The two terms are bounded in rather different manners. We extract $\frac{8}{3}H(M_2)$ from $S_1$ (with $S_1'$ left) and use $S_2$ to convert $S_1'$ from the form $H(W_iW_j|W_kM_{1:2})$ to the form $H(W_iW_jW_k|M_{1:2})$. This further generates the terms $H(W_iW_jW_k|M_{1:3})$ in \eqref{converse-example-7}, which ensures that the $\alpha$-resolution technique can be applied subsequently. 
	\item When bounding $S_1$, we reorganize its coefficient vector $(0,\frac{2}{3},1,1)$ as given  in \eqref{converse-example-S1-1} for two purposes: firstly, Han's inequality can be applied as in \eqref{converse-example-S1-4};  secondly, $H(W_iW_j|W_kM_{1:2})$ in $S_1'$ can be converted to $H(W_1W_jW_k|M_{1:2})$ using $S_2'$; 
	\item The coefficient $(\frac{1}{3},\frac{1}{3},\frac{1}{3},\frac{1}{3})$ in \eqref{converse-example-6}-\eqref{converse-example-7} is an optimal $3$-resolution for $(1,1,1,1)$. In the general proof, the $\alpha$-resolution technique used in the converse proof for SMDC~\cite{yeung99} will be invoked in a more systematic manner. %This makes a connection between our iteration \eqref{converse-example-2}-\eqref{converse-example-5} and . 
\end{enumerate}

%%========================================================
\section{Conclusion}\label{section-conclusion}
We studied the weakly secure SMDC problem and characterized the condition that superposition coding is optimal in terms of achieving the minimum sum rate. 
It is generally difficult to design the optimal coding schemes and characterize the rate regions for those cases that superposition is suboptimal.
In this paper, we consider a special case called differential-constant secure SMDC, for which the optimal rate region is characterized. 
A group pairwise coding scheme is shown to be optimal in terms of achieving the entire rate region. 

The optimality condition is proved only for the minimum sum rate, 
we conjecture that it is also the optimality condition that superposition coding can achieve the entire rate region. 
This is currently under our investigation.

%\newpage
%%=========================================================
\begin{appendices}
	\section{Some lemmas/theorems from \cite{yeung99,guo-yeung-SMDC-IT20}}\label{referenced-lemmas/theorems}
	In the following lemmas and theorem, we assume $\bm{\lambda}\in\mathbb{R}_+^L$ (c.f. \eqref{def-R-set}) is ordered, i.e., $\lambda_1\geq \lambda_2\geq \cdots\geq \lambda_L$. Let $\{c(\bm{v})\}$ be an $\alpha$-resolution for $\bm{\lambda}$ (c.f. \eqref{optimization-2},\eqref{optimization-3}) and $\tilde{\bm{\lambda}}=\sum_{\bm{v}\in\Omega_L^{\alpha}}c(\bm{v})\cdot \bm{v}$.  
	An $\alpha$-resolution is called \textit{perfect} if the equality in \eqref{optimization-2} holds, i.e., $\sum_{\bm{v}\in\Omega_L^{\alpha}}c(\bm{v})\cdot \bm{v}=\bm{\lambda}$. 
	
	\noindent\textbf{Lemma 2} in \cite{yeung99}: \textit{Let $\{c(\bm{v})\}$ be an optimal $\alpha$-resolution for~$\bm{\lambda}$. Then there exists $0\leq l\leq \alpha-1$ such that $\lambda_i-\tilde{\lambda}_i>0$ if and only if $1\leq i\leq l$. }
	
	\noindent\textbf{Lemma 4} in \cite{yeung99}: \textit{
	\begin{enumerate}[(i)]
		\item $f_{\alpha}(\bm{\lambda})\leq \alpha^{-1}\sum_{i=1}^L\lambda_i$; 
		\item $\sum_{\bm{v}\in\Omega_L^{\alpha}}c(\bm{v})=\alpha^{-1}\sum_{i=1}^L\lambda_i$ if and only if $\{c(\bm{v})\}$ is a perfect $\alpha$-resolution for $\bm{\lambda}$. In this case, $f_{\alpha}(\bm{\lambda})= \alpha^{-1}\sum_{i=1}^L\lambda_i$. 
	\end{enumerate}}
	
	\noindent\textbf{Lemma 7} in \cite{yeung99}: \textit{For $\alpha\geq 2$, $\bm{\lambda}$ has a perfect $\alpha$-resolution if and only if $\lambda_1\leq \frac{\lambda_2+\cdots+\lambda_L}{\alpha-1}$. }
	
	\noindent\textbf{Theorem 1} in \cite{guo-yeung-SMDC-IT20}: \textit{$f_{\alpha}(\bm{\lambda})=\min\limits_{\beta\in\{0,1,\cdots,\alpha-1\}}\frac{1}{\alpha-\beta}\sum_{i=\beta+1}^L\lambda_i$. }
	
	\noindent\textbf{Lemma 1} in \cite{guo-yeung-SMDC-IT20}: \textit{For $\alpha\geq 2$, if $\lambda_1\leq \frac{\lambda_2+\lambda_3+\cdots\lambda_L}{\alpha-1}$, then $f_{\alpha}(\bm{\lambda})=\frac{1}{\alpha}\sum_{i=1}^L\lambda_i$. }
	
	For any permutation $\omega$ on $\{1, 2,\cdots , L\}$, denote $\left(\lambda_{\omega(1)},\lambda_{\omega(2)},\cdots, \lambda_{\omega(L)}\right)$ by $\omega(\bm{\lambda})$. 
	
	\noindent\textbf{Lemma 2} in \cite{guo-yeung-SMDC-IT20}: \textit{$f_{\alpha}\big(\omega(\bm{\lambda})\big) = f_{\alpha}(\bm{\lambda})$ for any $\alpha\in\cL$.}
	
	\noindent\textbf{Lemma 5} in \cite{guo-yeung-SMDC-IT20}: \textit{Let $\bm{\lambda}_1 = (\lambda_{1,1},\lambda_{1,2},\cdots ,\lambda_{1,L})$ and $\bm{\lambda}_2 = (\lambda_{2,1}, \lambda_{2,2},\cdots , \lambda_{2,L})$ be two ordered vectors such that $\lambda_{1,1} > \lambda_{2,1}$ and $\lambda_{1,i} =\lambda_{2,i}$ for all $2\leq i\leq L$. 
	For any $\alpha_0\in\cL$, if $f_{\alpha_0}(\bm{\lambda}_1) = f_{\alpha_0}(\bm{\lambda}_2)$, then $f_{\alpha}(\bm{\lambda}_1) = f_{\alpha}(\bm{\lambda}_2)$ for all $\alpha\geq \alpha_0$. }
	
	Let $\bm{\lambda}^{[1]}$ be the length-$L$ vector with the first component being 1 and the rest being 0, i.e., $\bm{\lambda}^{[1]}=(1,0,0,\cdots,0)$. 
	
	\noindent\textbf{Lemma 6} in \cite{guo-yeung-SMDC-IT20}: \textit{If $\lambda_1>\sum_{i=2}^L\lambda_i$, let $\bm{\lambda}'=\left(\sum_{i=2}^L\lambda_i,\lambda_2,\lambda_3,\cdots,\lambda_L\right)$. Then for all $\alpha\in\cL$, $$f_{\alpha}(\bm{\lambda})=\left(\lambda_1-\sum_{i=2}^L\lambda_i\right)f_{\alpha}\left(\bm{\lambda}^{[1]}\right)+f_{\alpha}(\bm{\lambda}').$$}
	
	\noindent\textbf{Lemma 7} in \cite{guo-yeung-SMDC-IT20}: \textit{For any $\eta\in\{1,2,\cdots,L-1\}$, 
	\begin{enumerate}[(i)]
		\item if $\lambda_1\leq \frac{1}{\eta}\sum_{i=2}^L\lambda_i$, then $f_{\alpha}(\bm{\lambda})=\frac{1}{\alpha}\sum_{i=1}^L\lambda_i$ for $\alpha=1,2,\cdots,\eta+1$; 
		\item if $\lambda_1\geq \frac{1}{\eta}\sum_{i=2}^L\lambda_i$, then $f_{\alpha}(\bm{\lambda})=f_{\alpha-1}(\lambda_2,\lambda_3, \cdots,\lambda_L)$ for $\alpha=\eta+1,\eta+2,\cdots,L$. 
	\end{enumerate}}

	%%----------------------------- section: proof of lemma-converse_condition-iteration ------------------------------------------------
	\section{Proof of \Cref{lemma-converse_condition-iteration}}\label{proof_lemma-converse_condition-iteration}
	%For $\alpha\in\cL$, let $M_{1:\alpha}\triangleq(M_1,M_2,\cdots,M_\alpha)$. 
	When condition \eqref{sup-opt-condition} is satisfied, there must exist a $T_s$ as defined in \eqref{def-T_s}. 
	For $\alpha\leq T_s$, since $N_\alpha=0$, we have 
	\begin{equation}
	\mu_\alpha=\frac{L}{\alpha}\frac{1}{{L\choose \alpha}} \sum_{\cB_\alpha^1\subseteq\cL:|\cB_\alpha^1|=\alpha} H(W_{\cB_{\alpha}^1}|M_{1:\alpha}).
	\end{equation}
	The claim in \eqref{claim} is exactly inequality (27) in \cite{yeung97} which was proved by applying Han's inequality. 
	
	For any $\alpha\geq T_s$, we prove the claim by induction. 
	Firstly, the claim is true for $\alpha=T_s$.
	Then we assume the claim is true for $\alpha=\zeta$ for some $\zeta\geq T_s$. 
	We now show that it is true for $\alpha=\varphi$, which is the index of the next non-vanishing message following $M_\zeta$. 
	In light of \eqref{claim}, we only need to show that 
	\begin{equation}
	\mu_\zeta\geq\frac{L}{\varphi-N_\varphi}H(M_\varphi)+\mu_\varphi.   \label{iteration}
	\end{equation}
	Since now $\varphi>\zeta>0$, the first condition in \eqref{sup-opt-condition} must hold, i.e., 
	\begin{equation}
	N_\zeta<\zeta\leq N_\varphi<\varphi. 
	\end{equation}
	For any $(\cB_{\varphi}^1,\cB_{\varphi}^2)\in\bB_\varphi$ and $(\cB_{\zeta}^1,\cB_{\zeta}^2)\in\bB_\zeta$ such that $\cB_{\zeta}^2\subseteq \cB_{\varphi}^2$, 
	from the reconstruction and security constraints of $M_\varphi$, we obtain  
	\begin{align}
	H(M_\varphi)&=H(M_\varphi|W_{\cB_{\varphi}^2})-H(M_\varphi|W_{\cB_{\varphi}^1}W_{\cB_{\varphi}^2})  \\
	&=I(M_\varphi;W_{\cB_{\varphi}^1}|W_{\cB_{\varphi}^2})  \\
	&=H(W_{\cB_{\varphi}^1}|W_{\cB_{\varphi}^2})-H(W_{\cB_{\varphi}^1}|W_{\cB_{\varphi}^2}M_\varphi)  \\
	&=H(W_{\cB_{\varphi}^1}|W_{\cB_{\varphi}^2}M_{1:\zeta})-H(W_{\cB_{\varphi}^1}|W_{\cB_{\varphi}^2}M_{1:\varphi})   \label{ir-entropy-varphi-4} %\\
	%%	&=H(W_{\cB_{\varphi}^1}|W_{\cB_{\zeta}^2}M_{1:\zeta})- I(W_{\cB_{\varphi}^1};W_{\cB_{\varphi}^2\backslash \cB_{\zeta}^2}|W_{\cB_{\zeta}^2}M_{1:\zeta})- H(W_{\cB_{\varphi}^1}| W_{\cB_{\varphi}^2} M_{1:\varphi}), \label{ir-entropy-varphi-5} 
	\end{align}
	where \eqref{ir-entropy-varphi-4} follows from the fact that $N_\varphi\geq \zeta$ and the reconstruction constraints of $M_1,M_2,\cdots,M_\zeta$. 
	In the following, we prove the iteration of \eqref{iteration} in two different situations:
	\begin{enumerate}[i.]
		\item $\zeta-N_\zeta\leq \varphi-N_\varphi$;
		\item $\zeta-N_\zeta>\varphi-N_\varphi$.
	\end{enumerate}
	\begin{remark}
		It is easy to see by checking the following proof that the case of $\zeta-N_\zeta=\varphi-N_\varphi$ is compatible with both (i) and (ii).
	\end{remark}
	\textit{Case i. }\uline{$\zeta-N_\zeta\leq \varphi-N_\varphi$}:
	Consider the following, 
	\begin{align}
	\mu_\zeta&=\frac{L}{\zeta-N_\zeta}\frac{1}{{L \choose N_\zeta}{L-N_\zeta \choose \zeta-N_\zeta}} \cdot \sum_{(\cB_\zeta^1,\cB_\zeta^2)\in\bB_\zeta} H(W_{\cB_{\zeta}^1}| W_{\cB_{\zeta}^2}M_{1:\zeta})   \nonumber \\
	&\geq \frac{L}{\zeta-N_\zeta}\frac{1}{{L \choose N_\zeta}{L-N_\zeta \choose \zeta-N_\zeta}} \cdot \nonumber \\
	&\quad \sum_{(\cB_\zeta^1,\cB_\zeta^2)\in\bB_\zeta} \sum_{\substack{\cV\subseteq\cL\backslash(\cB_\zeta^1\cup\cB_\zeta^2):\\ |\cV|=N_\varphi-N_\zeta}} \frac{1}{{L-\zeta \choose N_\varphi-N_\zeta}} H(W_{\cB_{\zeta}^1}| W_{\cB_{\zeta}^2\cup\cV}M_{1:\zeta})  \label{ir-alpha>=varphi-2} \\
	&=\frac{L}{\zeta-N_\zeta}\frac{1}{{L \choose N_\zeta}{L-N_\zeta \choose \zeta-N_\zeta}}  \cdot \nonumber \\
	&\quad \sum_{\substack{\cB_\zeta^1\subseteq\cL: \\ |\cB_\zeta^1|=\zeta-N_\zeta}} \sum_{\substack{\cB_\varphi^2\subseteq\cL\backslash\cB_{\zeta}^1: \\|\cB_\varphi^2|=N_\varphi}} \frac{{N_\varphi \choose N_\zeta}} {{L-\zeta \choose N_\varphi-N_\zeta}} H(W_{\cB_{\zeta}^1}| W_{\cB_\varphi^2}M_{1:\zeta})   \nonumber \\
	&=\frac{L}{\zeta-N_\zeta}\frac{{L-N_\varphi-1 \choose \zeta-N_\zeta-1}}{{L \choose N_\zeta}{L-N_\zeta \choose \zeta-N_\zeta}} \frac{{N_\varphi \choose N_\zeta}} {{L-\zeta \choose N_\varphi-N_\zeta}} \cdot \nonumber \\
	&\quad  \sum_{\substack{\cB_\varphi^2 \subseteq\cL:\\ |\cB_\varphi^2|=N_\varphi}} \sum_{\substack{\cB_\zeta^1\subseteq\cL\backslash\cB_{\varphi}^2: \\|\cB_\zeta^1|=\zeta-N_\zeta}} \frac{1}{{L-N_\varphi-1 \choose \zeta-N_\zeta-1}} H(W_{\cB_{\zeta}^1}| W_{\cB_{\varphi}^2}M_{1:\zeta})    \nonumber \\
	&\geq \frac{L}{\zeta-N_\zeta}\frac{{L-N_\varphi-1 \choose \zeta-N_\zeta-1}}{{L \choose N_\zeta}{L-N_\zeta \choose \zeta-N_\zeta}} \frac{{N_\varphi \choose N_\zeta}}{{L-\zeta \choose N_\varphi-N_\zeta}}  \cdot \nonumber \\
	&\quad \sum_{(\cB_\varphi^1,\cB_\varphi^2)\in\bB_\varphi} \frac{1}{{L-N_\varphi-1 \choose \varphi-N_\varphi-1}} H(W_{\cB_{\varphi}^1}| W_{\cB_{\varphi}^2}M_{1:\zeta})   \label{ir-alpha>=varphi-5} \\
	&=\frac{L}{\varphi-N_\varphi}\frac{1}{{L\choose N_\varphi} {L-N_\varphi\choose \varphi-N_\varphi}} \cdot \sum_{(\cB_\varphi^1,\cB_\varphi^2)\in\bB_\varphi} H(W_{\cB_{\varphi}^1}| W_{\cB_{\varphi}^2}M_{1:\zeta})    \nonumber \\
	&=\frac{L}{\varphi-N_\varphi}\frac{1}{{L\choose N_\varphi} {L-N_\varphi\choose \varphi-N_\varphi}}  \cdot \nonumber \\
	&\quad \sum_{(\cB_\varphi^1,\cB_\varphi^2)\in\bB_\varphi} \Big[H(M_\varphi)+ H(W_{\cB_{\varphi}^1}| W_{\cB_{\varphi}^2}M_{1:\varphi})\Big]  \label{ir-alpha>=varphi-7} \\
	&=\frac{L}{\varphi-N_\varphi}H(M_\varphi)+\mu_\varphi,
	\end{align}
	where \eqref{ir-alpha>=varphi-2} follows from the fact that conditioning does not increase entropy,
	\eqref{ir-alpha>=varphi-5} follows from Han's inequality and the assumption that $\zeta-N_\zeta\leq \varphi-N_\varphi$, 
	and \eqref{ir-alpha>=varphi-7} follows from \eqref{ir-entropy-varphi-4}. 
	
	\textit{Case ii. }\uline{$\zeta-N_\zeta>\varphi-N_\varphi$}:
	We derive the iteration as follows, 
	\begin{align}
	\mu_\zeta&=\frac{L}{\zeta-N_\zeta}\frac{1}{{L \choose N_\zeta}{L-N_\zeta \choose \zeta-N_\zeta}} \cdot \sum_{(\cB_\zeta^1,\cB_\zeta^2)\in\bB_\zeta} H(W_{\cB_{\zeta}^1}| W_{\cB_{\zeta}^2}M_{1:\zeta})   \nonumber \\
	&\geq \frac{L}{\zeta-N_\zeta}\frac{1}{{L \choose N_\zeta}{L-N_\zeta \choose \zeta-N_\zeta}} \cdot \nonumber \\
	&\quad \sum_{(\cB_\zeta^1,\cB_\zeta^2)\in\bB_\zeta} \sum_{\substack{\cV\subseteq\cL\backslash(\cB_\zeta^1\cup\cB_\zeta^2):\\ |\cV|=\varphi-\zeta}} \frac{1}{{L-\zeta \choose \varphi-\zeta}} H(W_{\cB_{\zeta}^1}| W_{\cB_{\zeta}^2\cup\cV}M_{1:\zeta})    \nonumber \\
	&=\frac{L}{\zeta-N_\zeta}\frac{1}{{L \choose N_\zeta}{L-N_\zeta \choose \zeta-N_\zeta}} \cdot \nonumber \\
	&\quad  \sum_{\substack{\cB_\zeta^1 \subseteq\cL:\\ |\cB_\zeta^1|=\zeta-N_\zeta}} \sum_{\substack{\cV'\subseteq\cL\backslash\cB_{\zeta}^1: \\|\cV'|=\varphi-(\zeta-N_\zeta)}} \frac{{\varphi-(\zeta-N_\zeta) \choose N_\zeta}} {{L-\zeta \choose \varphi-\zeta}} H(W_{\cB_{\zeta}^1}| W_{\cV'}M_{1:\zeta})    \nonumber \\
	&=\frac{L}{\zeta-N_\zeta}\frac{1}{{L \choose N_\zeta}{L-N_\zeta \choose \zeta-N_\zeta}} \frac{{\varphi-(\zeta-N_\zeta) \choose N_\zeta}}{{L-\zeta \choose \varphi-\zeta}}  \cdot \nonumber \\
	&\quad \sum_{\substack{\cD\subseteq\cL:\\|\cD|=\varphi}} \sum_{\substack{\cB_\zeta^1\subseteq\cD:\\|\cB_{\zeta}^1|=\zeta-N_\zeta}} \sum_{\cV'=\cD\backslash\cB_{\zeta}^1} H(W_{\cB_{\zeta}^1}| W_{\cV'}M_{1:\zeta})   \nonumber \\
	&=\frac{L}{\zeta-N_\zeta}\frac{1}{{L \choose N_\zeta}{L-N_\zeta \choose \zeta-N_\zeta}} \frac{{\varphi-(\zeta-N_\zeta) \choose N_\zeta}} {{L-\zeta \choose \varphi-\zeta}} {\varphi \choose \zeta-N_\zeta}(\zeta-N_\zeta) \cdot  \nonumber \\
	&\sum_{\substack{\cD\subseteq\cL:\\|\cD|=\varphi}} \sum_{\substack{\cB_\zeta^1\subseteq\cD:\\|\cB_{\zeta}^1|=\zeta-N_\zeta}}\!\! \sum_{\cB_{\varphi}^2=\cD\backslash\cB_{\zeta}^1} \frac{1}{{\varphi \choose \zeta-N_\zeta}(\zeta-N_\zeta)} H(W_{\cB_{\zeta}^1}| W_{\cV'}M_{1:\zeta})   \nonumber  \\
	&\geq \frac{L}{\zeta-N_\zeta}\frac{1}{{L \choose N_\zeta}{L-N_\zeta \choose \zeta-N_\zeta}} \frac{{\varphi-(\zeta-N_\zeta) \choose N_\zeta}} {{L-\zeta \choose \varphi-\zeta}} {\varphi \choose \zeta-N_\zeta}(\zeta-N_\zeta) \cdot  \nonumber \\
	&\sum_{\substack{\cD\subseteq\cL:\\|\cD|=\varphi}} \sum_{\substack{\cB_\varphi^1\subseteq\cD:\\|\cB_{\varphi}^1|=\varphi-N_\varphi}}\!\!\!\! \sum_{\cV''=\cD\backslash\cB_{\varphi}^1} \frac{1/(\varphi-N_\varphi)}{{\varphi\choose \varphi-N_\varphi} } H(W_{\cB_{\varphi}^1}| W_{\cV''}M_{1:\zeta})   \label{ir-alpha-<varphi-6} \\
	&=\frac{L}{\varphi-N_\varphi}\frac{1}{{L \choose N_\varphi} {L-N_\varphi \choose \varphi-N_\varphi}} \cdot \nonumber \\
	&\quad \sum_{\substack{\cD\subseteq\cL:\\|\cD|=\varphi}} \sum_{\substack{\cB_\varphi^1\subseteq\cD:\\|\cB_{\varphi}^1|=\varphi-N_\varphi}} \sum_{\cB_{\varphi}^2=\cD\backslash\cB_{\varphi}^1} H(W_{\cB_{\varphi}^1}| W_{\cB_{\varphi}^2}M_{1:\zeta})    \nonumber \\
	&=\frac{L}{\varphi-N_\varphi}\frac{1}{{L \choose N_\varphi} {L-N_\varphi \choose \varphi-N_\varphi}} \cdot \sum_{(\cB_\varphi^1, \cB_\varphi^2)\in \bB_\varphi} \!\!\!\! H(W_{\cB_{\varphi}^1}| W_{\cB_{\varphi}^2}M_{1:\zeta})    \nonumber \\
	&=\frac{L}{\varphi-N_\varphi}\frac{1}{{L \choose N_\varphi} {L-N_\varphi \choose \varphi-N_\varphi}} \cdot \nonumber \\
	&\quad \sum_{(\cB_\varphi^1, \cB_\varphi^2)\in \bB_\varphi} \Big[H(M_\varphi)+H(W_{\cB_{\varphi}^1}| W_{\cB_{\varphi}^2}M_{1:\varphi})\Big]    \label{ir-alpha-<varphi-9}\\
	&=\frac{L}{\varphi-N_\varphi}H(M_\varphi)   \nonumber \\
	&\quad +\frac{L}{\varphi-N_\varphi}\frac{1}{{L \choose N_\varphi} {L-N_\varphi \choose \varphi-N_\varphi}} \sum_{(\cB_\varphi^1, \cB_\varphi^2)\in \bB_\varphi} \!\!\!\! H(W_{\cB_{\varphi}^1}| W_{\cB_{\varphi}^2}M_{1:\varphi})    \nonumber \\
	&=\frac{L}{\varphi-N_\varphi}H(M_\varphi)+\mu_\varphi,
	\end{align}
	where \eqref{ir-alpha-<varphi-6} follows from Han's inequality (complementary conditioning version),
	and \eqref{ir-alpha-<varphi-9} follows from \eqref{ir-entropy-varphi-4}. 
	This proves \Cref{lemma-converse_condition-iteration}. 
	
	%%----------------------------- section: converse proof of thm-group-pairwise-region-alternative ------------------------------------------------
	\section{Proof of \Cref{thm-group-pairwise-region-alternative}}\label{proof_thm-group-pairwise-region-alternative}
	Similar to Lemma 11 in \cite{yeung99}, the theorem can be obtained by proving i) $\cR_{\text{gp}}^{L,r}\subseteq\cR_{L,r}^*$; ii) for any $\bm{\lambda}\in\mathbb{R}_+^L$, 
	there exists $\bm{\sR}\in\cR_{\text{gp}}^{L,r}$ such that $\bm{\lambda}\cdot\bm{\sR}=g_{\eta^*}(\bm{\lambda})$.
	\begin{enumerate}[i)]
		\item We first show that $\cR_{\text{gp}}^{L,r}\subseteq\cR_{L,r}^*$. 
		For any $1\leq \alpha\leq r$, let $\bm{r}^{\alpha}=(r^{\alpha}_1,r^{\alpha}_2,\cdots,r^{\alpha}_L)$. 
		For any $\bm{\lambda}\in \mathbb{R}_{+}^L$ and $\bm{\sR}\in\cR_{\text{gp}}^{L,r}$, we have from \eqref{gpRegion-r-1} that
		\begin{equation}
		\bm{\lambda}\cdot\bm{r}^{\alpha}\geq f_1(\bm{\lambda}) \sm_\alpha.  \label{pf-group-region-subrate1}
		\end{equation}
		For $r+1\leq \alpha\leq L$, let $\{c_{\alpha}(\bm{v})\}$ be an optimal $\alpha$-resolution for $\bm{\lambda}$, which implies that 
		\begin{equation}
		\bm{\lambda}\geq \sum_{\bm{v}\in\Omega_L^{\alpha}}c_{\alpha}(\bm{v})\bm{v}.  \label{pf-group-region-resolution-def}
		\end{equation}
		Then we have 
		\begin{align}
		\bm{\lambda}\cdot\bm{r}^{\alpha}&\geq \left(\sum_{\bm{v}\in\Omega_L^{\alpha}}c_{\alpha}(\bm{v})\bm{v}\right)\cdot\bm{r}^{\alpha} \label{pf-group-region-subrate2-1} \\
		&=\sum_{\bm{v}\in\Omega_L^{\alpha}}\big(c_{\alpha}(\bm{v})(\bm{v}\cdot\bm{r}^{\alpha})\big)   \\
		&\geq \sum_{\bm{v}\in\Omega_L^{\alpha}}\left(c_{\alpha}(\bm{v}) \sm_\alpha^*\right)  \label{pf-group-region-subrate2-2} \\
		&=\left(\sum_{\bm{v}\in\Omega_L^{\alpha}}c_{\alpha}(\bm{v})\right)\sm_\alpha^*   \\
		&=f_{\alpha}(\bm{\lambda})\sm_\alpha^*   \label{pf-group-region-subrate2-3}
		\end{align}
		where \eqref{pf-group-region-subrate2-1} follows from \eqref{pf-group-region-resolution-def}, 
		\eqref{pf-group-region-subrate2-2} follows from \eqref{gpRegion-r-2}, 
		and \eqref{pf-group-region-subrate2-3} follows from the optimality of $\{c_{\alpha}(\bm{v})\}$. 
		Summing up \eqref{pf-group-region-subrate1} and \eqref{pf-group-region-subrate2-3} over $\alpha$, we have 
		\begin{align}
		\bm{\lambda}\cdot\bm{\sR}&\geq \sum_{\alpha=1}^{r}f_1(\bm{\lambda}) \sm_\alpha+\sum_{\alpha=r+1}^{L}f_{\alpha}(\bm{\lambda}) \sm_\alpha^*   \\
		&=g_{\eta^*}(\bm{\lambda}).
		\end{align}
		This implies $\bm{\sR}\in\cR_{L,r}^*$ and thus $\cR_{\text{gp}}^{L,r}\subseteq\cR_{L,r}^*$.
		
		\item We now construct a rate tuple $\bm{\sR}$ for each $\bm{\lambda}\in\mathbb{R}_{+}^L$ such that 
		$\bm{\sR}\in\cR_{\text{gp}}^{L,r}$ and $\bm{\lambda}\cdot\bm{\sR}=g_{\eta^*}(\bm{\lambda})$. 
		For $r+1\leq \alpha\leq L$, let $\{c_{\alpha}(\bm{v})\}$ be an optimal $\alpha$-resolution for $\bm{\lambda}$ and let 
		\begin{equation}
		\tilde{\bm{\lambda}}=\sum_{\bm{v}\in\Omega_{L}^{\alpha}}c_{\alpha}(\bm{v})\cdot\bm{v}.
		\end{equation}
		By Lemma 2 in \cite{yeung99}, there exists $1\leq l_{\alpha}\leq \alpha-1$ such that $\lambda_i>\tilde{\lambda}_i$ if and only if $1\leq i\leq l_{\alpha}$. 
		Let $\sR_l=\sum_{\alpha=1}^{L}r_l^{\alpha}$ for $l\in\cL$. 
		We construct $\bm{\sR}$ by designing the sub-rates $r_l^{\alpha}$ as follows.
		\begin{enumerate}
			\item For $1\leq \alpha\leq r$, let 
			\begin{equation}
			r_l^{\alpha}=\sm_\alpha, \text{ for all }1\leq l\leq L.
			\end{equation}
			
			\item For $r+1\leq \alpha\leq \eta^*-1$, let 
			\begin{equation}
			r_l^{\alpha}=0, \text{ for all }1\leq l\leq L.
			\end{equation}
			
			\item For $\eta^*\leq \alpha\leq L$, let 
			\begin{equation}
			r_l^{\alpha}=
			\begin{cases}
			0,&\text{ for }1\leq l\leq l_{\alpha} \\
			\frac{\sm_\alpha^*}{\alpha-l_{\alpha}},&\text{ for }l_{\alpha}+1\leq l\leq L.
			\end{cases}
			\end{equation}			
		\end{enumerate}
		We first verify that such a construction implies $\bm{\sR}\in\cR_{\text{gp}}^{L,r}$.
		\begin{enumerate}
			\item For $1\leq \alpha\leq r$, it is obvious that \eqref{gpRegion-r-1} is satisfied.
			
			\item For $r+1\leq \alpha\leq \eta^*-1$, since $\sm_\alpha^*=0$, \eqref{gpRegion-r-2} is satisfied.
			
			\item For $\eta^*\leq \alpha\leq L$, consider any $\cB\subseteq\cL$ such that $|\cB|=\alpha$. 
			Let $\bm{e}_{\alpha}$ be an $L$-vector with the first $l_{\alpha}$ components being 0 and the last $L-l_{\alpha}$ components being 1. 
			Let $\bm{v}_{\cB}=(v_1,v_2,\cdots,v_L)$ be such that $v_i=1$ if and only if $i\in\cB$. 
			Since $\sum_{i=1}^{l_{\alpha}}v_i\leq l_{\alpha}$, we have $\bm{e}_{\alpha}\cdot\bm{v}_{\cB}\geq \alpha-l_{\alpha}$. 
			Thus,
			\begin{eqnarray}
			\sum_{l\in\cB}r_l^{\alpha}&=&\left(\frac{\sm_\alpha^*}{\alpha-l_{\alpha}}\bm{e}_{\alpha}\right)\cdot\bm{v}_{\cB}  \\
			&=&\frac{\sm_\alpha^*}{\alpha-l_{\alpha}}\left(\bm{e}_{\alpha}\cdot\bm{v}_{\cB}\right)  \\
			&\geq &\frac{\sm_\alpha^*}{\alpha-l_{\alpha}}\left(\alpha-l_{\alpha}\right)  \\
			&=&\sm_\alpha^*.  \label{pf-group-region-subrate-check-1}
			\end{eqnarray}
			
		\end{enumerate}
		Thus, $\bm{\sR}\in\cR_{\text{gp}}^{L,r}$. 
		Now it remains to show that $\bm{\lambda}\cdot\bm{\sR}=g_{\eta^*}(\bm{\lambda})$.  
		We consider the following cases.
		\begin{enumerate}
			\item For $1\leq \alpha\leq r$, it is easy to check that 
			\begin{equation}
			\bm{\lambda}\cdot\bm{r}^{\alpha}=f_1(\bm{\lambda}) \sm_\alpha.  \label{subrate-sum-1}
			\end{equation}
			
			\item For $r+1\leq \alpha\leq \eta^*-1$, it is obvious that
			\begin{equation}
			\bm{\lambda}\cdot\bm{r}^{\alpha}=0.   \label{subrate-sum-2}
			\end{equation}
			
			\item For $\eta^*\leq \alpha\leq L$, only the first $l_{\alpha}$ components of $\bm{\lambda}-\tilde{\bm{\lambda}}$ are nonzero. Thus, we have
			\begin{equation}
			\left(\bm{\lambda}-\sum_{\bm{v}\in\Omega_L^{\alpha}}c_{\alpha}(\bm{v}) \bm{v}\right)\cdot\bm{r}^{\alpha}=0,
			\end{equation}
			which implies that
			\begin{equation}
			\bm{\lambda}\cdot\bm{r}^{\alpha}=\left(\sum_{\bm{v}\in\Omega_L^{\alpha}}c_{\alpha}(\bm{v}) \bm{v}\right)\cdot\bm{r}^{\alpha}=\sum_{\bm{v}\in\Omega_L^{\alpha}}\big(c_{\alpha}(\bm{v}) (\bm{v}\cdot\bm{r}^{\alpha})\big).
			\end{equation}
			By Lemma 2 in \cite{yeung99}, for any $\bm{v}\in\Omega_L^{\alpha}$ such that $c_{\alpha}(\bm{v})>0$, 
			the first $l_{\alpha}$ components are equal to 1, $(\alpha-l_{\alpha})$ of the other $L-l_{\alpha}$ components are equal to 1, 
			and the rest are equal to 0. On the other hand, the first $l_{\alpha}$ components of $\bm{r}^{\alpha}$ are equal to zero. 
			Thus, for any $\bm{v}\in\Omega_L^{\alpha}$ such that $c_{\alpha}(\bm{v})>0$, we have 
			\begin{equation}
			\bm{v}\cdot\bm{r}^{\alpha}=(\alpha-l_{\alpha}) \frac{\sm_\alpha^*}{\alpha-l_{\alpha}}=\sm_\alpha^*.
			\end{equation}
			Then
			\begin{eqnarray}
			\bm{\lambda}\cdot\bm{r}^{\alpha}&=&\sum_{\bm{v}\in\Omega_L^{\alpha}}\big(c_{\alpha}(\bm{v}) (\bm{v}\cdot\bm{r}^{\alpha})\big)  \\
			&=&\sum_{\bm{v}\in\Omega_L^{\alpha}}c_{\alpha}(\bm{v}) \sm_\alpha^*  \\
			&=&\left(\sum_{\bm{v}\in\Omega_L^{\alpha}}c_{\alpha}(\bm{v})\right)\sm_\alpha^*  \\
			&=&f_{\alpha}(\bm{\lambda})\sm_\alpha^*. \label{subrate-sum-4}
			\end{eqnarray}
		\end{enumerate}
		Summing up \eqref{subrate-sum-1}, \eqref{subrate-sum-2}, and \eqref{subrate-sum-4} over all $1\leq \alpha\leq L$, 
		we obtain $\bm{\lambda}\cdot\bm{\sR}=g_{\eta^*}(\bm{\lambda})$. 
		Therefore, \Cref{thm-group-pairwise-region-alternative} is proved.
	\end{enumerate}
	
	%%----------------------------- section: converse proof of lemma-eta* ------------------------------------------------
	\section{Proof of \Cref{lemma-eta*}}\label{proof_lemma-eta*}
	We prove the lemma by proving (i) for $r+1\leq \eta^*\leq L$, $\sum_{\alpha=1}^{r}(\alpha-1)\sm_\alpha\leq \sum_{\alpha=r+1}^{\eta^*}\sm_\alpha$ is equivalent to $g_{\eta^*}(\bm{\lambda})\geq g_{\eta^*+1}(\bm{\lambda})\geq \cdots\geq g_{L+1}(\bm{\lambda})$; (ii) for $r+2\leq \eta^*\leq L+1$, $\sum_{\alpha=r+1}^{\eta^*-1}\sm_\alpha<\sum_{\alpha=1}^{r}(\alpha-1)\sm_\alpha$ is equivalent to $g_{\eta^*}(\bm{\lambda})> g_{\eta^*-1}(\bm{\lambda})> \cdots>g_{r+1}(\bm{\lambda})$.
	\begin{enumerate}[(i)]
		\item For $\eta^*\leq \eta\leq L$, we have
		\begin{align}
		&g_{\eta}(\bm{\lambda})\geq g_{\eta+1}(\bm{\lambda})   \label{pf_lemma-eta*-right-1} \\
		&\Updownarrow  \nonumber \\
		&\sum_{\alpha=\eta+1}^{L}f_{\alpha}(\bm{\lambda})\sm_\alpha+ f_{\eta}(\bm{\lambda}) \left[\sum_{\alpha=r+1}^{\eta}\sm_\alpha- \sum_{\alpha=1}^{r}(\alpha-1)\sm_\alpha\right]  \nonumber  \\
		&\geq \sum_{\alpha=\eta+2}^{L}f_{\alpha}(\bm{\lambda})\sm_\alpha  \nonumber \\
		&\quad + f_{\eta+1}(\bm{\lambda}) \Bigg[\sum_{\alpha=r+1}^{\eta+1}\sm_\alpha-\sum_{\alpha=1}^{r}(\alpha-1)\sm_\alpha\Bigg] \label{pf_lemma-eta*-right-2} \\
		&\Updownarrow  \nonumber \\
		&f_{\eta+1}(\bm{\lambda})\sm_{\eta+1}+ f_{\eta}(\bm{\lambda}) \left[\sum_{\alpha=r+1}^{\eta}\sm_\alpha- \sum_{\alpha=1}^{r}(\alpha-1)\sm_\alpha\right]  \nonumber  \\
		&\geq  f_{\eta+1}(\bm{\lambda}) \left[\sum_{\alpha=r+1}^{\eta+1}\sm_\alpha- \sum_{\alpha=1}^{r}(\alpha-1)\sm_\alpha\right]   \label{pf_lemma-eta*-right-3} \\
		&\Updownarrow  \nonumber \\
		&\left(f_{\eta}(\bm{\lambda})-f_{\eta+1}(\bm{\lambda})\right) \left[\sum_{\alpha=r+1}^{\eta}\sm_\alpha- \sum_{\alpha=1}^{r}(\alpha-1)\sm_\alpha\right] \geq 0  \label{pf_lemma-eta*-right-4}\\
		&\Updownarrow  \nonumber \\
		&\sum_{\alpha=r+1}^{\eta}\sm_\alpha\geq \sum_{\alpha=1}^{r}(\alpha-1)\sm_\alpha.  \label{pf_lemma-eta*-right-5}
		\end{align}
		Thus, we conclude that 
		\begin{equation}
		g_{\eta^*}(\bm{\lambda})\geq g_{\eta^*+1}(\bm{\lambda})\geq \cdots\geq g_{L+1}(\bm{\lambda})
		\end{equation}
		is equivalent to 
		\begin{equation}
		\sum_{\alpha=1}^{r}(\alpha-1)\sm_\alpha\leq \sum_{\alpha=r+1}^{\eta}\sm_\alpha \text{ for all }\eta^*\leq \eta\leq L+1,
		\end{equation}
		which is also equivalent to
		\begin{equation}
		\sum_{\alpha=1}^{r}(\alpha-1)\sm_\alpha\leq \sum_{\alpha=r+1}^{\eta^*}\sm_\alpha.
		\end{equation}
		
		\item For $r+1\leq \eta\leq\eta^*$, we have
		\begin{align}
		&g_{\eta}(\bm{\lambda})>g_{\eta-1}(\bm{\lambda})   \label{pf_lemma-eta*-left-1} \\
		&\Updownarrow  \nonumber \\
		&\sum_{\alpha=\eta+1}^{L}f_{\alpha}(\bm{\lambda})\sm_\alpha+ f_{\eta}(\bm{\lambda}) \left[\sum_{\alpha=r+1}^{\eta}\sm_\alpha-\sum_{\alpha=1}^{r}(\alpha-1)\sm_\alpha\right]  \nonumber  \\
		&> \sum_{\alpha=\eta}^{L}f_{\alpha}(\bm{\lambda})\sm_\alpha  \nonumber \\
		&\quad + f_{\eta-1}(\bm{\lambda}) \left[\sum_{\alpha=r+1}^{\eta-1}\sm_\alpha- \sum_{\alpha=1}^{r}(\alpha-1)\sm_\alpha\right] \label{pf_lemma-eta*-left-2} \\
		&\Updownarrow  \nonumber \\
		&f_{\eta}(\bm{\lambda}) \left[\sum_{\alpha=r+1}^{\eta}\sm_\alpha- \sum_{\alpha=1}^{r}(\alpha-1)\sm_\alpha\right]  \nonumber  \\
		&> f_{\eta}(\bm{\lambda})\sm_{\eta}+ f_{\eta-1}(\bm{\lambda}) \left[\sum_{\alpha=r+1}^{\eta-1}\sm_\alpha- \sum_{\alpha=1}^{r}(\alpha-1)\sm_\alpha\right]  \label{pf_lemma-eta*-left-3} \\
		&\Updownarrow  \nonumber \\
		&\left(f_{\eta-1}(\bm{\lambda})-f_{\eta}(\bm{\lambda})\right) \left[\sum_{\alpha=r+1}^{\eta-1}\sm_\alpha- \sum_{\alpha=1}^{r}(\alpha-1)\sm_\alpha\right]<0   \label{pf_lemma-eta*-left-4} \\
		&\Updownarrow  \nonumber \\
		&\sum_{\alpha=r+1}^{\eta-1}\sm_\alpha<\sum_{\alpha=1}^{r}(\alpha-1)\sm_\alpha.   \label{pf_lemma-eta*-left-5}
		\end{align}
		Thus, we conclude that 
		\begin{equation}
		g_{\eta^*}(\bm{\lambda})>g_{\eta^*-1}(\bm{\lambda})>\cdots>g_{r+1}(\bm{\lambda})
		\end{equation}
		is equivalent to 
		\begin{equation}
		\sum_{\alpha=r+1}^{\eta-1}\sm_\alpha<\sum_{\alpha=1}^{r}(\alpha-1)\sm_\alpha \text{ for all }r+1\leq \eta\leq \eta^*,
		\end{equation}
		which is also equivalent to
		\begin{equation}
		\sum_{\alpha=r+1}^{\eta^*-1}\sm_\alpha<\sum_{\alpha=1}^{r}(\alpha-1)\sm_\alpha.
		\end{equation}
	\end{enumerate}
%	The lemma is proved. 
	
	%%================== Section Converse Proof ======================================
	\section{Converse Proof of \Cref{thm-group-pairwise-region} (continuing)}\label{proof_DS-converse}
	%%	Optimality Proof of group pairwise Coding 
	In order to prove the inequality in \eqref{equal-region}, i.e., $\bm{\lambda}\cdot\bm{\sR}\geq g_{\eta^*}(\bm{\lambda})$, 
	we first introduce some lemmas and important parameters that will be used. 
	The connection between the example at the end of \Cref{section-DS-proof} and the general converse proof here will be provided when the corresponding parameters are defined. 
	
	Similar to Lemma 6 in \cite{guo-yeung-SMDC-IT20}, the following lemma gives a sufficient condition of redundancy in the characterization of the rate region. 
	%The proof of the lemma can be found in Appendix \ref{proof_lemma-considerable-lambda}.
	%%------------------------------------- begin of lemma-considerable-lambda ------------------------------
	\begin{lemma}\label{lemma-considerable-lambda}
		For any $\eta=r+1,r+2,\cdots,L+1$, the rate constraint $\bm{\lambda}\cdot\bm{\sR}\geq g_{\eta}(\bm{\lambda})$ is redundant in the characterization of $\cR_{L,r}^*$ if
		\begin{equation}
		\lambda_{1}>\frac{\lambda_{2}+\lambda_{3}+\cdots+\lambda_{L}}{\eta-1}. \label{lambda-general-1}
		\end{equation}
	\end{lemma}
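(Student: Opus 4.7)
The plan is to show that the constraint $\bm{\lambda}\cdot\bm{\sR}\geq g_\eta(\bm{\lambda})$ is implied by two other constraints already present in the characterization of $\cR_{L,r}^*$. Using the permutation invariance of $f_\alpha$ (Lemma~2 of \cite{guo-yeung-SMDC-IT20}), I will assume without loss of generality that $\bm{\lambda}$ is ordered, so that the hypothesis refers to the largest component. The natural split is
\begin{equation*}
\bm{\lambda} = a\,\bm{\lambda}^{[1]} + \bm{\mu},\quad a = \lambda_1 - \tfrac{\lambda_2+\cdots+\lambda_L}{\eta-1} > 0,\quad \bm{\mu} = \left(\tfrac{\lambda_2+\cdots+\lambda_L}{\eta-1},\, \lambda_2, \ldots, \lambda_L\right),
\end{equation*}
which pushes $\mu_1$ onto the boundary $\mu_1 = (\mu_2+\cdots+\mu_L)/(\eta-1)$.

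The technical core will be to prove the additive identity $g_\eta(\bm{\lambda}) = a\,g_\eta(\bm{\lambda}^{[1]}) + g_\eta(\bm{\mu})$. The $f_1$ contributions split cleanly since $f_1(\bm{\lambda}) = \sum_i \lambda_i$ is linear and $f_1(\bm{\lambda}^{[1]}) = 1$. For $\alpha\geq 2$ we have $f_\alpha(\bm{\lambda}^{[1]}) = 0$, so the claim reduces to $f_\alpha(\bm{\lambda}) = f_\alpha(\bm{\mu})$ for each $\alpha\in\{\eta,\eta+1,\ldots,L\}$. For $\bm{\lambda}$, the hypothesis implies $\lambda_1 > (\sum_{i=2}^L\lambda_i)/(\alpha-1)$ for every such $\alpha$, so Lemma~7(ii) of \cite{guo-yeung-SMDC-IT20} yields $f_\alpha(\bm{\lambda}) = f_{\alpha-1}(\lambda_2,\ldots,\lambda_L)$, a quantity free of $\lambda_1$. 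For $\bm{\mu}$ I will sort it into an ordered vector $\bm{\nu}$; the easy case $\mu_1\geq\lambda_2$ gives $\bm{\nu}=\bm{\mu}$, and the boundary equality $\mu_1 = (\sum_{i=2}^L\mu_i)/(\eta-1)$ allows a second application of Lemma~7(ii) to reach the same reduced value. In the harder case $\mu_1 < \lambda_2$, the largest entry of $\bm{\nu}$ becomes $\lambda_2$, and I will verify by direct algebra (using the case-defining inequality $(\eta-2)\lambda_2 > \sum_{i=3}^L\lambda_i$) that the threshold condition of Lemma~7(ii) still holds on $\bm{\nu}$, reducing $f_\alpha(\bm{\nu})$ to $f_{\alpha-1}(\mu_1,\lambda_3,\ldots,\lambda_L)$. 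Invoking Lemma~5 of \cite{guo-yeung-SMDC-IT20} (the single-coordinate comparison) together with one further application of Lemma~7(ii) then shows that this last quantity collapses to $f_{\alpha-2}(\lambda_3,\ldots,\lambda_L)$, which is also the value reached by the reduction chain for $f_\alpha(\bm{\lambda})$, completing the identity.

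Once the identity is in hand, redundancy follows immediately: any $\bm{\sR}$ meeting the remaining constraints of $\cR_{L,r}^*$ satisfies $\bm{\lambda}^{[1]}\cdot\bm{\sR}\geq g_\eta(\bm{\lambda}^{[1]}) = \sum_{\alpha=1}^r\sm_\alpha$ (which is just the single-encoder bound \eqref{equal-region-L+1}) and also $\bm{\mu}\cdot\bm{\sR}\geq g_\eta(\bm{\mu})$ (the vector $\bm{\mu}$ sits exactly on the boundary of the lemma's strict hypothesis and is therefore not excluded by the present lemma), and summing these with weights $a$ and $1$ gives $\bm{\lambda}\cdot\bm{\sR}\geq a\,g_\eta(\bm{\lambda}^{[1]}) + g_\eta(\bm{\mu}) = g_\eta(\bm{\lambda})$. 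The principal obstacle lies in the case $\mu_1<\lambda_2$: here the need to resort $\bm{\mu}$ breaks the direct application of Lemma~7(ii), and the full toolbox from \cite{guo-yeung-SMDC-IT20}---permutation invariance, single-coordinate comparison, and iteration of the threshold reduction---must be combined to show that the apparent extra dependence of $f_{\alpha-1}$ on a perturbed coordinate is in fact trivial for $\alpha\geq\eta$.
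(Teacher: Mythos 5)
Your proposal is correct and follows essentially the same route as the paper's proof: your $\bm{\mu}$ is exactly the paper's $\bm{\lambda}'$ (obtained by lowering $\lambda_1$ to the boundary value $\tfrac{\lambda_2+\cdots+\lambda_L}{\eta-1}$), the identity $g_\eta(\bm{\lambda}) = a\,g_\eta(\bm{\lambda}^{[1]}) + g_\eta(\bm{\mu})$ is precisely what the paper establishes by showing $f_\alpha(\bm{\lambda}) = f_\alpha(\bm{\lambda}')$ for $\alpha\geq\eta$ together with the linearity of $f_1$, and the concluding step of combining the boundary constraint with the single-rate bound $\sR_1\geq\sum_{\alpha=1}^r\sm_\alpha$ is identical. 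The only (minor) difference is that you apply Lemma 7(ii) of \cite{guo-yeung-SMDC-IT20} at every $\alpha\geq\eta$ with an explicit re-sorting case analysis for the possibly unordered $\bm{\mu}$ --- a point the paper glosses over --- whereas the paper applies Lemma 7 only at $\alpha=\eta$ and then propagates the equality to $\alpha>\eta$ via Lemma 5 of \cite{guo-yeung-SMDC-IT20}.
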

	\begin{proof}
		See Appendix \ref{proof_lemma-considerable-lambda}.
	\end{proof}
	%%------------------------------------- end of lemma-considerable-lambda ------------------------------
	
	For any $\eta\in\{r+1,r+2,\cdots,L+1\}$, $\bm{\lambda}$ is called an \textit{$\eta$-considerable coefficient vector} if $\lambda_1\geq \lambda_2\geq \cdots\geq \lambda_L$ and 
	\begin{equation}
	\lambda_1\leq \frac{\lambda_2+\lambda_3+\cdots+\lambda_L}{\eta-1}.
	\end{equation}
	Denote the set of all $\eta$-considerable coefficient vectors by $\mathbb{R}_{\eta}^L$. Then let 
	\begin{equation}
	\mathbb{R}_{\text{con}}^L=\bigcup_{\eta=r+1}^{L+1}\mathbb{R}_{\eta}^L.
	\end{equation}
	We have the following property on vectors in $\mathbb{R}_{\text{con}}^L$, 
	for which a simple proof is given in Appendix \ref{proof_lemma-f>lambda1}. 
	%%------------------------------------- begin of lemma-f>lambda1 ------------------------------
	\begin{lemma}\label{lemma-f>lambda1}
		For $\eta=r+1,r+2,\cdots,L+1$ and $\bm{\lambda}\in\mathbb{R}_{\eta}^L$, we have $f_{\eta}(\bm{\lambda})\geq \lambda_1$.
	\end{lemma}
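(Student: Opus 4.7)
The plan is to reduce the inequality $f_{\eta}(\bm{\lambda}) \geq \lambda_1$ to the very condition that defines the class $\mathbb{R}_{\eta}^L$, by invoking the closed-form formula for $f_\eta(\bm{\lambda})$ that is available precisely when $\bm{\lambda}$ admits a perfect $\eta$-resolution.

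First, I would split on the value of $\eta$. For $r+1 \leq \eta \leq L$ (so $\eta \geq 2$, since $r \geq 1$), the defining inequality of $\mathbb{R}_{\eta}^L$,
\[
\lambda_1 \leq \frac{\lambda_2 + \lambda_3 + \cdots + \lambda_L}{\eta - 1},
\]
is exactly the hypothesis of Lemma~1 in \cite{guo-yeung-SMDC-IT20}, which yields the closed form
\[
f_{\eta}(\bm{\lambda}) = \frac{1}{\eta}\sum_{i=1}^{L}\lambda_i.
\]
Equivalently, by Lemma~7 in \cite{yeung99} a perfect $\eta$-resolution exists, and Lemma~4(ii) of \cite{yeung99} gives the same closed form. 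Either route works.

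Next, I would observe that the desired inequality $f_\eta(\bm{\lambda}) \geq \lambda_1$ rearranges to $\sum_{i=2}^{L}\lambda_i \geq (\eta-1)\lambda_1$, which is exactly the defining inequality of $\mathbb{R}_{\eta}^L$. So the inequality in the lemma follows immediately from the closed-form evaluation of $f_\eta(\bm{\lambda})$.

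Finally, the case $\eta = L+1$ is handled vacuously. If $\bm{\lambda} \in \mathbb{R}_{L+1}^L$, then $\lambda_1 \leq \frac{1}{L}\sum_{i=2}^{L}\lambda_i$; but since $\bm{\lambda}$ is ordered we also have $\sum_{i=2}^{L}\lambda_i \leq (L-1)\lambda_1$, which forces $\lambda_1 \leq \tfrac{L-1}{L}\lambda_1$, hence $\lambda_1 = 0$, contradicting $\bm{\lambda} \neq \bm{0}$. So $\mathbb{R}_{L+1}^L = \emptyset$ and there is nothing to prove. I do not foresee any genuine obstacle here; the only mild subtlety is keeping track of the boundary case $\eta = L+1$ where the convention $f_{L+1}(\bm{\lambda}) = 0$ would otherwise appear to conflict with the claim.
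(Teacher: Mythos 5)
Your proposal is correct and follows essentially the same route as the paper: invoke the defining inequality of $\mathbb{R}_{\eta}^L$ together with Lemma~1 of \cite{guo-yeung-SMDC-IT20} to get $f_{\eta}(\bm{\lambda})=\frac{1}{\eta}\sum_{i=1}^{L}\lambda_i$, and observe that this closed form is at least $\lambda_1$ exactly when the defining inequality holds. Your separate treatment of $\eta=L+1$ (showing $\mathbb{R}_{L+1}^L=\emptyset$ so the convention $f_{L+1}\equiv 0$ causes no conflict) is a boundary case the paper's proof silently skips, and it is a worthwhile addition.
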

	%%------------------------------------- end of lemma-f>lambda1 ------------------------------
	
	%For any $\eta\in\{r+1,r+2,\cdots,L+1\}$, by \Cref{lemma-considerable-lambda}, 
	%we consider only $\bm{\lambda}\in\mathbb{R}_{\eta}^L$ in the sequel. 
	By \Cref{lemma-considerable-lambda}, we only need to prove $\bm{\lambda}\cdot\bm{\sR}\geq g_{\eta^*}(\bm{\lambda})$ for $\bm{\lambda}\in\mathbb{R}_{\text{con}}^L$.
	Thus, we assume $\bm{\lambda}\in\mathbb{R}_{\text{con}}^L$ in the sequel. 
	From Theorem~1 in~\cite{guo-yeung-SMDC-IT20}, we can verify that 
	\begin{equation}
	\sum_{i=1}^{L}\lambda_i=f_1(\bm{\lambda})\geq \eta f_{\eta}(\bm{\lambda}),  \label{f(1)-vs-f(eta)}
	\end{equation}
	which implies that 
	\begin{equation}
	\sum_{i=1}^{L}\lambda_i-(r-1) f_{\eta}(\bm{\lambda})\geq \left[\eta-(r-1)\right] f_{\eta}(\bm{\lambda})>0.   \label{subtraction-ability}
	\end{equation}
	Let $\xi_{\alpha}\in\cL$ be the index of $\bm{\lambda}$ such that 
	\begin{equation}
	\sum_{i=1}^{\xi_{\alpha}-1}\lambda_i<\alpha f_{\eta}(\bm{\lambda})\leq \sum_{i=1}^{\xi_{\alpha}}\lambda_i.
	\end{equation}
	For simplicity, let $\xi_0=1$. From \Cref{lemma-f>lambda1}, we can see that 
	\begin{equation}
	f_{\eta}(\bm{\lambda})\geq \lambda_1\geq \lambda_2\geq \cdots\geq \lambda_L,  \label{f(eta)-vs-lambda}
	\end{equation}
	which implies
	\begin{equation}
	\xi_0\leq \xi_1<\xi_2<\cdots<\xi_{r}.
	\end{equation}
	and 
	\begin{equation}
	\xi_i\geq i.
	\end{equation}
	
	%%----------------------------------------------fig-define-gamma-partition------------------------------------------------------
	\begin{figure}[!t]
		\centering
		\begin{tikzpicture}[font=\scriptsize, scale=1.0]
		\draw[thick] (0,0) rectangle (6.0,-0.7);
		\draw[thick] (0,-0.8) rectangle (6.0,-1.5);
		\draw[thick] (0,-1.6) rectangle (6.0,-2.3);
		\draw[thick] (0,-2.4) rectangle (6.0,-3.1);
		\draw[thick] (0,-3.2) rectangle (6.0,-3.9);
		\draw[thick] (0,-4.0) rectangle (6.0,-4.5);
		\draw[thick] (0,-4.6) rectangle (6.0,-5.3);
		
		\draw (4.0,0)--(4.0,-0.7);
		\draw (1.8,-0.8)--(1.8,-1.5) (5.0,-0.8)--(5.0,-1.5);
		\draw (2,-1.6)--(2,-2.3) (4.7,-1.6)--(4.7,-2.3);
		\draw (1.2,-2.4)--(1.2,-3.1) (3.6,-2.4)--(3.6,-3.1) (5.5,-2.4)--(5.5,-3.1);
		\draw (1.5,-3.2)--(1.5,-3.9) (3.2,-3.2)--(3.2,-3.9) (4.9,-3.2)--(4.9,-3.9);
		\draw (1.5,-4.6)--(1.5,-5.3) (3.0,-4.6)--(3.0,-5.3) (4.0,-4.6)--(4.0,-5.3) (5.2,-4.6)--(5.2,-5.3);
		
		\node (g1,1) at (2.0,-0.35) {$\gamma_1^{(1)}$}; \node (g2,1) at (5.0,-0.35) {$\gamma_2^{(1)}$};
		\node (g2,2) at (0.9,-1.15) {$\gamma_2^{(2)}$}; \node (g3,2) at (3.4,-1.15) {$\gamma_3^{(2)}$}; \node (g4,2) at (5.5,-1.15) {$\gamma_4^{(2)}$};
		\node (g4,3) at (0.9,-1.95) {$\gamma_4^{(3)}$}; \node (g5,3) at (3.4,-1.95) {$\gamma_5^{(3)}$}; \node (g6,3) at (5.4,-1.95) {$\gamma_6^{(3)}$};
		\node (g6,4) at (0.6,-2.75) {$\gamma_6^{(4)}$}; \node (g7,4) at (2.3,-2.75) {$\gamma_7^{(4)}$}; \node (g8,4) at (4.5,-2.75) {$\gamma_8^{(4)}$}; \node (g9,4) at (5.75,-2.75) {$\gamma_9^{(4)}$};
		\node (g9,5) at (0.7,-3.55) {$\gamma_9^{(5)}$}; \node (g10,5) at (2.4,-3.55) {$\gamma_{10}^{(5)}$}; \node (g11,5) at (4.2,-3.55) {$\gamma_{11}^{(5)}$}; \node (g12,5) at (5.5,-3.55) {$\gamma_{12}^{(5)}$};
		\node (vdots) at (3,-4.15) {$\bm{\vdots}$};
		\node (g9,5) at (0.7,-4.95) {$\gamma_{\xi_{r-2}}^{(r-1)}$}; \node (g10,5) at (2.2,-4.95) {$\gamma_{\xi_{r-2}+1}^{(r-1)}$};  \node (g11,5) at (3.5,-4.95) {$\bm{\cdots}$}; \node (g12,5) at (4.6,-4.95) {$\gamma_{\xi_{r-1}-1}^{(r-1)}$}; \node (g12,5) at (5.6,-4.95) {$\gamma_{\xi_{{r-1}}}^{(r-1)}$};
		\end{tikzpicture}
		\caption{Illustration of $\gamma^{(\alpha)}_{i}$} \label{fig-define-gamma-partition}
	\end{figure}
	%%----------------------------------------------fig-define-gamma-partition---------------------------------------------------
	Due to \eqref{subtraction-ability}, we can subtract $r-1$ of $f_{\eta}(\bm{\lambda})$ one by one from the sequence $\lambda_1,\lambda_2,\cdots,\lambda_L$. 
	The subtraction process is illustrated in Fig.~\ref{fig-define-gamma-partition}.
	For $\alpha=1,2,\cdots,r-1$, let $\bm{\gamma}^{(\alpha)}=\left(\gamma^{(\alpha)}_1, \gamma^{(\alpha)}_2, \cdots, \gamma^{(\alpha)}_L\right)$ be the $\alpha$-th \textit{subtraction} 
	and $\bm{\lambda}^{(\alpha)}=\left(\lambda^{(\alpha)}_1, \lambda^{(\alpha)}_2,\cdots, \lambda^{(\alpha)}_L\right)$ be the $\alpha$-th \textit{residue} after the first $\alpha$ subtractions such that
	\begin{equation}
	\gamma_i^{(\alpha)}=
	\begin{cases}
	\sum\limits_{i=1}^{\xi_{\alpha-1}}\lambda_i-(\alpha-1) f_{\eta}(\bm{\lambda}),&\text{ if }i=\xi_{\alpha-1}\\
	\alpha f_{\eta}(\bm{\lambda})-\sum\limits_{i=1}^{\xi_{\alpha}-1}\lambda_i,&\text{ if }i=\xi_{\alpha}\\
	\lambda_i,&\text{ if }\xi_{\alpha-1}<i<\xi_{\alpha}\\
	0,&\text{ if }i<\xi_{\alpha-1} \text{ or }i>\xi_{\alpha}.
	\end{cases}   \label{def-gamma(i,alpha)}
	\end{equation}
	and $\lambda_i^{(\alpha)}=\lambda_i-\sum_{j=1}^{\alpha}\gamma_i^{(j)}$. 
	Thus,
	\begin{equation}
	\lambda_i^{(\alpha)}=
	\begin{cases}
	0,& \text{ if }i<\xi_{\alpha}\\
	\sum_{i=1}^{\xi_{\alpha}}\lambda_i-\alpha f_{\eta}(\bm{\lambda}),&\text{ if }i=\xi_{\alpha}\\
	\lambda_i,&\text{ if }i>\xi_{\alpha}
	\end{cases}  \label{def-lambda(i,alpha)}
	\end{equation}
	It is easy to check that
	\begin{equation}
	\sum_{i=\xi_{\alpha-1}}^{\xi_{\alpha}}\gamma_i^{(\alpha)}=f_{\eta}(\bm{\lambda})   \label{property-gamma-1}
	\end{equation}
	and
	\begin{equation}
	\gamma_{\xi_{\alpha}}^{(\alpha)}+\gamma_{\xi_{\alpha}}^{(\alpha+1)}=\lambda_{\xi_{\alpha}}.  \label{property-gamma-2}
	\end{equation}
	
	\begin{remark}\label{remark-cof-gamma-lambda}
		In the example at the end of \Cref{section-DS-proof}, the subtraction and residue parameters are the coefficients in \eqref{converse-example-2}, which is $\bm{\lambda}^{(1)}=(0,\frac{2}{3},1,1)$ and $\gamma^{(1)}=(1,\frac{1}{3},0,0)$.
	\end{remark}
	
	Let $\bm{\lambda}^{(r-1)}=\left(\lambda_{\xi_{r-1}}^{(r-1)}, \lambda_{\xi_{r-1}+1}^{(r-1)}, \cdots, \lambda_{L}^{(r-1)}\right)$. 
%	Since for any $\alpha\in\cL$, permutation of the vector $\bm{\lambda}^{(r-1)}$ does not change the value of $f_{\alpha}\left(\bm{\lambda}^{(r-1)}\right)$, we can assume without loss of generality that $\lambda_{\xi_{r-1}}^{(r-1)}\geq \lambda_{\xi_{r-1}+1}^{(r-1)}$.
	The following lemma will be used in the converse. 
	The detailed proof of the lemma is given in Appendix \ref{proof_lemma-last-choice}.
	%%-------------------------------- begin of lemma-last-choice ------------------------------------
	\begin{lemma}\label{lemma-last-choice}
		$f_{\eta-(r-1)}\left(\bm{\lambda}^{(r-1)}\right)\geq f_{\eta}(\bm{\lambda})$.
	\end{lemma}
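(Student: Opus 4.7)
The plan is to rewrite $f_{\eta-(r-1)}(\bm{\lambda}^{(r-1)})$ as a minimum of trailing averages via Theorem~1 of \cite{guo-yeung-SMDC-IT20}, and then lower-bound each such average by $f_\eta(\bm{\lambda})$ using an entry-wise cap together with the inequality \eqref{f(1)-vs-f(eta)}.

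First I would establish two preliminary facts. \textbf{(i)} Every nonzero entry of $\bm{\lambda}^{(r-1)}$ is at most $f_\eta(\bm{\lambda})$: by \Cref{lemma-f>lambda1}, $\lambda_i\leq\lambda_1\leq f_\eta(\bm{\lambda})$ for all $i$, and by the defining inequality for $\xi_{r-1}$,
\begin{equation*}
\lambda_{\xi_{r-1}}^{(r-1)}=\sum_{i=1}^{\xi_{r-1}}\lambda_i-(r-1)f_\eta(\bm{\lambda})\leq\lambda_{\xi_{r-1}}\leq f_\eta(\bm{\lambda}).
\end{equation*}
\textbf{(ii)} The length $L':=L-\xi_{r-1}+1$ of $\bm{\lambda}^{(r-1)}$ satisfies $L'\geq\eta-(r-1)+1$, so Theorem~1 of \cite{guo-yeung-SMDC-IT20} is applicable. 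Indeed, combining the entry-wise cap from (i) with the definition of $\xi_{r-1}$ and \eqref{f(1)-vs-f(eta)} gives
\begin{equation*}
(L-\xi_{r-1}+1)f_\eta(\bm{\lambda})\geq\sum_{i=\xi_{r-1}}^{L}\lambda_i>f_1(\bm{\lambda})-(r-1)f_\eta(\bm{\lambda})\geq(\eta-r+1)f_\eta(\bm{\lambda}),
\end{equation*}
and hence $L'\geq\eta-r+2$.

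Using permutation invariance (Lemma~2 of \cite{guo-yeung-SMDC-IT20}) I may arrange $\bm{\lambda}^{(r-1)}$ in nonincreasing order as $\mu_1\geq\cdots\geq\mu_{L'}$ and invoke Theorem~1 of \cite{guo-yeung-SMDC-IT20} to obtain
\begin{equation*}
f_{\eta-(r-1)}(\bm{\lambda}^{(r-1)})=\min_{\beta\in\{0,1,\cdots,\eta-r\}}\frac{1}{\eta-r+1-\beta}\sum_{i=\beta+1}^{L'}\mu_i.
\end{equation*}
For each admissible $\beta$, observing that $\sum_{i=1}^{L'}\mu_i=f_1(\bm{\lambda})-(r-1)f_\eta(\bm{\lambda})$ and bounding the top-$\beta$ portion via the cap $\mu_i\leq f_\eta(\bm{\lambda})$ yields
\begin{equation*}
\sum_{i=\beta+1}^{L'}\mu_i\geq f_1(\bm{\lambda})-(r-1+\beta)f_\eta(\bm{\lambda})\geq(\eta-r+1-\beta)f_\eta(\bm{\lambda}),
\end{equation*}
where the last inequality is \eqref{f(1)-vs-f(eta)}. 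Dividing by $\eta-r+1-\beta$ and taking the minimum over $\beta$ delivers the claim.

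The only delicate point is the length check in (ii), which requires careful use of the definition of $\xi_{r-1}$ in conjunction with \eqref{f(1)-vs-f(eta)}; the remainder of the argument reduces to the entry-wise cap and a telescoping of the total mass of $\bm{\lambda}^{(r-1)}$.
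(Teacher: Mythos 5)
Your proof is correct, and it reaches the conclusion by a somewhat different final step than the paper. Both arguments rest on the same two facts: the total mass identity $\sum_i\lambda_i^{(r-1)}=f_1(\bm{\lambda})-(r-1)f_{\eta}(\bm{\lambda})\geq\left[\eta-(r-1)\right]f_{\eta}(\bm{\lambda})$ (via \eqref{f(1)-vs-f(eta)}) and the entry-wise cap $\lambda_i^{(r-1)}\leq f_{\eta}(\bm{\lambda})$ (via \Cref{lemma-f>lambda1}, cf.\ \eqref{f(eta)-vs-lambda}), together with permutation invariance. The paper then converts the cap into the hypothesis of Lemma 7 of \cite{yeung99}, concludes that the ordered residue has a \emph{perfect} $\left[\eta-(r-1)\right]$-resolution, and so evaluates $f_{\eta-(r-1)}$ \emph{exactly} as the average of the total mass, from which the bound follows; you instead invoke the closed-form expression of Theorem~1 in \cite{guo-yeung-SMDC-IT20} and bound every trailing average term-by-term using the cap, which is slightly more pedestrian but avoids the perfect-resolution machinery and yields the inequality directly. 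Your explicit length check (ii) is a sensible addition that the paper leaves implicit (it is absorbed into the hypothesis of Lemma 7 of \cite{yeung99}). Two cosmetic remarks: the strict middle inequality in (ii) uses $\sum_{i=1}^{\xi_{r-1}-1}\lambda_i<(r-1)f_{\eta}(\bm{\lambda})$, which holds for $r\geq2$ but degenerates to equality when $r=1$ (where the lemma is trivial anyway, and the non-strict chain still gives the only bound you actually need, $L'\geq\eta-(r-1)$); and dividing by $f_{\eta}(\bm{\lambda})$ tacitly assumes $f_{\eta}(\bm{\lambda})>0$, which is harmless since the claim is immediate when $f_{\eta}(\bm{\lambda})=0$.
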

	%%---------------------------------- end of lemma-last-choice ---------------------------------
	By the definition of $f_{\eta-(r-1)}\left(\bm{\lambda}^{(r-1)}\right)$ in \eqref{optimization-1}, the value of the objective function $\sum_{\bm{v}\in\Omega_{L-\xi_{r-1}+1}^{\eta-(r-1)}}c_{\eta-(r-1)}(\bm{v})$ lies in the range $\Big[0,f_{\eta-(r-1)}\left(\bm{\lambda}^{(r-1)}\right)\Big]$. The inequality in \Cref{lemma-last-choice} implies $f_{\eta}(\bm{\lambda})\in\Big[0,f_{\eta-(r-1)}\left(\bm{\lambda}^{(r-1)}\right)\Big]$. Thus, there exists an $[\eta-(r-1)]$-resolution $\left\{c_{\eta-(r-1)}(\bm{v}):\bm{v}\in\Omega_{L-\xi_{r-1}+1}^{\eta-(r-1)}\right\}$ for $\bm{\lambda}^{(r-1)}$ such that
%	By \Cref{lemma-last-choice}, there exists an $[\eta-(r-1)]$-resolution $\left\{c_{\eta-(r-1)}(\bm{v}):\bm{v}\in\Omega_{L-\xi_{r-1}+1}^{\eta-(r-1)}\right\}$ for $\bm{\lambda}^{(r-1)}$ such that 
	\begin{equation}
	\sum_{\bm{v}\in\Omega_{L-\xi_{r-1}+1}^{\eta-(r-1)}}c_{\eta-(r-1)}(\bm{v})=f_{\eta}(\bm{\lambda}).
	\end{equation}
	For $\bm{v}\in\Omega_{L-\xi_{r-1}+1}^{\eta-(r-1)}$ such that $c_{\eta-(r-1)}(\bm{v})>0$, let $\bm{v}=(v_1, v_2, \cdots, v_{L-\xi_{r-1}+1})$ and
	\begin{equation}
	D_{\bm{v}}=\big\{i\in\{\xi_{r-1}, \xi_{r-1}+1,\cdots, L\}:v_{i-\xi_{r-1}+1}=1\big\}.
	\end{equation}
	Let $\cD=\left\{D_{\bm{v}}: \bm{v}\in\Omega_{L-\xi_{r-1}+1}^{\eta-(r-1)}, ~c_{\eta-(r-1)}(\bm{v})>0\right\}$ and $|\cD|=b_1$. 
	For simplicity, let $\cD=\{D_1,D_2,\cdots,D_{b_1}\}$. 
	For $k=\{1,2,\cdots,b_1\}$, if $D_k=D_{\bm{v}}$ for some $\bm{v}\in\Omega_{L-\xi_{r-1}+1}^{\eta-(r-1)}$, let $c(D_k)=c_{\eta-(r-1)}(\bm{v})$. 
	Then
	\begin{equation}
	\sum_{k=1}^{b_1}c(D_k)=f_{\eta}(\bm{\lambda}).  \label{c(D_k)=f(eta)}
	\end{equation}
	
	For $\alpha\in\{1,2,\cdots,r-1\}$, let $A_{\alpha}=\{i_1,i_2,\cdots,i_{\alpha-1}\}$, where $i_j\in\{\xi_{j-1},\xi_{j-1}+1,\cdots,\xi_j\}$ for $j\in\{1,2,\cdots,\alpha-1\}$. 
	Let $\cA^{(\alpha)}$ be the collection of all $A_{\alpha}$. 
	For $A_{\alpha}\in\cA^{(\alpha)}$, for notational simplicity, let
	\begin{equation}
	H_{A_{\alpha}}=\min_{j=1,2,\cdots,\alpha-1}\left\{\sum_{k=\xi_{j-1}}^{i_j}\gamma^{(j)}_k\right\}
	\end{equation}
	and 
	\begin{equation}
	Q_{A_{\alpha}}=\max_{j=1,2,\cdots,\alpha-1}\left\{\sum_{k=\xi_{j-1}}^{i_j-1}\gamma^{(j)}_k\right\}.
	\end{equation}
	For each $i_{\alpha}\in\{\xi_{\alpha-1},\xi_{\alpha-1}+1,\cdots,\xi_{\alpha}\}$, let 
	\begin{equation}
	h_{\alpha}=\sum_{k=\xi_{j-1}}^{i_{\alpha}}\gamma^{(\alpha)}_k
	\end{equation}
	and
	\begin{equation}
	q_{\alpha}=\sum_{k=\xi_{j-1}}^{i_{\alpha}-1}\gamma^{(\alpha)}_k.
	\end{equation}
	%%	\begin{equation}
	%%	\gamma^{A_{\alpha}}_i=
	%%	\begin{cases}
	%%	\gamma^{(\alpha)}_i, &\text{if }Q_{A_{\alpha}}\leq q_{\alpha}\leq h_{\alpha}\leq H_{A_{\alpha}}  \\
	%%	h_{\alpha}-Q_{A_{\alpha}}, &\text{if }q_{\alpha}\leq Q_{A_{\alpha}}\leq h_{\alpha}\leq H_{A_{\alpha}}  \\
	%%	H_{A_{\alpha}}-q_{\alpha}, &\text{if }Q_{A_{\alpha}}\leq q_{\alpha}\leq H_{A_{\alpha}}\leq h_{\alpha}  \\
	%%	H_{A_{\alpha}}-Q_{A_{\alpha}}, &\text{if }q_{\alpha}\leq Q_{A_{\alpha}}\leq H_{A_{\alpha}}\leq h_{\alpha}  \\
	%%	0, &\text{otherwise.}
	%%	\end{cases}
	%%	\end{equation}
	Then for $\alpha\in\{1,2,\cdots,r-1\}$, $i_{\alpha}\in\{\xi_{\alpha-1},\xi_{\alpha-1}+1,\cdots,\xi_{\alpha}\}$, and $A_{\alpha}\in\cA^{(\alpha)}$, define $\gamma^{A_{\alpha}}_{i_\alpha}$ by 
	\begin{align}
	\gamma^{A_{\alpha}}_{i_\alpha}\triangleq \left[\min\{h_{\alpha},H_{A_{\alpha}}\}-\max\{q_{\alpha},Q_{A_{\alpha}}\}\right]^+,
	\end{align}
	where for any $x\in\mathbb{R}$, $[x]^+\triangleq \max\{0,x\}$ as defined after \eqref{def-nonnegative-function}. 
	For notational simplicity, we denote $\gamma^{(\alpha)}_{i_\alpha}$ and $\gamma^{A_{\alpha}}_{i_\alpha}$ by $\gamma^{(\alpha)}_{i}$ and $\gamma^{A_{\alpha}}_{i}$ respectively, where $i\in\{\xi_{\alpha-1},\xi_{\alpha-1}+1,\cdots,\xi_{\alpha}\}$. 
	
	Let $\cA^{(\alpha)}_0$ be the collection of $A_{\alpha}$ such that $\gamma^{A_{\alpha}}_i> 0$. 
	We can verify that for any $i\in\{\xi_{\alpha-1},\xi_{\alpha-1}+1,\cdots, \xi_{\alpha}\}$,
	\begin{equation}
	\sum_{A_{\alpha}\in\cA^{(\alpha)}_0}\gamma_i^{A_{\alpha}}=\gamma_i^{(\alpha)}.   \label{gamma-subset-partition-property}
	\end{equation}
	This means that $\gamma^{A_{\alpha}}_{i},~A_{\alpha}\in\cA^{(\alpha)}$ is a partition of $\gamma^{(\alpha)}_{i}$. 
	This partition is the key idea of the converse proof in \eqref{converse-proof-1}-\eqref{converse-proof-5} that we recursively partition the coefficient of an entropy term into coefficients of entropies in a lower layer. For example, the coefficient of $H(W_1,W_2,W_3|W_4)$ is partitioned into coefficients of $H(W_1,W_2|W_3,W_4)$, $H(W_1,W_3|W_2,W_4)$, and $H(W_2,W_3|W_1,W_4)$. 
	
	For $\alpha=1$, we can see that $\cA^{(1)}_0=\{\emptyset\}$ and for $i\in\{1,2,\cdots,\xi_1\}$,
	\begin{equation}
	\gamma^{A_1}_i=\gamma^{(1)}_i.
	\end{equation}
	If there is an $A_\alpha$ such that $i\in A_\alpha$, then $i=\xi_{\alpha-1}$.
	In particular, for all $A_{\alpha}$ such that $\xi_{\alpha-1}\in A_{\alpha}$, we have $\gamma^{A_{\alpha}}_{\xi_{\alpha-1}}=0$ since
	\begin{eqnarray}
	\gamma_{\xi_{\alpha}}^{(\alpha+1)}&=&\lambda_{\xi_{\alpha}}-\gamma_{\xi_{\alpha}}^{(\alpha)}  \nonumber \\
	&\leq &f_{\eta}(\bm{\lambda})-\gamma_{\xi_{\alpha}}^{(\alpha)}  \nonumber \\
	&=&\sum_{i=\xi_{\alpha-1}}^{\xi_{\alpha}-1}\gamma_i^{(\alpha)},  \label{partition-condiiton}
	\end{eqnarray}
	where the inequality follows from \Cref{lemma-f>lambda1}. 
	It is easy to check that for $i\in\{1,2,\cdots, \xi_{\alpha-1}-1\}$,
	\begin{equation}
	\sum_{k=\xi_{\alpha-1}}^{\xi_{\alpha}}\sum_{A_{\alpha}\in\cA^{(\alpha)}_0:~i\in A_{\alpha}}\gamma_k^{A_{\alpha}}=\lambda_i  \label{gamma-subset-property-2}
	\end{equation}
	and for $i=\xi_{\alpha-1}$,
	\begin{equation}
	\sum_{k=\xi_{\alpha-1}}^{\xi_{\alpha}}\sum_{A_{\alpha}\in\cA^{(\alpha)}_0:~\xi_{\alpha-1}\in A_{\alpha}}\gamma_k^{A_{\alpha}}=\gamma_{\xi_{\alpha-1}}^{(\alpha-1)}.   \label{gamma-subset-property-3}
	\end{equation}
	Thus,
	\begin{align}
	&\sum_{k=\xi_{\alpha-1}}^{\xi_{\alpha}}\sum_{A_{\alpha}\in\cA^{(\alpha)}_0:~\xi_{\alpha-1}\in A_{\alpha}}\gamma_k^{A_{\alpha}}+\sum_{A_{\alpha}\in\cA^{(\alpha)}_0}\gamma_{\xi_{\alpha-1}}^{A_{\alpha}}  \nonumber \\
	&=\gamma_{\xi_{\alpha-1}}^{(\alpha-1)}+\gamma_{\xi_{\alpha-1}}^{(\alpha)}  \nonumber \\
	&=\lambda_{\xi_{\alpha-1}},   \label{gamma-subset-property-5}
	\end{align}
	where the first equality follows from \eqref{gamma-subset-property-3} and \eqref{gamma-subset-partition-property}, and the second equality follows from \eqref{property-gamma-2}.
	
	For any $k\in\{1,2,\cdots,\alpha-1\}$ and $A_{\alpha}\in\cA^{(\alpha)}_0$, 
	let $A_{\alpha}^k=\{i_1,i_2,\cdots,i_k\}$ be the set of the first $k$ smallest elements in $A_{\alpha}$. 
	In particular, $A_{\alpha}^{\alpha-1}=A_{\alpha}$. Then the condition $\gamma_i^{A_{\alpha}}=\sum_{k=\xi_{\alpha}}^{\xi_{\alpha+1}}\gamma_k^{\{i\}\cup A_{\alpha}}$ implies that
	\begin{equation}
	\gamma_i^{A_{\alpha}}=\sum_{A_{\alpha+1}^{\alpha-1}\in\cA^{(\alpha+1)}_0: ~A_{\alpha+1}^{\alpha-1}=A_{\alpha}}\gamma_j^{A_{\alpha+1}}.
	\end{equation}
	For $i\in\cL$ and $\alpha\in\{1,2,\cdots,r-1\}$, we have
	\begin{equation}
	\lambda_i=\lambda_i^{(\alpha)}+\sum_{k=1}^{\alpha}\gamma_i^{(k)}=\lambda_i^{(\alpha)}+\sum_{k=1}^{\alpha}\sum_{A_k\in\cA^{(k)}_0}\gamma_i^{A_k}.
	\end{equation}
	In particular, for $\alpha=r-1$,
	\begin{equation}
	\lambda_i=\lambda_i^{(r-1)}+\sum_{k=1}^{r-1}\sum_{A_k\in\cA^{(k)}_0}\gamma_i^{A_k}.   \label{lambda(i,r-1)}
	\end{equation}
	
	Let $\cA^{(r)}=\Big\{\{i\}\cup A_{r-1}: \gamma_i^{A_{r-1}}> 0 \text{ for }i\in\{\xi_{r-2},\xi_{r-2}+1,\cdots,\xi_{r-1}\}\text{ and }A_{r-1}\in\cA^{(r-1)}_0\Big\}$. 
	Denote the cardinality of $\cA^{(r)}$ by $b_2$. 
	For simplicity, let $\cA^{(r)}=\{B_1,B_2,\cdots,B_{b_2}\}$. 
	For $j\in\{1,2,\cdots,b_2\}$, \eqref{partition-condiiton} implies that
	\begin{equation}
	|B_j|=r-1.
	\end{equation}
	Without loss of generality, let $B_j=\{i\}\cup A_{r-1}$ for some $i\in\{\xi_{r-2},\xi_{r-2}+1,\cdots,\xi_{r-1}\}$ and $A_{r-1}\in\cA^{(r-1)}_0$. 
	For $k\in\{1,2,\cdots,r-1\}$, let 
	\begin{equation}
	B_j^k=
	\begin{cases}
	A_{r-1}^{k},&\text{if }1\leq k\leq r-2  \\
	B_j,&\text{if }k=r-1.
	\end{cases}
	\end{equation}
	Note that $B_j^k$ is the set of the first $k$ smallest elements in $B_j$. Let $\gamma(B_j)=\gamma_i^{A_{r-1}}$ which is the number of $B_j$. 
	Then we have 
	\begin{eqnarray}
	\sum_{j=1}^{b_2} \gamma(B_j)&=&\sum_{i=\xi_{r-2}}^{\xi_{r-1}}~\sum_{A_{r-1}\in\cA^{(r-1)}_0} \gamma_i^{A_{r-1}}  \nonumber  \\
	&=&\sum_{i=\xi_{r-2}}^{\xi_{r-1}} \gamma^{(r-1)}_i  \label{gamma(B)=c(D)-1} \\
	&=&f_{\eta}(\bm{\lambda})  \label{gamma(B)=c(D)-2} \\
	&=&\sum_{k=1}^{b_1}c(D_k),  \label{gamma(B)=c(D)-3}
	\end{eqnarray}
	where \eqref{gamma(B)=c(D)-1} follows from \eqref{gamma-subset-partition-property}, 
	\eqref{gamma(B)=c(D)-2} follows from \eqref{property-gamma-1}, 
	and \eqref{gamma(B)=c(D)-3} follows from \eqref{c(D_k)=f(eta)}. 
	This implies that we have a one-to-one correspondence between $f_{\eta}(\bm{\lambda})$ of $D_{k}$'s and $f_{\eta}(\bm{\lambda})$ of $B_j$'s. 
	The mapping defined by overlap in \reffig{fig-mapping} is a simple one-to-one correspondence.
	%%----------------------------------------------figure one-to-one mapping-------------------------------------------------------
	\begin{figure}[!t]
		\centering
		\begin{tikzpicture}[font=\scriptsize, scale=0.8]
		\draw[thick] (0,0.92) rectangle (9.3,1.8);
		\draw[thick] (0,0) rectangle (9.3,0.87);
		\draw (1.2,0.92)--(1.2,1.8) (2.2,0.92)--(2.2,1.8) (4.8,0.92)--(4.8,1.8) (5.5,0.92)--(5.5,1.8) (6.5,0.92)--(6.5,1.8) (7.3,0.92)--(7.3,1.8) (8.0,0.92)--(8.0,1.8);
		\draw (0.8,0)--(0.8,0.87) (1.5,0)--(1.5,0.87) (2.6,0)--(2.6,0.87) (3.2,0)--(3.2,0.87) (4.5,0)--(4.5,0.87) (6.8,0)--(6.8,0.87) (8.2,0)--(8.2,0.87);
		
		\node (lambda1) at (0.6,1.3) {$D_1$};
		\node (lambda2) at (1.7,1.3) {$D_2$};
		\node (lambda3) at (3.6,1.3) {$D_3$};
		\node (lambda4) at (5.2,1.3) {$D_4$};
		\node (lambda5) at (6.0,1.3) {$D_5$};
		\node (lambda6) at (6.9,1.3) {$D_6$};
		\node (lambda7) at (7.7,1.3) {$\cdots$};
		\node (lambdaL) at (8.6,1.3) {$D_{b_1}$};
		
		\node (A1) at (0.4,0.4) {$B_1$};
		\node (A2) at (1.2,0.4) {$B_2$};
		\node (A3) at (2.0,0.4) {$B_3$};
		\node (A4) at (2.9,0.4) {$B_4$};
		\node (A5) at (3.9,0.4) {$B_5$};
		\node (A6) at (5.6,0.4) {$B_6$};
		\node (dots) at (7.6,0.4) {$\cdots$};
		\node (Ab) at (8.9,0.4) {$B_{b_2}$};
		\end{tikzpicture}
		\caption{a one-to-one mapping} \label{fig-mapping}
	\end{figure}
	%%----------------------------------------------figure one-to-one mapping----------------------------------------------------
	The inequality in \eqref{partition-condiiton} ensures that the number of $D_k$'s that contains $\xi_{r-1}$ is less than or equal to the number of $B_j$'s that don't contain $\xi_{r-1}$. 
	Thus, there exists a correspondence such that $B_j\cap D_k=\emptyset$ if $B_j$ and $D_k$ have overlap in \reffig{fig-mapping}. 
	Without loss of generality, assume the the mapping in \reffig{fig-mapping} is such a correspondence. 
	Let 
	\begin{equation}
	\cO=\big\{(j,k): B_j \text{ and }D_k \text{ have overlap in \reffig{fig-mapping}}\big\}.
	\end{equation}
	Then we have for all $(j,k)\in\cO$ that 
	\begin{equation}
	B_j\cap D_k=\emptyset
	\end{equation}
	and 
	\begin{equation}
	|B_j\cup D_k|=\eta.  \label{cardinality-B&D}
	\end{equation}
	
	For $k\in\{1,2,\cdots,b_1\}$, let $s_k=\sum_{i=1}^kc(D_i)$. 
	For $j\in\{1,2,\cdots,b_2\}$, let $t_j=\sum_{i=1}^j\gamma(B_i)$. 
	For $(j,k)\in\cO$, let $c(B_j,D_k)$ be the length overlap of $B_j$ and $D_k$ in \reffig{fig-mapping}, which is equal to
	\begin{equation}
	c(B_j,D_k)=
	\begin{cases}
	\gamma(B_j),& \text{if } s_{k-1}\leq t_{j-1}\leq t_j\leq s_k\\
	s_k-t_{j-1},& \text{if } s_{k-1}\leq t_{j-1}\leq s_k\leq t_j \\
	c(D_k),& \text{if } t_{j-1}\leq s_{k-1}\leq s_k\leq t_j \\
	t_j-s_{k-1},& \text{if } t_{j-1}\leq s_{k-1}\leq t_j\leq s_k \\
	0,& \text{otherwise}.
	\end{cases}  \label{def-c(B,D)}
	\end{equation}
	
	It is easy to check that for $k\in\{1,2,\cdots,b_1\}$,
	\begin{equation}
	\sum_{j=1}^{b_2}c(B_j,D_k)=c(D_k)   \label{sum-of-c(B,D)-1}
	\end{equation}
	and for $j\in\{1,2,\cdots,b_2\}$,
	\begin{equation}
	\sum_{k=1}^{b_1}c(B_j,D_k)=\gamma(B_j).  \label{sum-of-c(B,D)-2}
	\end{equation}
	Then we have 
	\begin{equation}
	\sum_{(j,k)\in\cO}c(B_j,D_k)=\sum_{k=1}^{b_1}c(D_k)=\sum_{j=1}^{b_2}\gamma(B_j)=f_{\eta}(\bm{\lambda}).   \label{sum-of-c(B,D)-3}
	\end{equation}
	
	The following lemma states the relation between the coefficients $c(B_j,D_k)$ and $\bm{\lambda}$. 
	The detailed proof of the lemma can be found in Appendix \ref{proof_lemma-check-resolution-lambda}.
	%%--------------------------- begin of lemma-check-resolution-lambda --------------------------------
	\begin{lemma}\label{lemma-check-resolution-lambda}
		For $i\in\cL$, we have 
		\begin{equation}
		\sum_{(j,k)\in\cO:~i\in B_j\cup D_k}c(B_j,D_k)\leq \lambda_i.
		\end{equation}
	\end{lemma}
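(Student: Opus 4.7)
The plan is to case-split on the position of $i$ relative to the threshold $\xi_{r-1}$, exploiting the structural fact that every $B_j\subseteq\{1,\dots,\xi_{r-1}\}$ while every $D_k\subseteq\{\xi_{r-1},\dots,L\}$. Hence for $i\neq\xi_{r-1}$ at most one of $B_j, D_k$ can contain $i$, and the relevant sum collapses to a single family via \eqref{sum-of-c(B,D)-1} or \eqref{sum-of-c(B,D)-2}. The boundary case $i=\xi_{r-1}$ is handled by adding the two contributions, which are disjoint because $(j,k)\in\cO$ forces $B_j\cap D_k=\emptyset$.

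For $i>\xi_{r-1}$, interchanging summation and applying \eqref{sum-of-c(B,D)-1} reduces the left-hand side to $\sum_{k:i\in D_k}c(D_k)$. Since $\{c(D_k)\}$ arises from an $[\eta-(r-1)]$-resolution for $\bm{\lambda}^{(r-1)}$, property \eqref{optimization-2} bounds this by $\lambda_i^{(r-1)}$, which in turn equals $\lambda_i$ via \eqref{def-lambda(i,alpha)}.

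For $i<\xi_{r-1}$, \eqref{sum-of-c(B,D)-2} rewrites the sum as $\sum_{j:i\in B_j}\gamma(B_j)$, and I further split on $i$'s position relative to $\xi_0,\xi_1,\dots,\xi_{r-2}$. In the range $i\in\{\xi_{r-2}+1,\dots,\xi_{r-1}-1\}$, the element $i$ can appear in $B_j=\{i'\}\cup A_{r-1}$ only as the appended element $i'$, so the sum equals $\sum_{A_{r-1}\in\cA_0^{(r-1)}}\gamma_i^{A_{r-1}}=\gamma_i^{(r-1)}=\lambda_i$ by \eqref{gamma-subset-partition-property} and \eqref{def-gamma(i,alpha)}. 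For $i\in\{1,\dots,\xi_{r-2}-1\}$, the element $i$ is forced into $A_{r-1}$, so the sum becomes $\sum_{A_{r-1}:i\in A_{r-1}}\sum_{i'=\xi_{r-2}}^{\xi_{r-1}}\gamma_{i'}^{A_{r-1}}$, which is precisely the left-hand side of \eqref{gamma-subset-property-2} at $\alpha=r-1$ and hence equals $\lambda_i$. At the boundary $i=\xi_{r-2}$ both modes contribute: either $\xi_{r-2}\in A_{r-1}$ (matching the first summand of \eqref{gamma-subset-property-5}) or $i'=\xi_{r-2}$ with $\xi_{r-2}\notin A_{r-1}$ (matching the second summand, using the convention that $\gamma_{\xi_{r-2}}^{A_{r-1}}=0$ when $\xi_{r-2}\in A_{r-1}$ to drop the extra condition); summing them gives $\lambda_{\xi_{r-2}}$ directly by \eqref{gamma-subset-property-5}.

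For $i=\xi_{r-1}$, the disjointness $B_j\cap D_k=\emptyset$ splits the sum additively into $\sum_{j:\xi_{r-1}\in B_j}\gamma(B_j)+\sum_{k:\xi_{r-1}\in D_k}c(D_k)$. Because $\xi_{r-1}$ never lies in any $A_{r-1}$ (as $i_{r-2}\leq\xi_{r-2}<\xi_{r-1}$), the first summand reduces via \eqref{gamma-subset-partition-property} to $\gamma_{\xi_{r-1}}^{(r-1)}$, while the resolution property bounds the second by $\lambda_{\xi_{r-1}}^{(r-1)}$; then the closed-form expressions in \eqref{def-gamma(i,alpha)} and \eqref{def-lambda(i,alpha)} at $i=\xi_{r-1}$, $\alpha=r-1$ telescope to $\gamma_{\xi_{r-1}}^{(r-1)}+\lambda_{\xi_{r-1}}^{(r-1)}=\lambda_{\xi_{r-1}}$, closing the argument. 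The main obstacle is the bookkeeping at the boundary indices $\xi_\alpha$, where the recursive partition $\gamma_i^{A_\alpha}$ splits ambiguously between ``$i\in A_\alpha$'' and ``$i$ is appended later'' — but identities \eqref{property-gamma-2}, \eqref{gamma-subset-property-3}, and \eqref{gamma-subset-property-5} already package this double-counting into clean closed forms, reducing the task to a careful combinatorial matching rather than a fresh computation.
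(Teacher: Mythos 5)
Your proposal is correct and follows essentially the same route as the paper's proof: the same case split on the position of $i$ relative to $\xi_{r-2}$ and $\xi_{r-1}$, the same collapses via \eqref{sum-of-c(B,D)-1} and \eqref{sum-of-c(B,D)-2}, and the same invocation of \eqref{gamma-subset-property-2}, \eqref{gamma-subset-property-5}, \eqref{gamma-subset-partition-property} and the resolution property of $\{c(D_k)\}$ for each case. Your handling of $i=\xi_{r-1}$ via $\gamma_{\xi_{r-1}}^{(r-1)}+\lambda_{\xi_{r-1}}^{(r-1)}=\lambda_{\xi_{r-1}}$ is just a notational variant of the paper's $\gamma_{\xi_{r-1}}^{(r-1)}+\gamma_{\xi_{r-1}}^{(r)}$.
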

	%%----------------------------- end of lemma-check-resolution-lambda------------------------------
	For any $\bm{v}\in\Omega_{L}^{\eta}$ and $\bm{v}=(v_1,v_2,\cdots,v_L)$, if $\{i:v_i=1\}=B_j\cup D_k$ for some $(j,k)\in\cO$, let $c_{\eta}(\bm{v})=c(B_j,D_k)$. 
	Otherwise, if there is no $(j,k)\in\cO$ such that $\{i:v_i=1\}=B_j\cup D_k$, let $c_{\eta}(\bm{v})=0$. 
	Then by \eqref{cardinality-B&D}, \eqref{sum-of-c(B,D)-3}, and \Cref{lemma-check-resolution-lambda}, 
	we can see that $\{c_{\eta}(\bm{v}):~\bm{v}\in\Omega_{L}^{\eta}\}$ is an optimal $\eta$-resolution for $\bm{\lambda}$. 
	
	For $i\in\{\xi_{r-1},\xi_{r-1}+1,\cdots,L\}$ and $j\in\{1,2,\cdots,b_2\}$, let 
	\begin{equation}
	c(\{i\}\cup B_j)=\sum_{k\in\{1,2,\cdots,b_1\}:~i\in D_k}c(B_j,D_k).  \label{def-c(i-Bj)}
	\end{equation}
	It is easy to check that 
	\begin{equation}
	\sum_{j=1}^{b_2} c(\{i\}\cup B_j)=\sum_{(j,k)\in\cO: ~i\in D_k}c(B_j,D_k)\leq \lambda^{(r-1)}_i
	\end{equation}
	and
	\begin{align}
	\sum_{i=\xi_{r-1}}^{L} c(\{i\}\cup B_j)&=\sum_{i=\xi_{r-1}}^{L}\sum_{k\in\{1,2,\cdots,b_1\}:~i\in D_k}c(B_j,D_k)  \nonumber \\
	&=\sum_{k=1}^{b_1}c(B_j,D_k)  \\
	&=\gamma(B_j).
	\end{align}
	
	\begin{remark}\label{remark-cof-c}
		In the example at the end of \Cref{section-DS-proof}, the parameter $c(B_j,D_k)$ is the coefficients in \eqref{converse-example-5}-\eqref{converse-example-7}, where for example, the coefficient $\frac{1}{3}$ of $\frac{1}{3}H(W_2W_3|W_1M_1M_2)$ in \eqref{converse-example-5} and $\frac{1}{3}H(W_1W_2W_3|M_1M_2)$ in \eqref{converse-example-6}-\eqref{converse-example-7} is $c(\{1\},\{2,3\})$. 
		The fact that $\{c(B_j,D_k):(j,k)\in\cO\}$ is an optimal $\eta$-resolution ensures us to proceed after the $\eta$-th iteration in the converse proof. 
		The parameter $c(\{i\}\cup B_j)$ is the coefficient in \eqref{converse-example-S1-1}-\eqref{converse-example-S1-3}, where for example,  $\frac{1}{3}$ of $\frac{1}{3}H(W_2|W_1M_{1:2})$ in \eqref{converse-example-S1-3} is $c(\{i\}\cup B_j)$ for $i=2$ and $B_j=\{1\}$; 
	\end{remark}
	
	Before proving the converse, we introduce two important relations that will be repeated used in the proof. 
	For $\alpha=1,2,\cdots,r-1$, and $i,j\in\cL$, $\cB\subseteq\cL$ such that $|\cB|=\alpha-1$ and $i,j\notin\cB$, we have 
	\begin{align}
	H(W_i|W_{\cB}M_{1:\alpha})&\geq H(W_i|W_jW_{\cB}M_{1:\alpha})   \nonumber \\
	&=H(W_i|W_{\{j\}\cup\cB}M_{1:\alpha}).   \label{pre-reverse-1}
	\end{align}
	and
	\begin{align}
	&H(W_i|W_{\cB}M_{1:\alpha-1})  \nonumber \\
	&=H(W_i|W_{\cB}M_{1:\alpha-1}M_{\alpha})+H(M_{\alpha}|W_{\cB}M_{1:\alpha-1})  \nonumber \\
	&\quad -H(M_{\alpha}|W_iW_{\cB}M_{1:\alpha-1})   \nonumber \\
	&=H(W_i|W_{\cB}M_{1:\alpha})+H(M_{\alpha})  \label{pre-reverse-2}
	\end{align}
	%For any $\bm{v}\in\{0,1\}^L$, let $W_{\bm{v}}=(W_i:v_i=1)$. 
	For notational simplicity, let $\xi_{-1}=0$. For $\alpha=1,2,\cdots,r-1$, let 
	\begin{align}
	I_\alpha\triangleq& \sum_{i=\xi_{\alpha-2}}^{\xi_{\alpha-1}}\sum_{A_{\alpha-1}\in\cA^{(\alpha-1)}_0}\gamma_i^{A_{\alpha-1}}H(W_iW_{A_{\alpha-1}}|M_{1:\alpha})    \nonumber \\
	&+\sum_{i=\xi_{r-1}}^{L}\sum_{j=1}^{b}c(\{i\}\cup B_j)H(W_i|W_{B_j^{\alpha-1}}M_{1:\alpha})    \nonumber \\
	&+\sum_{i=1}^{L}\left(\sum_{k=\alpha}^{r-1}\sum_{A_k\in\cA^{(k)}_0}\gamma_i^{A_k} H(W_i|W_{A_k^{\alpha-1}}M_{1:\alpha})\right). 
	\end{align}
	We have the following lemma which provides an iteration that is useful in the sequel. 
	The proof of the lemma can be found in Appendix \ref{proof_lemma-converse-iteration}.
	%%--------------------------- begin of lemma-converse-iteration --------------------------------
	\begin{lemma}\label{lemma-converse-iteration}
		$I_\alpha\geq I_{\alpha+1}+\left[f_1(\bm{\lambda})-\alpha f_{\eta}(\bm{\lambda})\right] H(M_{\alpha+1})$ for $\alpha=1,2,\cdots,r-1$. 
	\end{lemma}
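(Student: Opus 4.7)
The plan is to mirror the manipulation demonstrated in the $(L,r,\eta)=(4,2,3)$ worked example at the end of Section~\ref{section-DS-proof}, using the set-indexed partition $\gamma_i^{A_\alpha}$ of \eqref{gamma-subset-partition-property} and the overlap correspondence $\cO$ of \reffig{fig-mapping} as the general-purpose bookkeeping that replaces the ad hoc rearrangement \eqref{converse-example-S1-1}. Throughout I would use the fact, inherited from the assumption $\bm{\lambda}\in\mathbb{R}_\eta^L$ which guarantees a perfect $\eta$-resolution, that $f_1(\bm{\lambda})=\eta f_\eta(\bm{\lambda})$, so the target coefficient $f_1(\bm{\lambda})-\alpha f_\eta(\bm{\lambda})$ equals $(\eta-\alpha)f_\eta(\bm{\lambda})$ and must be harvested exactly from the second- and third-sum terms of $I_\alpha$.

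I would begin by splitting every coefficient in $I_\alpha$ along $\lambda_i=\lambda_i^{(\alpha)}+\sum_{k=1}^{\alpha}\sum_{A_k\in\cA^{(k)}_0}\gamma_i^{A_k}$, and further refining $\gamma_i^{(\alpha)}=\sum_{A_\alpha}\gamma_i^{A_\alpha}$ via \eqref{gamma-subset-partition-property}. The $\gamma_i^{A_\alpha}$ pieces carried by the $k=\alpha$ slice of the third sum — playing the role of $S_2$ in the example — are combined with the first-sum terms $\gamma_{i_{\alpha-1}}^{A_{\alpha-1}}H(W_{A_\alpha}|M_{1:\alpha})$ of $I_\alpha$ (after the identification $A_\alpha=\{i_{\alpha-1}\}\cup A_{\alpha-1}$) via the chain rule $H(W_iW_{A_\alpha}|M_{1:\alpha})=H(W_{A_\alpha}|M_{1:\alpha})+H(W_i|W_{A_\alpha}M_{1:\alpha})$. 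The matching of coefficients is forced by the corollary $\gamma_{i_{\alpha-1}}^{A_{\alpha-1}}=\sum_{i\in\{\xi_{\alpha-1},\ldots,\xi_\alpha\}}\gamma_i^{A_\alpha}$ of \eqref{gamma-subset-partition-property} together with \eqref{gamma-subset-property-5}. Since these joint entropies carry only $\alpha$ encoder outputs — one short of the reconstruction threshold for $M_{\alpha+1}$ — I would upgrade their $M$-conditioning from $M_{1:\alpha}$ to $M_{1:\alpha+1}$ by conditioning monotonicity alone, without spawning any $H(M_{\alpha+1})$; the result is precisely the first sum of $I_{\alpha+1}$.

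The residual single-$W$ pieces — the $k>\alpha$ third-sum terms with coefficient $\gamma_i^{A_k}$ and the second-sum terms with coefficient $c(\{i\}\cup B_j)$, playing the role of $S_1$ — are treated by first enlarging their $W$-conditioning from size $\alpha-1$ (namely $A_k^{\alpha-1}$ or $B_j^{\alpha-1}$) up to size $\alpha$ (namely $A_k^{\alpha}$ or $B_j^{\alpha}$) using the monotonicity inequality \eqref{pre-reverse-1}, and then invoking \eqref{pre-reverse-2} with index shifted to $\alpha+1$, which is legitimate because $|\cB|=\alpha$ satisfies both the security hypothesis $N_{\alpha+1}=\alpha$ and the reconstruction threshold $\alpha+1$. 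This peels off one copy of $H(M_{\alpha+1})$ per unit of coefficient while converting $M_{1:\alpha}$-conditioning into $M_{1:\alpha+1}$-conditioning. The total harvested coefficient is $\sum_i\sum_{k>\alpha}\sum_{A_k}\gamma_i^{A_k}+\sum_i\sum_j c(\{i\}\cup B_j)=(r-1-\alpha)f_\eta(\bm{\lambda})+[\eta-(r-1)]f_\eta(\bm{\lambda})=(\eta-\alpha)f_\eta(\bm{\lambda})$, where the first summand uses $\sum_i\gamma_i^{(k)}=f_\eta(\bm{\lambda})$ from \eqref{property-gamma-1} and the second uses the resolution cardinality \eqref{c(D_k)=f(eta)} with $|D_k|=\eta-(r-1)$. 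The leftover conditional entropies reassemble directly into the second and $k>\alpha$ third sums of $I_{\alpha+1}$.

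The main obstacle is the combinatorial bookkeeping: verifying at every layer $\alpha$ that the coefficients left standing on $H(W_iW_{A_\alpha}|M_{1:\alpha+1})$, $H(W_i|W_{A_k^\alpha}M_{1:\alpha+1})$, and $H(W_i|W_{B_j^\alpha}M_{1:\alpha+1})$ after all the chain-rule combinations, subadditivity steps, and conditioning enlargements come out to exactly $\gamma_i^{A_\alpha}$, $\gamma_i^{A_k}$, and $c(\{i\}\cup B_j)$ respectively, so that the remainder is literally $I_{\alpha+1}$ with no slack. The partition identities \eqref{gamma-subset-partition-property}--\eqref{gamma-subset-property-5}, the nesting $A_{\alpha+1}^{\alpha-1}=A_\alpha$ between successive layers, and the one-to-one overlap correspondence $\cO$ constructed through \reffig{fig-mapping} were all engineered precisely so that this exact matching propagates cleanly through every iteration $\alpha=1,\ldots,r-1$; the alternative of ad hoc rearrangements like \eqref{converse-example-S1-1} at each level would be intractable for general $(L,r)$.
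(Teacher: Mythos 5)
Your proposal follows essentially the same route as the paper's proof in Appendix~\ref{proof_lemma-converse-iteration}: the $k=\alpha$ slice of the third sum is merged with the first sum by the chain rule (via the partition identity \eqref{gamma-subset-partition-property} and the nesting $A_\alpha=\{i\}\cup A_{\alpha-1}$), the resulting joint terms are carried from $M_{1:\alpha}$ to $M_{1:\alpha+1}$ by monotonicity alone, and the remaining single-$W$ terms have their conditioning enlarged by \eqref{pre-reverse-1} and are then fed through \eqref{pre-reverse-2} to harvest $H(M_{\alpha+1})$ --- exactly the sequence \eqref{iteration-2}--\eqref{iteration-6}. The only divergence is in the final coefficient count: you compute the harvest as $(\eta-\alpha)f_{\eta}(\bm{\lambda})$ and convert it to $f_1(\bm{\lambda})-\alpha f_{\eta}(\bm{\lambda})$ via $f_1(\bm{\lambda})=\eta f_{\eta}(\bm{\lambda})$, which requires $\bm{\lambda}\in\mathbb{R}_{\eta}^L$ rather than the weaker standing assumption $\bm{\lambda}\in\mathbb{R}_{\text{con}}^L$ under which only the inequality \eqref{f(1)-vs-f(eta)} is available; this restriction is harmless for the overall converse thanks to \Cref{lemma-considerable-lambda}, and your explicit accounting of the second-sum coefficient as $[\eta-(r-1)]f_{\eta}(\bm{\lambda})$ is in fact more careful than the paper's terse assertion of the coefficient in \eqref{iteration-6}.
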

	%%--------------------------- begin of lemma-converse-iteration --------------------------------
	
	We prove the converse of DS-SMDC (i.e., $\bm{\lambda}\cdot\bm{\sR}\geq g_{\eta}(\bm{\lambda})$ for all $\eta=r+1,r+2,\cdots,L+1$) as follows.
	\begin{align}
	\bm{\lambda}\cdot\bm{R}&=\lambda_1H(W_1)+\lambda_2H(W_2)+\cdots+\lambda_LH(W_L) \nonumber \\
	&=(\sum_{i=1}^L\lambda_i)H(M_1)+\sum_{i=1}^L\lambda_iH(W_i|M_1)  \label{converse-proof-1} \\
	&=f_1(\bm{\lambda})H(M_1)  \nonumber \\
	&\quad +\sum_{i=1}^{L}\left(\lambda_i^{(r-1)}+\sum_{k=1}^{r-1}\sum_{A_k\in\cA^{(k)}_0}\gamma_i^{A_k}\right)H(W_i|M_1)  \label{converse-proof-2} \\
	%%&=f_1(\bm{\lambda})H(M_1)+\sum_{i=\xi_0}^{\xi_1}\sum_{A_1\in\cA^{(1)}_0}\gamma_i^{A_1}H(W_i|M_1) +\sum_{i=1}^{L}\left(\lambda_i^{(r-1)}+\sum_{k=2}^{r-1}\sum_{A_k\in\cA^{(k)}_0}\gamma_i^{A_k}\right)H(W_i|M_1)  \label{converse-proof-3}  \\
	%%&\geq f_1(\bm{\lambda})H(M_1)+\sum_{i=\xi_0}^{\xi_1}\sum_{A_1\in\cA^{(1)}_0}\gamma_i^{A_1}H(W_i|M_1) \nonumber \\
	%%&\quad +\sum_{i=\xi_{r-1}}^{L}\sum_{j=1}^{b_2}c(\{i\}\cup B_j) H(W_i|W_{B_j^1}M_{1:2})+ \sum_{i=1}^{L}\left(\sum_{k=2}^{r-1} \sum_{A_k\in\cA^{(k)}_0}\gamma_i^{A_k}H(W_i|W_{A_k^1}M_{1:2})\right)  \label{converse-proof-4} \\
	%%%&\stackrel{\textcircled{1}}{\geq} &\hspace{2.35cm}\cdots \text{ iteration } \cdots  \label{iteration-inside-1} \\
	&\geq \sum_{\alpha=1}^{r}\left[f_1(\bm{\lambda})-(\alpha-1) f_{\eta}(\bm{\lambda})\right]H(M_{\alpha})   \nonumber \\
	&\quad +\sum_{i=\xi_{r-2}}^{\xi_{r-1}} \sum_{A_{r-1}\in\cA^{(r-1)}_0}\gamma_i^{A_{r-1}} H(W_iW_{A_{r-1}}|M_{1:r})  \nonumber \\
	&\quad +\sum_{i=\xi_{r-1}}^{L}\sum_{j=1}^{b_2}c(\{i\}\cup B_j)H(W_i|W_{B_j^{r-1}}M_{1:r})   \label{converse-proof-5}  \\
	&=\sum_{\alpha=1}^{r}\left[f_1(\bm{\lambda})-(\alpha-1) f_{\eta}(\bm{\lambda})\right]H(M_{\alpha})    \nonumber \\
	&\quad +\sum_{j=1}^{b_2}\gamma(B_j) H(W_{B_j}|M_{1:r})  \nonumber \\
	&\quad +\sum_{i=\xi_{r-1}}^{L}\sum_{j=1}^{b_2}c(\{i\}\cup B_j)H(W_i|W_{B_j}M_{1:r})   \label{converse-proof-6}  \\ 
	&\geq \sum_{\alpha=1}^{r}\left[f_1(\bm{\lambda})-(\alpha-1) f_{\eta}(\bm{\lambda})\right]H(M_{\alpha})    \nonumber \\
	&\quad +\sum_{(j,k)\in\cO}c(B_j,D_k) H(W_{B_j}|M_{1:r})  \nonumber \\
	&\quad +\sum_{(j,k)\in\cO}c(B_j,D_k)H(W_{D_k}|W_{B_j}M_{1:r})   \label{converse-proof-7}  \\ 
	&\geq \sum_{\alpha=1}^{r}\left[f_1(\bm{\lambda})-(\alpha-1) f_{\eta}(\bm{\lambda})\right]H(M_{\alpha})    \nonumber \\
	&\quad +\sum_{(j,k)\in\cO}c(B_j,D_k) H(W_{D_k}W_{B_j}|M_{1:r})  \label{converse-proof-8} \\
	&=\sum_{\alpha=1}^{r}\left[f_1(\bm{\lambda})-(\alpha-1) f_{\eta}(\bm{\lambda})\right]H(M_{\alpha})     \nonumber \\
	&\quad +\sum_{(j,k)\in\cO}c(B_j,D_k) H(W_{B_j\cup D_k} M_{r+1}^{\eta}|M_{1:r}) \label{converse-proof-9} \\
	&\geq \sum_{\alpha=1}^{r}\left[f_1(\bm{\lambda})-(\alpha-1) f_{\eta}(\bm{\lambda})\right]H(M_{\alpha})    \nonumber \\
	&\quad +f_{\eta}(\bm{\lambda}) \sum_{\alpha=r+1}^{\eta}H(M_{\alpha})+ \sum_{\bm{v}\in\Omega_{L}^{\eta}}c_{\eta}(\bm{v}) H(W_{\bm{v}}|M_{1:\eta})   \label{converse-proof-10} \\
	%&\stackrel{\textcircled{2}}{\geq} &\hspace{2.35cm}\cdots \text{ iteration } \cdots  \label{iteration-inside-2}  \\
	&\geq \sum_{\alpha=1}^{r}\left[f_1(\bm{\lambda})-(\alpha-1) f_{\eta}(\bm{\lambda})\right] H(M_{\alpha})    \nonumber \\
	&\quad + f_{\eta}(\bm{\lambda})\sum_{\alpha=r+1}^{\eta} H(M_{\alpha})+ \sum_{\alpha=\eta+1}^{L} f_{\alpha}(\bm{\lambda}) H(M_{\alpha})  \label{converse-proof-11} \\
	&=\sum_{\alpha=1}^{r}(f_1(\bm{\lambda})) H(M_{\alpha})+\sum_{\alpha=\eta+1}^{L}f_{\alpha}(\bm{\lambda})H(M_{\alpha})     \nonumber \\
	&\quad  +f_{\eta}(\bm{\lambda})\left[\sum_{\alpha=r+1}^{\eta}H(M_{\alpha}) -\sum_{\alpha=1}^{r}(\alpha-1) H(M_{\alpha})\right]  \nonumber  \\
	&=\sum_{\alpha=1}^{r}f_1(\bm{\lambda})m_\alpha+ \sum_{\alpha=\eta+1}^{L}f_{\alpha}(\bm{\lambda})m_\alpha     \nonumber \\
	&\quad + f_{\eta}(\bm{\lambda}) \left[\sum_{\alpha=r+1}^{\eta}m_\alpha-\sum_{\alpha=1}^{r}(\alpha-1)m_\alpha\right],  \label{converse-proof-13}
	\end{align} 
	%where \textcircled{1} follows from the iteration in \eqref{iteration-1}-\eqref{iteration-6}, 
	where \eqref{converse-proof-2} follows from \eqref{lambda(i,r-1)}, 
	\eqref{converse-proof-5} follows by applying \Cref{lemma-converse-iteration} for $\alpha=1,2,\cdots,r-1$ successively, 
	\eqref{converse-proof-6} follows from the definition of $B_j$, 
	\eqref{converse-proof-10} follows from \eqref{sum-of-c(B,D)-3}, 
	\eqref{converse-proof-11} follows from the fact that $\{c_{\eta}(\bm{v}):~\bm{v}\in\Omega_{L}^{\eta}\}$ is an optimal $\eta$-resolution for $\bm{\lambda}$ and the iteration in the converse for SMDC in~\cite{yeung99}, 
	and \eqref{converse-proof-7} follows from \eqref{sum-of-c(B,D)-2} and
	\begin{align}
	&\sum_{i=\xi_{r-1}}^{L}\sum_{j=1}^{b_2}c(\{i\}\cup B_j)H(W_i|W_{B_j}M_{1:r})  \nonumber \\
	&=\sum_{j=1}^{b_2}\left[\sum_{i=\xi_{r-1}}^L c(\{i\}\cup B_j) H(W_i|W_{B_j}M_{1:r})\right]  \\
	&=\sum_{j=1}^{b_2}\left[\sum_{i=\xi_{r-1}}^L \sum_{k\in\{1,2,\cdots,b_1\}:~i\in D_k} \!\!\!\!\! c(B_j, D_k) H(W_i|W_{B_j}M_{1:r})\right]  \\
	&=\sum_{j=1}^{b_2}\left[\sum_{k=1}^{b_1}c(B_j, D_k) \left(\sum_{i\in D_k} H(W_i|W_{B_j}M_{1:r})\right)\right]  \\
	&\geq \sum_{j=1}^{b_2}\sum_{k=1}^{b_1}c(B_j, D_k)H(W_{D_k}|W_{B_j}M_{1:r})  \\
	&=\sum_{(j,k)\in\cO}c(B_j, D_k)H(W_{D_k}|W_{B_j}M_{1:r}). 
	\end{align}
	Dividing both sides of \eqref{converse-proof-13} by $a$, we obtain by the definition of $g_{\eta}(\bm{\lambda})$ in \eqref{def-g} that for any $\eta=r+1,r+2,\cdots,L+1$, 
	\begin{equation}
	\sum_{l=1}^L\lambda_l(\sR_l+\epsilon)\geq g_{\eta}(\bm{\lambda}). 
	\end{equation}
	Letting $\epsilon\rightarrow0$, the inequality $\bm{\lambda}\cdot\bm{\sR}\geq g_{\eta}(\bm{\lambda})$ is proved. 
	\begin{remark}\label{remark-connection}
		The step-by-step correspondence between the general proof in \eqref{converse-proof-1}-\eqref{converse-proof-13} and the example in \eqref{converse-example-1}-\eqref{converse-example-8} is as follows: 
		\begin{itemize}
			\item The iteration in \eqref{converse-proof-5} is the generalization of the step in \eqref{converse-example-5}; 
			\item The transform of conditional entropies in \eqref{converse-proof-6}-\eqref{converse-proof-10} play the same role as \eqref{converse-example-6}; 
			\item The application of the $\alpha$-resolution technique in \eqref{converse-proof-11} is the generalization of that in \eqref{converse-example-7}. 
		\end{itemize}
	\end{remark}

	%%----------------------------- section: proof of lemma-considerable-lambda ------------------------------------------------
	\section{Proof of \Cref{lemma-considerable-lambda}}\label{proof_lemma-considerable-lambda}
	Let $\bm{\lambda}'=(\lambda_1',\lambda_2',\cdots,\lambda_L')$, where
	\begin{equation}
	\lambda_i'=\lambda_i,\text{ for all }i=2,3,\cdots,L  \label{lambda-genaral-2}
	\end{equation}
	and 
	\begin{equation}
	\lambda_1'=\frac{\lambda_2'+\lambda_3'+\cdots+\lambda_L'}{\eta-1}.  \label{lambda-genaral-3}
	\end{equation}
	By Lemma 7 in \cite{guo-yeung-SMDC-IT20}, \eqref{lambda-general-1} implies that
	\begin{equation}
	f_{\eta}(\bm{\lambda})=f_{\eta-1}(\lambda_2,\lambda_3,\cdots,\lambda_L),
	\end{equation}
	and similarly, from \eqref{lambda-genaral-3},
	\begin{equation}
	f_{\eta}(\bm{\lambda}')=f_{\eta-1}(\lambda_2,\lambda_3,\cdots,\lambda_L).
	\end{equation}
	Thus, we have
	\begin{equation}
	f_{\eta}(\bm{\lambda})=f_{\eta}(\bm{\lambda}'),
	\end{equation}
	which by Lemma 5 in \cite{guo-yeung-SMDC-IT20} implies that 
	\begin{equation}
	f_{\alpha}(\bm{\lambda})=f_{\alpha}(\bm{\lambda}'),\text{ for all }\eta\leq \alpha\leq L.
	\end{equation}
	The rate constraint $\bm{\lambda}'\cdot\bm{\sR}\geq g_{\eta}(\bm{\lambda}')$ is the following,
	\begin{align}
	\bm{\lambda}'\cdot\bm{\sR}&\geq \sum_{\alpha=1}^{r}f_1(\bm{\lambda}') \sm_\alpha+\sum_{\alpha=\eta+1}^{L}f_{\alpha}(\bm{\lambda}')\sm_\alpha   \nonumber \\
	&\quad +f_{\eta}(\bm{\lambda}') \left[\sum_{\alpha=r+1}^{\eta}\sm_\alpha- \sum_{\alpha=1}^{r}(\alpha-1)\sm_\alpha\right].   \label{check-constraint-lambda'}
	\end{align}
	This implies
	\begin{align}
	\bm{\lambda}\cdot\bm{\sR}&=\bm{\lambda}'\cdot\bm{\sR}+(\lambda_{1}-\lambda_{1}') \sR_1  \nonumber \\
	&\geq \sum_{\alpha=1}^{r}f_1(\bm{\lambda}') \sm_\alpha+ \sum_{\alpha=\eta+1}^{L}f_{\alpha}(\bm{\lambda}')\sm_\alpha   \nonumber \\
	&\qquad + f_{\eta}(\bm{\lambda}') \left[\sum_{\alpha=r+1}^{\eta}\sm_\alpha- \sum_{\alpha=1}^{r}(\alpha-1)\sm_\alpha\right]  \nonumber \\
	&\qquad +(\lambda_{1}-\lambda_{1}')\left(\sum_{\alpha=1}^{r}\sm_\alpha\right)  \\
	&=\sum_{\alpha=1}^{r}f_1(\bm{\lambda}) \sm_\alpha+ \sum_{\alpha=\eta+1}^{L}f_{\alpha}(\bm{\lambda})\sm_\alpha  \nonumber \\
	&\qquad +f_{\eta}(\bm{\lambda}) \left[\sum_{\alpha=r+1}^{\eta}\sm_\alpha- \sum_{\alpha=1}^{r}(\alpha-1)\sm_\alpha\right],
	\end{align}
	which is exactly the constraint $\bm{\lambda}\cdot\bm{\sR}\geq g_{\eta}(\bm{\lambda})$. This proves the lemma.
	
	%%----------------------------- section: converse proof of lemma-last-choice ------------------------------------------------
	\section{Proof of \Cref{lemma-f>lambda1}}\label{proof_lemma-f>lambda1}
	For any $\bm{\lambda}\in\mathbb{R}_{\eta}^L$, we have 
	\begin{equation}
	\lambda_1\leq \frac{1}{\eta-1}\sum_{i=2}^L\lambda_i,  \label{pf-f>lambda1-1}
	\end{equation}
	which by Lemma 1 in \cite{guo-yeung-SMDC-IT20} implies that 
	\begin{equation}
	f_{\eta}(\bm{\lambda})=\frac{1}{\eta}\sum_{i=1}^{L}\lambda_i.
	\end{equation}
	It is easy to check that \eqref{pf-f>lambda1-1} is equivalent to
	\begin{equation}
	\lambda_1\leq \frac{1}{\eta}\sum_{i=1}^L\lambda_i.
	\end{equation}
	Thus, we have $f_{\eta}(\bm{\lambda})\geq \lambda_1$, which proves the lemma.
	
	%%----------------------------- section: converse proof of lemma-last-choice ------------------------------------------------
	\section{Proof of \Cref{lemma-last-choice}}\label{proof_lemma-last-choice}
	For $\alpha=1,2,\cdots,r-1$, we have 
	\begin{equation}
	\sum_{i=\xi_{\alpha}}^{L}\lambda_i^{(\alpha)}=\sum_{i=1}^L\lambda_i-\alpha f_{\eta}(\bm{\lambda})\geq (\eta-\alpha) f_{\eta}(\bm{\lambda}).
	\end{equation}
	where the inequality follows from \eqref{f(1)-vs-f(eta)} and the fact that $f_1(\bm{\lambda})=\sum_{i=1}^L\lambda_i$. In particular, for $\alpha=r-1$, 
	\begin{equation}
	\sum_{i=\xi_{r-1}}^{L}\lambda_i^{(r-1)}\geq \left[\eta-(r-1)\right] f_{\eta}(\bm{\lambda}).
	\end{equation}
	Denote the ordered permutation of $\bm{\lambda}^{(r-1)}$ by $\tilde{\bm{\lambda}}^{(r-1)}=\left(\tilde{\lambda}_{\xi_{r-1}}^{(r-1)}, \tilde{\lambda}_{\xi_{r-1}+1}^{(r-1)}, \cdots, \tilde{\lambda}_{L}^{(r-1)}\right)$. 
	Then from \eqref{f(eta)-vs-lambda}, we obtain 
	\begin{equation}
	\frac{1}{\eta-(r-1)}\sum_{i=\xi_{r-1}}^{L}\tilde{\lambda}_i^{(r-1)}\geq f_{\eta}(\bm{\lambda})\geq \tilde{\lambda}_{\xi_{r-1}}^{(r-1)},   \label{f-f-lambda}
	\end{equation}
	which implies 
	\begin{equation}
	\tilde{\lambda}_{\xi_{r-1}}^{(r-1)}\leq \frac{1}{\left[\eta-(r-1)\right]-1}\sum_{i=\xi_{r-1}+1}^{L}\tilde{\lambda}_i^{(r-1)}.
	\end{equation}
	By Lemma 4 and Lemma 7 in \cite{yeung99}, this implies that $\tilde{\bm{\lambda}}^{(r-1)}$ has a perfect $\left[\eta-(r-1)\right]$-resolution (c.f. Appendix~\ref{referenced-lemmas/theorems}) and 
	\begin{align}
	f_{\eta-(r-1)}\left(\tilde{\bm{\lambda}}^{(r-1)}\right)=\frac{1}{\eta-(r-1)}\sum_{i=\xi_{r-1}}^{L}\tilde{\lambda}_i^{(r-1)}.
	\end{align}
	From Lemma 2 in \cite{guo-yeung-SMDC-IT20} and \eqref{f-f-lambda}, this implies that 
	\begin{equation}
	f_{\eta-(r-1)}\left(\bm{\lambda}^{(r-1)}\right)=f_{\eta-(r-1)}\left(\tilde{\bm{\lambda}}^{(r-1)}\right)\geq f_{\eta}(\bm{\lambda}).
	\end{equation}
	This proves the lemma.
	
	%%----------------------------- section: proof of lemma-check-resolution-lambda ------------------------------------------------
	\section{Proof of \Cref{lemma-check-resolution-lambda}}\label{proof_lemma-check-resolution-lambda}
	Consider the following five cases where the set $\cL$ is partitioned into five subsets.
	\begin{enumerate}[i.]
		\item For $i\in\{1,2,\cdots,\xi_{r-2}-1\}$, we have 
		\begin{align}
		\sum_{(j,k)\in\cO: ~i\in B_j}c(B_j,D_k)&=\sum_{i\in B_j}\sum_{k=1}^{b_1}c(B_j,D_k)  \nonumber \\
		&=\sum_{i\in B_j}\gamma(B_j)  \label{check-resolution-lambda-1-1} \\
		&=\sum_{k=\xi_{\alpha-1}}^{\xi_{\alpha}}\sum_{A_{\alpha}\in\cA^{(\alpha)}_0:~i\in A_{\alpha}}\gamma_k^{A_{\alpha}}  \nonumber  \\
		&=\lambda_i, \label{check-resolution-lambda-1-3}
		\end{align}
		where \eqref{check-resolution-lambda-1-1} follows from \eqref{sum-of-c(B,D)-2} and \eqref{check-resolution-lambda-1-3} follows from \eqref{gamma-subset-property-2}. 
		
		\item For $i=\xi_{r-2}$, it follows from \eqref{gamma-subset-property-5} that
		\begin{align}
		&\sum_{(j,k)\in\cO: ~i\in B_j}c(B_j,D_k)  \nonumber \\
		&=\sum_{k=\xi_{r-2}}^{\xi_{r-1}}~\sum_{A_{r-1}\in\cA^{(r-1)}_0:~\xi_{r-2}\in A_{r-1}}\gamma_k^{A_{r-1}}  \nonumber \\
		&\quad +\sum_{A_{r-1}\in\cA^{(r-1)}_0} \gamma_{\xi_{r-2}}^{A_{r-1}}   \nonumber \\
		&=\gamma_{\xi_{r-2}}^{(r-2)}+\gamma_{\xi_{r-2}}^{(r-1)}   \nonumber \\
		&=\lambda_{\xi_{r-2}}.  \label{check-resolution-lambda-2}
		\end{align}
		
		\item For $i\in\{\xi_{r-2}+1,\xi_{r-2}+2,\cdots,\xi_{r-1}-1\}$, we have 
		\begin{align}
		\sum_{(j,k)\in\cO: ~i\in B_j}c(B_j,D_k)&=\sum_{A_{r-1}\in\cA^{(r-1)}_0}\gamma_i^{A_{r-1}}   \nonumber  \\
		&=\gamma_i^{(r-1)}  \label{check-resolution-lambda-3-1} \\
		&=\lambda_i, \label{check-resolution-lambda-3-2}
		\end{align}
		where \eqref{check-resolution-lambda-3-1} follows from \eqref{gamma-subset-partition-property} 
		and \eqref{check-resolution-lambda-3-2} follows from \eqref{def-gamma(i,alpha)}.
		
		\item For $i=\xi_{r-1}$,
		\begin{align}
		&\sum_{(j,k)\in\cO: ~i\in B_j}c(B_j,D_k)  \nonumber \\
		&=\sum_{A_{r-1}\in\cA^{(r-1)}_0} \gamma_{\xi_{r-1}}^{A_{r-1}}+\sum_{\xi_{r-1}\in D_k}\sum_{j=1}^{b_2}c(B_j,D_k)   \nonumber \\
		&=\sum_{A_{r-1}\in\cA^{(r-1)}_0} \gamma_{\xi_{r-1}}^{A_{r-1}}+\sum_{\xi_{r-1}\in D_k}c(D_k)   \label{check-resolution-lambda-5-1} \\
		&\leq \gamma_{\xi_{r-1}}^{(r-1)}+\gamma_{\xi_{r-1}}^{(r)}   \label{check-resolution-lambda-5-2} \\
		&=\lambda_{\xi_{r-1}},  \label{check-resolution-lambda-5-3}
		\end{align}
		where \eqref{check-resolution-lambda-5-1} follows from \eqref{sum-of-c(B,D)-1}, 
		\eqref{check-resolution-lambda-5-2} follows from \eqref{gamma-subset-partition-property} and the fact that $\{c(D_k):k=1,2,\cdots,b_1\}$ is an $[\eta-(r-1)]$-resolution for $\bm{\lambda}^{(r-1)}$, 
		and \eqref{check-resolution-lambda-5-3} follows from \eqref{def-gamma(i,alpha)}. 
		
		\item For $i\in\{\xi_{r-1}+1,\xi_{r-1}+2,\cdots,L\}$, we have 
		\begin{align}
		\sum_{(j,k)\in\cO: ~i\in D_k}c(B_j,D_k)&=\sum_{i\in D_k}\sum_{j=1}^{b_2}c(B_j,D_k)  \nonumber \\
		&=\sum_{i\in D_k}c(D_k)  \label{check-resolution-lambda-4-1} \\
		&\leq \lambda_i,  \label{check-resolution-lambda-4-2}
		\end{align}
		where \eqref{check-resolution-lambda-4-1} follows from \eqref{sum-of-c(B,D)-1} 
		and \eqref{check-resolution-lambda-4-2} follows from the fact that $\{c(D_k):k=1,2,\cdots,b_1\}$ is an $[\eta-(r-1)]$-resolution for $\bm{\lambda}^{(r-1)}$.
	\end{enumerate}
	
	%%----------------------------- section: proof of lemma-converse-iteration ------------------------------------------------
	\section{Proof of \Cref{lemma-converse-iteration}}\label{proof_lemma-converse-iteration}
	For $\alpha=1,2,\cdots,r-1$, we have the following iteration,
	\begin{align}
	I_\alpha&=\sum_{i=\xi_{\alpha-2}}^{\xi_{\alpha-1}}\sum_{A_{\alpha-1}\in\cA^{(\alpha-1)}_0}\gamma_i^{A_{\alpha-1}}H(W_iW_{A_{\alpha-1}}|M_{1:\alpha})    \nonumber \\
	&\quad +\sum_{i=\xi_{r-1}}^{L}\sum_{j=1}^{b}c(\{i\}\cup B_j)H(W_i|W_{B_j^{\alpha-1}}M_{1:\alpha})  \nonumber \\ 
	&\quad +\sum_{i=1}^{L}\left(\sum_{k=\alpha}^{r-1}\sum_{A_k\in\cA^{(k)}_0}\gamma_i^{A_k} H(W_i|W_{A_k^{\alpha-1}}M_{1:\alpha})\right)  \label{iteration-1}  \\
	&=\sum_{i=\xi_{\alpha-2}}^{\xi_{\alpha-1}}\sum_{A_{\alpha-1}\in\cA^{(\alpha-1)}_0}\gamma_i^{A_{\alpha-1}}H(W_iW_{A_{\alpha-1}}|M_{1:\alpha})  \nonumber \\
	&\quad +\sum_{i=\xi_{\alpha-1}}^{\xi_{\alpha}} \sum_{A_{\alpha}\in\cA^{(\alpha)}_0} \gamma_i^{A_{\alpha}} H(W_i|W_{A_{\alpha}}M_{1:\alpha})    \nonumber \\
	&\quad +\sum_{i=\xi_{r-1}}^{L}\sum_{j=1}^{b}c(\{i\}\cup B_j)H(W_i|W_{B_j^{\alpha-1}}M_{1:\alpha})    \nonumber \\
	&\quad +\sum_{i=1}^{L}\left(\sum_{k=\alpha+1}^{r-1}\sum_{A_k\in\cA^{(k)}_0}\gamma_i^{A_k} H(W_i|W_{A_k^{\alpha-1}}M_{1:\alpha})\right) \label{iteration-2} \\
	&=\sum_{i=\xi_{\alpha-1}}^{\xi_{\alpha}}\sum_{A_{\alpha}\in\cA^{(\alpha)}_0}\gamma_i^{A_{\alpha}}H(W_{A_{\alpha}}|M_{1:\alpha})    \nonumber \\
	&\quad +\sum_{i=\xi_{\alpha-1}}^{\xi_{\alpha}} \sum_{A_{\alpha}\in\cA^{(\alpha)}_0} \gamma_i^{A_{\alpha}} H(W_i|W_{A_{\alpha}}M_{1:\alpha})    \nonumber \\
	&\quad +\sum_{i=\xi_{r-1}}^{L}\sum_{j=1}^{b}c(\{i\}\cup B_j)H(W_i|W_{B_j^{\alpha-1}}M_{1:\alpha})    \nonumber \\
	&\quad +\sum_{i=1}^{L}\left(\sum_{k=\alpha+1}^{r-1}\sum_{A_k\in\cA^{(k)}_0}\gamma_i^{A_k} H(W_i|W_{A_k^{\alpha-1}}M_{1:\alpha})\right)  \label{iteration-3} \\
	&=\sum_{i=\xi_{\alpha-1}}^{\xi_{\alpha}}\sum_{A_{\alpha}\in\cA^{(\alpha)}_0}\gamma_i^{A_{\alpha}}H(W_iW_{A_{\alpha}}|M_{1:\alpha})    \nonumber \\
	&\quad +\sum_{i=\xi_{r-1}}^{L}\sum_{j=1}^{b}c(\{i\}\cup B_j)H(W_i|W_{B_j^{\alpha-1}}M_{1:\alpha})    \nonumber \\
	&\quad +\sum_{i=1}^{L}\left(\sum_{k=\alpha+1}^{r-1}\sum_{A_k\in\cA^{(k)}_0}\gamma_i^{A_k} H(W_i|W_{A_k^{\alpha-1}}M_{1:\alpha})\right)  \label{iteration-4} \\
	&\geq \sum_{i=\xi_{\alpha-1}}^{\xi_{\alpha}}\sum_{A_{\alpha}\in\cA^{(\alpha)}_0}\gamma_i^{A_{\alpha}}H(W_iW_{A_{\alpha}}|M_{1:\alpha+1})    \nonumber \\
	&\quad +\sum_{i=\xi_{r-1}}^{L}\sum_{j=1}^{b}c(\{i\}\cup B_j)H(W_i|W_{B_j^{\alpha}}M_{1:\alpha})    \nonumber \\
	&\quad +\sum_{i=1}^{L}\left(\sum_{k=\alpha+1}^{r-1}\sum_{A_k\in\cA^{(k)}_0}\gamma_i^{A_k} H(W_i|W_{A_k^{\alpha}}M_{1:\alpha})\right)  \label{iteration-5} \\
	&=\sum_{i=\xi_{\alpha-1}}^{\xi_{\alpha}}\sum_{A_{\alpha}\in\cA^{(\alpha)}_0}\gamma_i^{A_{\alpha}}H(W_iW_{A_{\alpha}}|M_{1:\alpha+1})     \nonumber \\
	&\quad +\left[f_1(\bm{\lambda})-\alpha f_{\eta}(\bm{\lambda})\right] H(M_{\alpha+1})   \nonumber \\
	&\quad +\sum_{i=\xi_{r-1}}^{L}\sum_{j=1}^{b}c(\{i\}\cup B_j)H(W_i|W_{B_j^{\alpha}}M_{1:\alpha+1})    \nonumber \\
	&\quad +\sum_{i=1}^{L}\left(\sum_{k=\alpha+1}^{r-1}\sum_{A_k\in\cA^{(k)}_0}\gamma_i^{A_k} H(W_i|W_{A_k^{\alpha}}M_{1:\alpha+1})\right) \label{iteration-6} \\
	&=I_{\alpha+1}+\left[f_1(\bm{\lambda})-\alpha f_{\eta}(\bm{\lambda})\right] H(M_{\alpha+1}),
	\end{align}
	where \eqref{iteration-3} follows from 
	\begin{align}
	&\sum_{i=\xi_{\alpha-2}}^{\xi_{\alpha-1}}\sum_{A_{\alpha-1}\in\cA^{(\alpha-1)}_0}\gamma_i^{A_{\alpha-1}}H(W_iW_{A_{\alpha-1}}|M_{1:\alpha})  \\ 
	&=\sum_{i=\xi_{\alpha-2}}^{\xi_{\alpha-1}}\sum_{A_{\alpha-1}\in\cA^{(\alpha-1)}_0}\sum_{j=\xi_{\alpha-1}}^{\xi_{\alpha}}\gamma_j^{\{i\}\cup A_{\alpha-1}}H(W_{\{i\}\cup A_{\alpha-1}}|M_{1:\alpha})   \\ 
	&=\sum_{j=\xi_{\alpha-1}}^{\xi_{\alpha}}\sum_{i=\xi_{\alpha-2}}^{\xi_{\alpha-1}}\sum_{A_{\alpha-1}\in\cA^{(\alpha-1)}_0}\gamma_j^{\{i\}\cup A_{\alpha-1}}H(W_{\{i\}\cup A_{\alpha-1}}|M_{1:\alpha})   \\ 
	&=\sum_{j=\xi_{\alpha-1}}^{\xi_{\alpha}}\sum_{A_{\alpha}\in\cA^{(\alpha)}_0}\gamma_j^{A_{\alpha}}H(W_{A_{\alpha}}|M_{1:\alpha}),
	\end{align}
	\eqref{iteration-5} follows from the fact that conditioning does not increase entropy, and \eqref{iteration-6} follows from \eqref{pre-reverse-2}.

\end{appendices}
%%=====================================================
\bibliographystyle{ieeetr}
\bibliography{guotao-sMDC_ref}

\begin{IEEEbiographynophoto}
	%	[{\includegraphics[width=1in,height=1.25in,clip,keepaspectratio]{guotao-photo.jpg}}]
	{Tao Guo} (S'16--M'19) received his B.E. degree in Telecommunications Engineering from Xidian University in 2013, and the Ph.D. degree from the Department of Information Engineering, The Chinese University of Hong Kong in 2018. He was a Postdoctoral Research Associate at the Department of Electrical and Computer Engineering, Texas A\&M University from 2018 to 2020. He is currently a Postdoctoral Scholar in the Department of Electrical and Computer Engineering at the University of California, Los Angeles. His research interests include information theory and it applications to security and privacy, multi-user source coding, and coding for distributed storage systems.
\end{IEEEbiographynophoto}

\begin{IEEEbiographynophoto}
	%	[{\includegraphics[width=1in,height=1.25in,clip,keepaspectratio]{ChaoTian-photo.jpg}}]
	{Chao Tian} (S'00--M'05--SM'12) received the B.E. degree in Electronic Engineering from Tsinghua University, Beijing, China, in 2000 and the M.S. and Ph. D. degrees in Electrical and Computer Engineering from Cornell University, Ithaca, NY in 2003 and 2005, respectively. Dr. Tian was a postdoctoral researcher at Ecole Polytechnique Federale de Lausanne (EPFL) from 2005 to 2007, a member of technical staff--research at AT\&T Labs--Research in New Jersey from 2007 to 2014, and an Associate Professor in the Department of Electrical Engineering and Computer Science at the University of Tennessee Knoxville from 2014 to 2017. He joined the Department of Electrical and Computer Engineering at Texas A\&M University in 2017. His research interests include data storage systems, multi-user information theory, joint source-channel coding, signal processing, and compute algorithms.
	
	Dr. Tian received the Liu Memorial Award at Cornell University in 2004, AT\&T Key Contributor Award in 2010, 2011 and 2013. His authored and co-authored papers received the 2014 IEEE ComSoc DSTC Data Storage Best Paper Award and the 2017 IEEE Jack Keil Wolf ISIT Student Paper Award. He was an Associate Editor for {\sc the IEEE Signal Processing Letters} from 2012 to 2014, and is currently an Editor for {\sc the IEEE Transactions on Communications} and an Associate Editor for {\sc the IEEE Transactions on Information Theory}.
\end{IEEEbiographynophoto}

\begin{IEEEbiographynophoto}
	{Tie Liu} received his B.S. (1998) and M.S. (2000) degrees, both in Electrical Engineering, from Tsinghua University, Beijing, China and a second M.S. degree in Mathematics (2004) and a Ph.D. degree in Electrical and Computer Engineering (2006) from the University of Illinois at Urbana-Champaign. Since August 2006 he has been with Texas A\&M University, where he is currently a Professor in the Department of Electrical and Computer Engineering. His primary research interest is in the area of information and statistical learning theory.
	
	Dr. Liu received an M. E. Van Valkenburg Graduate Research Award (2006) from the University of Illinois at Urbana-Champaign, a CAREER Award (2009) from the National Science Foundation, and an Outstanding Professor Award from Texas A\&M University (2018). He was a Technical Program Committee Co-Chair for the 2008 IEEE GLOBECOM, a General Co-Chair for the 2011 IEEE North American School of Information Theory, and an Associate Editor for Shannon Theory for the IEEE Transactions on Information Theory during 2014-2016.
\end{IEEEbiographynophoto}

\begin{IEEEbiographynophoto}
	{\bf Raymond W. Yeung} (S'85-M'88-SM'92-F'03) was born in Hong Kong on June 3, 1962.  He received the B.S., M.Eng., and Ph.D.\ degrees in electrical engineering from Cornell University, Ithaca, NY, in 1984, 1985, and 1988, respectively.
	
	He was on leave at Ecole Nationale Sup\'{e}rieure des T\'{e}l\'{e}communications, Paris, France, during fall 1986.  He  was a Member of Technical Staff of AT\&T Bell Laboratories from 1988 to 1991. Since 1991, he has been with The Chinese University of Hong Kong, where he is now Choh-Ming Li Professor of Information Engineering and Co-Director of Institute of Network Coding.  
	%He was a Changjiang Chair Professor at Xidian University (2009-12)
	%and an Advisory Professor at Beijing University of Post and Telecommunications
	%(2008-11).
	%He is currently an Adjunct Professor of the Institute of Interdisciplinary Information Sciences at Tsinghua University
	%and a Guest Professor at Southeast University.
	He has held visiting positions at Cornell University, Nankai University, the University of Bielefeld, the University of Copenhagen, Tokyo Institute of Technology, Munich University of Technology, and Columbia University.  He was a consultant in a project of Jet Propulsion Laboratory, Pasadena, CA, for salvaging
	the malfunctioning Galileo Spacecraft and a consultant for NEC, USA. His 25-bit synchronization marker was used onboard the Galileo Spacecraft for 
	image synchronization.
	
	His research interests include information theory and network coding. He is the author of the textbooks {\em A First Course in Information Theory} (Kluwer Academic/Plenum 2002) and its revision {\em Information Theory and Network Coding} (Springer 2008), which have been adopted by over 100 institutions around the world.  This book has also been published in Chinese (Higher Education Press 2011, translation by Ning Cai~{\em et~al.}). He also co-authored with Shenghao Yang the monograph {\em BATS Codes: Theory and Applications} (Morgan \& Claypool Publishers, 2017). In spring 2014, he gave the first MOOC on information theory that reached over 25,000 students.
	
	Dr.\ Yeung was a member of the Board of Governors of the IEEE Information Theory Society from 1999 to 2001.  He has served on the committees of a number of information theory symposiums and workshops.  He was General Chair of the First and the Fourth Workshops on Network, Coding, and Applications (NetCod 2005
	and 2008), a Technical Co-Chair for the 2006 IEEE International Symposium on Information Theory, a Technical Co-Chair
	for the 2006 IEEE Information Theory Workshop (Chengdu, China),
	and a General Co-Chair of the 2015 IEEE
	International Symposium on Information Theory.
	He currently serves as an Editor-at-Large
	of {\em Communications in Information and Systems},
	an Editor of {\em Foundation and Trends in Communications and Information
		Theory} and of {\em Foundation and Trends in Networking}, 
	and was an Associate Editor for Shannon Theory of the {\em IEEE Transactions
		on Information Theory} from 2003 to 2005.
	In 2011-12, he serves as a Distinguished Lecturer of the IEEE Information Theory Society.
	
	He was a recipient of the Croucher Foundation Senior 
	Research Fellowship for 2000/2001,
	the Best Paper Award (Communication Theory) of the 
	2004 International Conference on Communications, Circuits
	and System, 
	the 2005 IEEE Information Theory Society Paper Award, the Friedrich Wilhelm Bessel Research Award of the Alexander von Humboldt Foundation in 2007, 
	the 2016 IEEE Eric E. Sumner Award
	(``for pioneering contributions to the field of network coding"), and the 2018 ACM SIGMOBILE Test-of-Time Paper Award.
	In 2015, he was named (together with Zhen Zhang) an Outstanding Overseas Chinese Information Theorist by the China 
	Information Theory Society.
	In 2019, his team won a Gold Medal with Congratulations of the Jury at the 47th International Exhibition of Inventions of Geneva for their invention ``BATS: Enabling the Nervous System of Smart Cities." 
	He is a Fellow of the IEEE, Hong Kong Academy of Engineering Sciences, and Hong Kong Institution of Engineers.
\end{IEEEbiographynophoto}

\end{document}